\tikzset{
  cfgedge/.style   = {black, ->, >=stealth},
  cfgedgestar/.style   = {black, ->, >=stealth, shorten >=.17em,
                          to path={-- node[inner sep=0pt,at end,sloped] {${}^*$} (\tikztotarget) \tikztonodes}},
  cfgedgeplus/.style   = {black, ->, >=stealth, shorten >=.17em,
                          to path={-- node[inner sep=0pt,at end,sloped] {${}^+$} (\tikztotarget) \tikztonodes}},
  forward/.style = { blue, ->, >=angle 45},
  forwardstar/.style = {blue, ->, >=angle 45, shorten >=.17em,
             to path={-- node[inner sep=0pt,at end,sloped] {${}^*$} (\tikztotarget) \tikztonodes}},
  forwardplus/.style = {blue, ->, >=angle 45, shorten >=.17em,
             to path={-- node[inner sep=0pt,at end,sloped] {${}^+$} (\tikztotarget) \tikztonodes}},
  backward/.style = { red, densely dashed, ->, >=latex' },
  backwardstar/.style = { red, densely dashed, ->, >=latex', shorten >=.17em,
                      to path={-- node[inner sep=0pt,at end,sloped] {${}^*$} (\tikztotarget) \tikztonodes}},
}
\newcommand{\cfgarrow}{\mathbin{\tikz[baseline]\path (0,0) edge[cfgedge, yshift=0.6ex] (.9em,0);}}
\newcommand{\cfgarrowstar}{\mathbin{\tikz[baseline]\path (0,0) edge[cfgedgestar, yshift=0.6ex] (.9em,0);}}
\newcommand{\cfgarrowplus}{\mathbin{\tikz[baseline]\path (0,0) edge[cfgedgeplus, yshift=0.6ex] (.9em,0);}}
\newcommand{\forwardarrow}{\mathbin{\tikz[baseline]\path (0,0) edge[forward, yshift=0.6ex] (1em,0);}}
\newcommand{\forwardarrowstar}{\mathbin{\tikz[baseline]\path (0,0) edge[forwardstar, yshift=0.6ex] (1em,0);}}
\newcommand{\forwardarrowplus}{\mathbin{\tikz[baseline]\path (0,0) edge[forwardplus, yshift=0.6ex] (1em,0);}}
\newcommand{\subforwardarrowstar}{\mathbin{\tikz[baseline]\path (0,0) edge[forwardstar, yshift=-0.2ex] (.85em,0);}}
\newcommand{\backwardarrow}{\mathbin{\tikz[baseline]\path (0,0) edge[backward, yshift=0.6ex] (.95em,0);}}
\newcommand*{\tikzmk}[1]{\tikz[remember picture,overlay,] \node (#1) {};\ignorespaces}
\newcommand{\boxit}[2]{\tikz[remember picture,overlay]{\node[yshift=3pt,xshift=3pt,fill=#1,opacity=.15,fit={(A)($(B)+(#2\linewidth,.8\baselineskip)$)}] {};}\ignorespaces}
\colorlet{pink}{red!40}
\colorlet{cyan}{cyan!60}
\colorlet{gray}{gray!60}
\newcommand{\pref}{\prettyref}
\newlist{ecomponents}{enumerate}{1}
\setlist[ecomponents,1]{label={\bfseries C\arabic*},align=left}
\newlist{casesp}{enumerate}{3} %
\setlist[casesp]{align=left, %
                 listparindent=\parindent, %
                 parsep=\parskip, %
                 font=\normalfont\bfseries, %
                 leftmargin=0pt, %
                 labelwidth=0pt, %
                 itemindent=.4em,labelsep=.4em, %
                 partopsep=0pt, %
                 }
\setlist[casesp,1]{label=Case~\arabic*,ref=\arabic*,leftmargin=0ex}
\setlist[casesp,2]{label=Case~\thecasespi.\alph*,ref=\thecasespi.\alph*,leftmargin=2ex}
\setlist[casesp,3]{label=Case~\thecasespii.\roman*,ref=\thecasespii.\roman*,leftmargin=2ex}
\definecolor{lapislazuli}{rgb}{0.15, 0.38, 0.61}
\definecolor{light-gray}{gray}{0.9}
\definecolor{light-pink}{rgb}{0.858, 0.188, 0.478}
\definecolor{maroon}{rgb}{0.5, 0.0, 0.0}
\newcommand{\etch}{{\color{maroon}h}}
\newcommand{\Omit}[1]{}
\newcommand{\AbsDomain}{\mathcal{A}}
\newcommand{\xqed}[1]{%
 \leavevmode\unskip\penalty9999 \hbox{}\nobreak\hfill
  \quad\hbox{\ensuremath{#1}}}
\newcommand*{\qef}{\xqed{\scriptstyle{\blacksquare}}}
\newcommand{\subsubsubsection}[1]{\smallskip\noindent\textbf{\emph{#1}}\enspace}
\newcommand{\RQ}[1]{\textbf{RQ#1}}
\newcommand{\InitRule}{\textsc{Init}}
\newcommand{\NERule}{\textsc{NonExit}}
\newcommand{\ENCRule}{\textsc{CompNotStabilized}}
\newcommand{\ECRule}{\textsc{CompStabilized}}
\newcommand{\nesting}{\mathsf{N}}
\newcommand{\relation}{\mathsf{R}}
\newcommand{\widen}{\triangledown}
\newcommand{\wpo}{\mathcal{W}}
\newcommand{\hpo}{\mathcal{H}}
\newcommand{\Value}{\mathcal{X}}
\newcommand{\Map}{\mathcal{N}}
\newcommand{\components}{\mathcal{C}}
\newcommand{\maximalcomponents}{\components^{0}}
\newcommand{\forward}{scheduling constraint}
\newcommand{\backward}{stabilization constraint}
\def\llceil{\lceil\kern-3pt\lceil}
\def\rrceil{\rceil\kern-3pt\rceil}
\def\llfloor{\lfloor\kern-3pt\lfloor}
\def\rrfloor{\rfloor\kern-3pt\rfloor}
\newcommand{\postset}[1]{\llfloor {#1}\rceil}
\newcommand{\preset}[1]{\lfloor{#1}\rrceil}
\newcommand{\component}[2]{\llfloor {#1}, {#2}\rrceil}
\newcommand{\eqdef}{\buildrel \mbox{\tiny\textrm{def}} \over =}
\newcommand{\pikosname}{\textsc{Pikos}}
\newcommand{\pikos}[1]{{\textsc{Pikos\textlangle{}}$#1$\textsc{\textrangle{}}}}
\newcommand{\ikos}{\textsc{IKOS}}
\newcommand{\IntegerSet}{\mathbb{Z}}
\newcommand{\zerodisplayskips}{%
  \setlength{\abovedisplayskip}{0pt}%
  \setlength{\belowdisplayskip}{0pt}%
  \setlength{\abovedisplayshortskip}{0pt}%
  \setlength{\belowdisplayshortskip}{0pt}}
\newcommand{\totalsvc}{2701}%
\newcommand{\totaloss}{1618}%
\newcommand{\totalboth}{4319}%
\newcommand{\totalverylong}{130}%
\newcommand{\timelimit}{4 hours}%
\newcommand{\totalbench}{1017}%
\newcommand{\totalbenchsvc}{518}%
\newcommand{\totalbenchoss}{499}%
\newcommand{\maxspeed}{3.63x}%
\newcommand{\ameanspeed}{2.06x}%
\newcommand{\gmeanspeed}{1.95x}%
\newcommand{\hmeanspeed}{1.84x}%
\newcommand{\totalspeed}{2.16x}%
\newcommand{\abovetwonum}{533}%
\newcommand{\abovetwo}{52.4\%}%
\newcommand{\abovethreenum}{106}%
\newcommand{\abovethree}{10.4\%}
\newcommand{\ameanspeedl}{2.38x}%
\newcommand{\gmeanspeedl}{2.29x}
\newcommand{\hmeanspeedl}{2.18x}
\newcommand{\maxfreqai}{1.25x-1.50x}
\newcommand{\maxfreqa}{52}
\newcommand{\maxfreqbi}{1.50x-1.75x}
\newcommand{\maxfreqb}{45}
\newcommand{\maxfreqci}{3.00x-3.25x}
\newcommand{\maxfreqc}{38}
\newcommand{\maxfreqdi}{3.00x-3.25x}
\newcommand{\maxfreqd}{50}
\newcommand{\bestscaleall}{1.77x to 3.63x, 5.07x, 6.57x, 9.02x, and 10.97x}
\newcommand{\bestscale}{1.77x to 3.63x, 5.07x, and 6.57x}
\newcommand{\ascale}{1.48x, 2.06x, 2.26x, and 2.46x}
\newcommand{\gscale}{1.46x, 1.95x, 2.07x, and 2.20x}
\newcommand{\hscale}{1.44x, 1.84x, 1.88x, and 1.98x}
\newcommand{\bcoeff}{0.79}
\newcommand{\bspeed}{10.97x}
\newcommand{\coefff}{124} %
\DeclareMathOperator*{\argmin}{arg\,min}
\begin{document}

\title{Deterministic Parallel Fixpoint Computation}

\author{Sung Kook Kim}
\affiliation{
  \department{Computer Science}
  \institution{University of California, Davis}
  \city{Davis}
  \state{California}
  \postcode{95616}
  \country{U.S.A.}
}
\email{sklkim@ucdavis.edu}

\author{Arnaud J.\ Venet}
\affiliation{
  \institution{Facebook, Inc.}
  \city{Menlo Park}
  \state{California}
  \postcode{94025}
  \country{U.S.A.}
}
\email{ajv@fb.com}

\author{Aditya V.\ Thakur}
\affiliation{
  \position{Assistant Professor}
  \department{Computer Science}
  \institution{University of California, Davis}
  \city{Davis}
  \state{California}
  \postcode{95616}
  \country{U.S.A.}
}
\email{avthakur@ucdavis.edu}

\begin{abstract}
  Abstract interpretation is a general framework for expressing static program
analyses. It reduces the problem of extracting properties of a program to
computing an approximation of the least fixpoint of a system of equations. The
de facto approach for computing this approximation uses a
sequential algorithm based on weak topological order~(WTO). This paper presents
a deterministic parallel algorithm for fixpoint computation by introducing the
notion of weak partial order~(WPO). We present an algorithm for constructing a
WPO in almost-linear time. Finally, we describe $\pikosname$, our deterministic
parallel abstract interpreter, which extends the sequential abstract interpreter
$\ikos$. We evaluate the performance and scalability of $\pikosname$ on a suite
of $\totalbench$ C programs. When using 4 cores, $\pikosname$ achieves an
average speedup of \ameanspeed{} over $\ikos$, with a maximum speedup of
\maxspeed. When using 16 cores, $\pikosname$ achieves a maximum speedup of \bspeed.

\end{abstract}

\begin{CCSXML}
  <ccs2012>
  <concept>
  <concept_id>10011007.10010940.10010992.10010998.10011000</concept_id>
  <concept_desc>Software and its engineering~Automated static analysis</concept_desc>
  <concept_significance>500</concept_significance>
  </concept>
  <concept>
  <concept_id>10003752.10010124.10010138.10010143</concept_id>
  <concept_desc>Theory of computation~Program analysis</concept_desc>
  <concept_significance>500</concept_significance>
  </concept>
  </ccs2012>
\end{CCSXML}

\ccsdesc[500]{Software and its engineering~Automated static analysis}
\ccsdesc[500]{Theory of computation~Program analysis}

\keywords{Abstract interpretation, Program analysis, Concurrency}  %

\maketitle

\section{Introduction}
\label{sec:introduction}

 Program analysis is a widely adopted approach for automatically extracting
 properties of the dynamic behavior of programs
 \cite{IFM:BCLR04,brat2005precise,SAW:JJA08,SAS:DJ07,FMCAD:Balakrishnan10}.
 Program analyses are used, for instance, for program optimization, bug finding,
 and program verification. To be effective, a program analysis needs to be
 efficient, precise, and deterministic (the analysis always computes the same
 output for the same input program)~\cite{CACM:BBC10}. This paper aims to
 improve the efficiency of program analysis without sacrificing precision or
 determinism.
 
\emph{Abstract interpretation}~\cite{kn:CC77} is a general framework for
 expressing static program analyses. 
 A typical use of abstract interpretation to determine program invariants
 involves:%
 \begin{ecomponents}
 \item
 \label{item:abstractdomain}
  \emph{An abstract domain $\AbsDomain$ that captures relevant program
  properties.} Abstract domains have been developed to perform, for instance,
  numerical analysis \cite{CousotPOPL1978,MineESOP2004,MineHOSC2006,venet2012gauge,oulamara2015abstract,singh2017fast},
  heap analysis \cite{wilhelm2000shape,rinetzky2005semantics}, 
  and information flow \cite{giacobazzi2004abstract}.
 
 \item
 \label{item:equations}
  \emph{An equation system $\Value = F(\Value)$ over
  $\AbsDomain$ that captures the abstract program behavior}:
  \begin{equation}
    \Value_1 = F_1(\Value_1, \ldots, \Value_n), \quad
    \Value_2 = F_1(\Value_1, \ldots, \Value_n), \quad \ldots, \quad
    \Value_n = F_n(\Value_1, \ldots, \Value_n)
   \label{eq:equations}
   \end{equation}
  Each index $i\in [1,n]$ corresponds to a control point of the program,
  the unknowns $\Value_i$ of the system correspond to the invariants to be
  computed for these control points, and each $F_i$ is a monotone operator
  incorporating the abstract transformers and control flow of the program.
 
 \item
 \label{item:fixedpoint}
 \emph{Computing an approximation of the least fixpoint of \pref{eq:equations}.}
  The exact least solution of the system can be computed using Kleene iteration
  starting from the least element of $\AbsDomain^n$ provided $\AbsDomain$ is
  Noetherian. However, most interesting abstract domains require the 
  use of \emph{widening} to ensure termination, which may result in 
  an over-approximation of the invariants of the program. A subsequent
  \emph{narrowing} iteration tries to improve the post solution via a
  downward fixpoint iteration. In practice, abstract interpreters compute an
  \emph{approximation of the least fixpoint}. In this paper, we use
  ``fixpoint'' to refer to such an
  approximation of the least fixpoint.
 \end{ecomponents}
 
 The \emph{iteration strategy} specifies the order in which the equations in
 \pref{eq:equations} are applied during fixpoint computation and where widening
 is performed. For a given abstraction, the efficiency, precision, and determinism
 of an abstract interpreter depends on the iteration strategy. The iteration
 strategy is determined by the dependencies between the individual equations in
 \pref{eq:equations}. If this dependency graph is acyclic, then the optimal
 iteration strategy is any topological order of the vertices in the graph. This
 is not true when the dependency graph contains cycles. Furthermore, each cycle
 in the dependency graph needs to be cut by at least one widening point.

 Since its publication, Bourdoncle's
 algorithm~\cite{bourdoncle1993efficient} has become the de facto approach for
 computing an efficient iteration strategy for abstract interpretation.
 Bourdoncle's algorithm determines the iteration strategy from a \emph{weak
 topological order (WTO)} of the vertices in the dependency graph corresponding to
 the equation system. However, there are certain disadvantages to Bourdoncle's algorithm:
 (i)~the iteration strategy computed by Bourdoncle's algorithm is inherently
 sequential: WTO gives a total order of the vertices in the dependency
 graph; (ii)~computing WTO using Bourdoncle's algorithm has a worst-case
 cubic time complexity; (iii)~the mutually-recursive nature of Bourdoncle's
 algorithm makes it difficult to understand (even for seasoned practitioners of
 abstract interpretation); and (iv)~applying Bourdoncle's algorithm, as is, to
 deep dependency graphs can result in a stack overflow in
 practice.\footnote{\emph{Workaround that prevents stack overflows in WTO computation}
 \url{https://github.com/facebook/redex/commit/6bbf8a5ddb}.}\textsuperscript{,}\footnote{\emph{Possible
 stack overflow while computing WTO of a large CFG}
 \url{https://github.com/seahorn/crab/issues/18}.}

 This paper addresses the above disadvantages of Bourdoncle's algorithm by
 presenting a concurrent iteration strategy for fixpoint computation in an
 abstract interpreter~(\pref{sec:fixpoint}). This concurrent fixpoint
 computation can be efficiently executed on modern multi-core hardware. The
 algorithm for computing our iteration strategy has a worst-case almost-linear
 time complexity, and lends itself to a simple iterative
 implementation~(\pref{sec:Algorithm}). The resulting parallel abstract
 interpreter, however, remains deterministic: for the same program, all possible
 executions of the parallel fixpoint computation give the same result. In fact,
 the fixpoint computed by our parallel algorithm is the same as that computed by
 Bourdoncle's sequential algorithm~(\pref{sec:wto}).

 To determine the concurrent iteration strategy, this paper introduces the
 notion of a \emph{weak partial order (WPO)} for the dependency graph of the
 equation system~(\pref{sec:AxiomaticWPO}). WPO generalizes the notion of WTO: a
 WTO is a linear extension of a WPO~(\pref{sec:wto}). Consequently, the
 almost-linear time algorithm for WPO can also be used to compute WTO. 
 The algorithm for WPO construction handles dependency graphs that are 
 irreducible \cite{hecht1972flow,tarjan1973testing}.
 The key
 insight behind our approach is to adapt algorithms for computing \emph{loop
 nesting forests}~\cite{ramalingamTOPLAS1999,ramalingamTOPLAS2002} to the
 problem of computing a concurrent iteration strategy for abstract
 interpretation.

We have implemented our concurrent fixpoint iteration strategy in a tool called
$\pikosname$~(\pref{sec:implementation}).
Using a suite of $\totalbench$ C
programs, we compare the performance of $\pikosname$ 
against the
state-of-the-art abstract interpreter $\ikos$~\cite{ikos2014}, which
uses Bourdoncle's algorithm (\pref{sec:evaluation}).
When using 4 cores, $\pikosname$ achieves an average speedup of \ameanspeed{}
over $\ikos$, with a maximum speedup of \maxspeed. 
We see that $\pikosname$ exhibits a larger speedup when 
analyzing programs that took longer to analyze using $\ikos$.
$\pikosname$ achieved an average speedup of 1.73x
on programs for which $\ikos$ took less than 16~seconds, 
while $\pikosname$ achieved an average speedup of 2.38x on programs 
for which $\ikos$ took greater than 508~seconds.
The scalability of
$\pikosname$ depends on the structure of the program being analyzed.
When using 16 cores,
$\pikosname$ achieves a maximum speedup of \bspeed.

The contributions of the paper are as follows:
\begin{itemize}
   \item We introduce the notion of a weak partial order (WPO) for a directed
   graph (\pref{sec:AxiomaticWPO}), and show how this generalizes the existing
   notion of weak topological order (WTO)~(\pref{sec:wto}).
   \item We present a concurrent algorithm for computing the fixpoint of a set
   of equations (\pref{sec:fixpoint}). 
   \item We present an almost-linear time algorithm for WPO and
   WTO construction
   (\pref{sec:Algorithm}).
   \item We describe our deterministic parallel abstract interpreter $\pikosname$ (\pref{sec:implementation}),
   and evaluate its performance on a suite of C programs (\pref{sec:evaluation}).
\end{itemize}
\pref{sec:overview} presents an overview of the technique;
\pref{sec:preliminaries} presents mathematical preliminaries;
\pref{sec:related} describes related work; \pref{sec:conclusion}
concludes.

\section{Overview}
\label{sec:overview}

Abstract interpretation is a general framework that captures most existing
approaches for static program analyses and reduces extracting properties of
programs to approximating their semantics \cite{kn:CC77,CousotGR:POPL2019}.
Consequently, this section is not meant to capture all possible approaches to
implementing abstract interpretation or describe all the complex optimizations
involved in a modern implementation  of an abstract interpreter. Instead it is
only meant to set the appropriate context for the rest of the paper, and to
capture the relevant high-level structure of abstract-interpretation
implementations such as $\ikos$ \cite{ikos2014}.

\subsubsubsection{Fixpoint equations.}
Consider the simple program \texttt{P} represented by its control flow graph
(CFG) in \pref{fig:ExampleProgram}. 
We will illustrate how an abstract
interpreter would compute the set of values that variable \texttt{x} 
might contain at each program point \texttt{i} in \texttt{P}.
In this example, we will use the standard integer interval domain \cite{ISP:CC76,kn:CC77}
represented by the complete lattice $\langle Int, \sqsubseteq, \bot,\top, \sqcup, \sqcap \rangle$
with $Int \eqdef \{\bot\}
                  \cup \{ [l, u] \mid l,u \in \IntegerSet \land l \leq u \} 
                  \cup \{ [-\infty, u] \mid u \in \IntegerSet \} 
                  \cup \{ [l, \infty] \mid l \in \IntegerSet \} 
                  \cup \{[-\infty, \infty]\} $. 
The partial order  $\sqsubseteq$ on $Int$ is interval inclusion with
the empty interval $\bot = \emptyset$ encoded as $[\infty, -\infty]$ and 
$\top = [-\infty, \infty]$.

\pref{fig:ExampleEquationSystem} shows the corresponding equation system $\Value
= F(\Value)$, where $\Value = (\Value_0, \Value_1, \ldots, \Value_8)$. Each
equation in this equation system is of the form $\Value_i = F_i(\Value_0,
\Value_1, \ldots, \Value_8)$, where the variable $\Value_i \in Int$ represents
the interval value at program point $i$ in \texttt{P} and $F_i$ is monotone. The
operator~$+$ represents the (standard) addition operator over $Int$. As is
common (but not necessary), the dependencies among the equations reflect the CFG
of program \texttt{P}.

The exact least solution $\Value^{\textrm{lfp}}$ of the equation system $\Value
= F(\Value)$ would give the required set of values for variable \texttt{x} at
program point $i$. Let $\Value^0 = (\bot, \bot, \ldots, \bot)$ and $\Value^{i+1}
= F(\Value^i), i \geq 0$ represent the standard Kleene iterates, which converge
to $\Value^{\textrm{lfp}}$. 

\subsubsubsection{Chaotic iteration.}
Instead of applying the function $F$ during Kleene
iteration, one can use \emph{chaotic iterations}
\cite{cousot1977asynchronous,kn:CC77} and apply the individual equations $F_i$.
The order in which the individual equations are applied is determined by the
\emph{chaotic iteration strategy}.

\subsubsubsection{Widening.}
For non-Noetherian abstract domains, such as the interval abstract domain,
termination of this Kleene iteration sequence requires the use of a
\emph{widening operator}~($\widen$)~\cite{kn:CC77,Cousot:VMCAI2015}. A set of
\emph{widening points} $W$ is chosen and the equation for $i \in W$ is replaced
by $\Value_i = \Value_i \widen F_i(\Value_0, \ldots, \Value_n)$. An
\emph{admissible set of widening points} ``cuts'' each cycle in the dependency
graph of the equation system by the use of a widening operator to ensure
termination~\cite{kn:CC77}. Finding a minimal admissible set of widening points
is an NP-complete problem~\cite{garey2002computers}. A~possible widening
operator for the interval abstract domain is defined by:
$\bot \widen I = I \widen \bot = I \in Int$ and $[i, j] \widen [k, l] =
[\textrm{if } k <i \textrm{ then } -\infty
\textrm{ else } i, \textrm{if } l > j \textrm{ then } \infty \textrm{ else }
j]$. This widening operator is non-monotone. The application of a widening
operator may result in a crude over-approximation of the least fixpoint; more
sophisticated widening operators as well as techniques such as narrowing can be
used to ensure
precision~\cite{kn:CC77,DBLP:conf/cav/GopanR06,amatoSAS2013,amatoSCP2016,DBLP:journals/spe/KimHOY16}. %
Although the discussion of our fixpoint algorithm uses a simple widening
strategy~(\pref{sec:fixpoint}), our implementation incorporates more
sophisticated widening and narrowing strategies implemented in
$\ikos$~(\pref{sec:implementation}).

\begin{figure}
  \centering
  \begin{subfigure}[b]{.4\textwidth} 
    \centering
    \begin{tikzpicture}[auto,node distance=.9cm,]
      \tikzstyle{every node} = [rectangle, draw, inner sep=1pt,minimum size=2.5ex]
      \node [minimum width=6ex] (entry) at (0,0) {$0\colon\quad$};
      \node [below of=entry] (1) {$1\colon\texttt{x=0}$};
      \node [below of=1, minimum width=6ex] (2) {$2\colon\quad$};
      \node [below of=2] (3) {$3\colon \texttt{x=x+1}$};
      \node [right of=1, node distance=1.8cm] (8) {$8\colon\texttt{x=1}$};
      \node [right of=2, node distance=1.8cm, minimum width=6ex] (4) {$4\colon\quad$};
      \node [below of=4] (5) {$5\colon \texttt{x=x+1}$};
      \node [right of=5, node distance=1.5cm] (6) {$6\colon\texttt{x=0}$};
      \node [right of=4, node distance=1.5cm, minimum width=6ex] (7) {$7\colon\quad$};

      \path (entry) edge[cfgedge] (1);
      \path (entry) edge[cfgedge] (8);
      \path (1) edge[cfgedge] (2);
      \path (2) edge[cfgedge] (3);
      \path (2) edge[cfgedge] (4);
      \path (8) edge[cfgedge] (4);
      \path (4) edge[cfgedge] (5);
      \path (4) edge[cfgedge] (6);
      \path (4) edge[cfgedge] (7);
      \path (3) edge[cfgedge, bend left=45] (2);
      \path (5) edge[cfgedge, bend left=45] (4);
      \path (6) edge[cfgedge, bend right=25] (4);%
    \end{tikzpicture}%
    \caption{}
    \label{fig:ExampleProgram}
  \end{subfigure}
  \begin{subfigure}[b]{.28\textwidth} 
  {\small %
  \begingroup %
  \zerodisplayskips
  \begin{align*}
    \Value_0 &= \top \\
    \Value_1 &= [0,0] \\
    \Value_2 &= \Value_1 \sqcup \Value_3 \\
    \Value_3 &= \Value_2 + [1,1] \\
    \Value_4 &= \Value_2 \sqcup \Value_8 \sqcup 
              \Value_5 \sqcup \Value_6 \\
    \Value_5 &= \Value_4 + [1,1]\\
    \Value_6 &= [0,0] \\
    \Value_7 &= \Value_4 \\
    \Value_8 &= [1,1] %
  \end{align*}%
  \endgroup
  }%
  \caption{}
  \label{fig:ExampleEquationSystem}
  \end{subfigure}
  \begin{subfigure}[b]{.3\textwidth} 
    {\small %
    \begingroup %
    \zerodisplayskips
    \begin{align*}
      \Value_0 &= \top \\
      \Value_1 &= [0,0] \\
      \Value_2 &= \Value_2 \widen (\Value_1 \sqcup \Value_3) \\
      \Value_3 &= \Value_2 + [1,1] \\
      \Value_4 &= \Value_4 \widen (\Value_2 \sqcup \Value_8 \sqcup 
                \Value_5 \sqcup \Value_6) \\
      \Value_5 &= \Value_4 + [1,1]\\
      \Value_6 &= [0,0] \\
      \Value_7 &= \Value_4 \\
      \Value_8 &= [1,1] %
    \end{align*}%
    \endgroup
    }%
    \caption{}
    \label{fig:ExampleEquationSystemWithWidening}
  \end{subfigure}
  \vspace{-2ex}
  \caption{\subref{fig:ExampleProgram}~A simple program \texttt{P} 
  that updates \texttt{x}; \subref{fig:ExampleEquationSystem}~Corresponding equation system
  for interval domain; 
  \subref{fig:ExampleEquationSystemWithWidening}~Corresponding equation system with 
  vertices $2$ and $4$ as widening points.}
\end{figure}

\subsubsubsection{Bourdoncle's approach.} 
\citet{bourdoncle1993efficient} 
introduces the notion of \emph{hierarchical total order (HTO)} of a set and
\emph{weak topological order (WTO)} of a directed graph~(see \pref{sec:wto}). An
admissible set of widening points as well as a chaotic iteration strategy,
called the \emph{recursive strategy}, can be computed using a WTO of the
dependency graph of the equation system.
A WTO for the equation system in
\pref{fig:ExampleEquationSystem} is $T \eqdef 0\ 8\ 1\ (2\ 3)\ (4\ 5\ 6)\ 7$. The set of
elements between two matching parentheses are called a \emph{component of the WTO},
and the first element of a component is called the \emph{head of the component}.
Notice that components are non-trivial strongly connected components (``loops'')
in the directed graph of \pref{fig:ExampleProgram}. 
\citet{bourdoncle1993efficient} proves that the set of component heads is an
admissible set of widening points. For \pref{fig:ExampleEquationSystem}, the set
of heads $\{2, 4\}$ is an admissible set of widening points.
\pref{fig:ExampleEquationSystemWithWidening} shows the corresponding equation
system that uses widening. 

The iteration strategy generated using WTO $T$ is $S_1 \eqdef 0\ 8\ 1\ [2\ 3]^*\
[4\ 5\ 6]^*\ 7$, where occurrence of~$i$ in the sequence represents applying the
equation for $\Value_i$, and $[\ldots]^*$ is the ``iterate until stabilization''
operator. A component is \emph{stabilized} if iterating over its elements 
does not change their values. The component heads $2$ and $4$ are chosen as
 widening points. The iteration sequence $S_1$ should be interpreted as
``apply equation for $\Value_0$, then apply the equation for $\Value_8$, then
apply the equation for $\Value_1$, repeatedly apply equations for $\Value_2$ and
$\Value_3$ until stabilization'' and so on. Furthermore,
\citet{bourdoncle1993efficient} showed that stabilization of a component can be
detected by the stabilization of its head. For instance, stabilization of
component $\{2, 3\}$ can be detected by the stabilization of its head $2$. This
property minimizes the number of (potentially expensive) comparisons between
abstract values during fixpoint computation. For the equation system of
\pref{fig:ExampleEquationSystemWithWidening}, the use of Bourdoncle's recursive
iteration strategy would give us $\Value_7^{\textrm{fp}} = [0, \infty]$.

\subsubsubsection{Asynchronous iterations.}
The iteration strategy produced by Bourdoncle's approach is necessarily
sequential, because the iteration sequence is generated from a total order. One
could alternatively implement a parallel fixpoint computation using
\emph{asynchronous iterations} \cite{cousot1977asynchronous}: each processor
$i$ computes the new value of $\Value_i$  accessing the shared state consisting
of $\Value$ using the appropriate equation from
\pref{fig:ExampleEquationSystemWithWidening}. However, the parallel fixpoint
computation using asynchronous iterations is non-deterministic; that is, the
fixpoint computed might differ based on the order in which the equations are applied
(as noted by \citet{monniaux2005parallel}). The reason for this non-determinism is due to the
non-monotonicity of widening. For example, if the iteration sequence
$S_2 \eqdef 0\ 8\ [4\ 5\ 6]^*\ 1\ [ 2\ 3 ]^*\ [4\ 5\ 6]^*\ 7$ is used to compute
the fixpoint for the equations in \pref{fig:ExampleEquationSystemWithWidening},
then $\Value_7^{\textrm{fp}} = [-\infty, \infty]$, which differs from the
value computed using iteration sequence~$S_1$. 

\subsubsubsection{Our deterministic parallel fixpoint computation.}
In this paper, we present a parallel fixpoint computation that is
\emph{deterministic}, and, in fact, gives the \emph{same result} as Bourdoncle's
sequential fixpoint computation~(\pref{sec:wto}). Our approach generalizes
Bourdoncle's hierarchical total order and weak topological order to
\emph{hierarchical partial order~(HPO)} and \emph{weak partial
order~(WPO)}~(\pref{sec:wpo}). The iteration strategy is then based on the WPO of
the dependency graph of the equation system. The use of partial orders, instead
of total orders, enables us to generate an iteration strategy that is
concurrent~(\pref{sec:fixpoint}). For the equation system in
\pref{fig:ExampleEquationSystemWithWidening}, our approach would produce the
iteration sequence represented as $S_3 \eqdef  0\ ((1\ [2\ 3]^*) \mid 8)\ [4\
(5\mid 6)]^*\ 7$, where $\mid$ represents concurrent execution. Thus, the
iteration (sub)sequences $1\ [2\ 3]^*$ and $8$ can be computed in parallel, as
well as the subsequences $5$ and $6$. However, unlike iteration sequence $S_2$,
the value for $\Value_4$ in $S_3$ will not be computed until the component $\{
2, 3\}$ stabilizes. Intuitively, determinism is achieved by ensuring that no
element outside the component will read the value of an element in the component
until the component stabilizes. In our algorithm, the value of $2$ is read by
elements outside of the component $\{2, 3\}$ only after the component
stabilizes. Similarly, the value of $4$ will be read by $7$ only after the
component $\{4, 5, 6\}$ stabilizes. Parallel fixpoint computation based on a WPO
results in the same fixpoint as the sequential computation based on a
WTO~(\pref{sec:wto}).

\section{Mathematical Preliminaries}
\label{sec:preliminaries}

A \emph{binary relation} $\relation$ on set $S$ is a subset of the Cartesian product of
$S$ and $S$; that is, $\relation \subseteq S \times S$. Given $S' \subseteq S$,
let $\relation_{\downharpoonright S'} = \relation \cap (S' \times S')$.
A relation
$\relation$ on set $S$ is said to be \emph{one-to-one} iff for all $w,x,y, z \in
S$, $(x, z) \in \relation$ and $(y,z) \in \relation$ implies $x = y$, and $(w,x)
\in \relation$ and $(w, y) \in \relation$ implies $x = y$.
A \emph{transitive closure} of a binary relation $\relation$, denoted by
$\relation^+$, is the smallest transitive binary relation that
contains $\relation$.
A \emph{reflexive transitive closure} of a binary relation $\relation$, denoted
by $\relation^*$, is the smallest reflexive transitive binary
relation that contains $\relation$.

A \emph{preorder} $(S, \relation)$ is a set $S$ and a binary relation
$\relation$ over $S$ that is reflexive and transitive. A \emph{partial order}
$(S, \relation)$ is a preorder where $\relation$ is antisymmetric. Two elements
$u, v \in S$ are \emph{comparable} in a partial order $(S, \relation)$ if
$(u,v)\in \relation$ or $(v, u) \in \relation$. A \emph{linear (total) order} or
\emph{chain} is a partial order in which every pair of its elements are
comparable.
A partial order $(S, \relation')$ is an \emph{extension} of a partial order $(S,
\relation)$ if $\relation \subseteq \relation'$; an extension that is a linear
order is called a \emph{linear extension}. There exists a linear extension for
every partial order \cite{szpilrajn1930extension}.

Given a partial order $(S, \relation)$, define $\postset{x}_{\relation} \eqdef
\{ y\in S \mid (x, y) \in R \}$, and $\preset{x}_{\relation} \eqdef \{ v \in S
\mid (v, x) \in R \}$, and $\component{x}{y}_{\relation} \eqdef
\postset{x}_{\relation} \cap \preset{y}_{\relation}$.
A partial order $(S, \relation)$ is a \emph{forest} if for all $x \in S$,
$(\preset{x}_{\relation}, R)$ is a chain.

\begin{example}
  \label{exa:ExamplePartialOrder}
  Let $(Y, \mathsf{T})$ be a partial order with 
  $Y = \{y_1, y_2, y_3, y_4 \}$ 
  and $\mathsf{T} = \{ (y_1, y_2), (y_2, y_3),$ $(y_2, y_4) \}^*$. 
  Let $Y' = \{ y_1, y_2 \} \subseteq Y$, then $\mathsf{T}_{\downharpoonright Y'} = \{(y_1, y_1), (y_1, y_2), (y_2, y_2) \}$.
  \begin{align*}
    \postset{y_1}_{\mathsf{T}} &= \{y_1, y_2, y_3, y_4\}&  \postset{y_2}_{\mathsf{T}} &= \{y_2, y_3, y_4\} & \postset{y_3}_{\mathsf{T}} &=  \{y_3\}       & \postset{y_4}_{\mathsf{T}} &= \{ y_4\} \\
    \preset{y_1}_{\mathsf{T}}  &= \{y_1 \}             &  \preset{y_2}_{\mathsf{T}} &= \{y_1, y_2\}       & \preset{y_3}_{\mathsf{T}} &= \{y_1, y_2, y_3\} & \preset{y_4}_{\mathsf{T}} &= \{y_1, y_2, y_4\} 
  \end{align*}
  We see that the partial order $(Y, \mathsf{T})$ is a forest because for all $y \in Y$, $(\preset{y}_{\mathsf{T}}, \mathsf{T})$ is a chain.
  \begin{align*}
    \component{y_1}{y_1}_{\mathsf{T}} &= \{y_1\}       &  \component{y_1}{y_2}_{\mathsf{T}} &= \{y_1, y_2\} & \component{y_1}{y_3}_{\mathsf{T}} &= \{y_1, y_2, y_3\} & \component{y_1}{y_4}_{\mathsf{T}} &= \{y_1, y_2, y_4\} \\
    \component{y_4}{y_1}_{\mathsf{T}} &= \emptyset &   \component{y_4}{y_2}_{\mathsf{T}} &= \emptyset    &  \component{y_4}{y_3}_{\mathsf{T}} &= \emptyset  & \component{y_4}{y_4}_{\mathsf{T}} &= \{y_4\}  
  \end{align*} 
  \qef   
  \vspace{-1ex}
\end{example}
A \emph{directed graph} $G(V, \cfgarrow)$ is defined by a set of vertices $V$
and a binary relation $\cfgarrow$ over $V$. 
The reachability among vertices is captured by the preorder~$\cfgarrow^*$:
there is a path from
vertex $u$ to vertex $v$ in $G$ iff $u\cfgarrow^* v$. $G$ is a \emph{directed acyclic graph (DAG)}
iff $(V, \cfgarrow^*)$ is a partial order. A topological order of a DAG $G$
corresponds to a linear extension of the partial order $(V, \cfgarrow^*)$. We use
$G_{\downharpoonright V'}$ to denote the subgraph $(V \cap V',
\cfgarrow_{\downharpoonright V'})$.
Given a directed graph $G(V, \cfgarrow)$, a \emph{depth-first numbering (DFN)}
is the order in which vertices are discovered during a depth-first search (DFS)
of $G$. A \emph{post depth-first numbering (post-DFN)} is the order in
which vertices are finished during a DFS of $G$. A \emph{depth-first tree (DFT)}
of $G$ is a tree formed by the edges used to discover vertices
during a DFS. Given a DFT of $G$, an edge $u \cfgarrow v$ is called (i)~a \emph{tree edge} 
if $v$ is a child of $u$ in the DFT; (ii)~a \emph{back edge} if $v$ is an ancestor of $u$ 
in the DFT; (iii)~a \emph{forward edge} if it is not a tree edge and $v$ is a descendant of $u$ 
in the DFT; and (iv)~a \emph{cross edge} otherwise \cite{CLRS}.
In general, a directed graph might contain multiple connected components and 
a DFS yields a \emph{depth-first forest} (DFF).
The \emph{lowest common ancestor (LCA)} of vertices $u$ and $v$ in a rooted tree $T$ is a vertex 
that is an ancestor of both $u$ and $v$ and that has the greatest depth in $T$ \cite{tarjan1979olca}.
It is unique for all pairs of vertices.

A \emph{strongly connected component (SCC)} of a directed graph $G(V, \cfgarrow)$
is a subgraph of $G$ such that $u\cfgarrow^* v$ for all $u, v$ in the subgraph.
An SCC is trivial if it only consists of a single vertex without any edges.
A \emph{feedback edge set} $B$ of a graph $G(V, \cfgarrow)$ is a subset of $\cfgarrow$ such
that $(V, (\cfgarrow \setminus B)^*)$ is a partial order; that is, 
the directed graph $G(V, \cfgarrow \setminus B)$ is a DAG. The problem
of finding the minimum feedback edge set is NP-complete \cite{Karp1972}.

\begin{example}
  Let $G(V, \cfgarrow)$ be directed graph shown in \pref{fig:ExampleProgram}.
  The ids used to label the vertices $V$ of $G$ correspond to a depth-first
  numbering (DFN) of the directed graph $G$. The following lists the vertices in
  increasing post-DFN numbering: $3, 5, 6, 7, 4 ,2,1, 8, 0$. Edges $(3,2)$,
  $(5,4)$, and $(6,4)$ are back edges for the DFF that is assumed by the DFN,
  edge $(8,4)$ is a cross edge, and the rest are tree edges. The lowest common
  ancestor (LCA) of $3$ and $7$ in this DFF is $2$. The subgraphs induced by the
  vertex sets $\{2, 3\}$, $\{4, 5\}$, $\{4, 6\}$, and $\{4, 5, 6\}$ are all
  non-trivial SCCs. The minimum feedback edge set of $G$ is $F = \{(3,2), (5,4),
  (6,4)\}$. We see that the graph $G(V, \cfgarrow \setminus F)$ is a DAG.
  \qef
\end{example}

\section{Axiomatic Characterization of Weak Partial Order}
\label{sec:AxiomaticWPO}
\label{sec:wpo}

This section introduces the notion of Weak Partial Order (WPO), presents its
axiomatic characterization, and proves relevant properties. A constructive
characterization is deferred to~\pref{sec:Algorithm}. 
The notion of WPO is built
upon the notion of a hierarchical partial order, which we define first.

A \emph{Hierarchical Partial Order (HPO)} is a partial order $(S, \preceq)$
overlaid with a nesting relation $\nesting \subseteq S \times S$ that structures
the elements of $S$ into well-nested hierarchical \emph{components}. As
we will see in \pref{sec:fixpoint}, the elements in a component are
iterated over until stabilization in a fixpoint iteration strategy, and the
partial order enables concurrent execution.

\newpage
\begin{definition}
  \label{def:HPO}
  A \emph{hierarchical partial order} $\hpo$ is a 3-tuple $(S, \preceq,
  \nesting)$ such that:
  \begin{enumerate}[label={H\arabic*.}, ref={H\arabic*}]
    \item \label{it:Hpartial}
      $(S, \preceq)$ is a partial order.

    \item \label{it:Hoto}
      $\nesting \subseteq S \times S$ is one-to-one.

    \item \label{it:Hdir}
      $(x, h) \in \nesting$ implies $h \prec x$.
    
    \item \label{it:Hnest} 
      Partial order $(\components_{\hpo}, \subseteq)$  is a forest, where
      $\components_\hpo \eqdef \big\{\component{h}{x}_{\preceq} \mid (x,h) \in
      \nesting \big\}$ is the set of components.

    \item \label{it:Hexit}
      For all $h, x, u, v \in S$, 
      $ h \preceq u \preceq x$ and $(x, h) \in \nesting$ and $u \preceq v$ 
      implies either $x \prec v$ or $v \preceq x$.
    \qef
  \end{enumerate} 
\end{definition}

For each $(x,h) \in \nesting$, the set $\component{h}{x}_{\preceq}
\eqdef \{u \in S \mid h \preceq u \preceq x\}$ defines a \emph{component} of the
HPO, with $x$ and $h$ referred to as the \emph{exit} and \emph{head} of the
component. A component can be identified using either its head or its exit due
to condition \ref{it:Hoto}; we use $C_h$ or $C_x$ to denote a component
with head $h$ and exit $x$.
 Condition \ref{it:Hdir} states that the nesting relation $\nesting$ is in the opposite 
 direction of the partial order $\preceq$. The reason for this convention will be clearer when 
 we introduce the notion of WPO (\pref{def:WPO}), where we show that the nesting relation $\nesting$ 
 has a connection to the feedback edge set of the directed graph. 
Condition \ref{it:Hnest} implies that the set of components $\components_\hpo$ is
well-nested; that is, two components should be either mutually disjoint or
one must be a subset of the other. 

Condition \ref{it:Hexit} states that if an element $v$ depends upon an element
$u$ in a component $C_x$, then either $v$ depends on the exit $x$ or $v$ is in
the component $C_x$. Recall that $(x, h) \in \nesting$ and $h \preceq u \preceq
x$ implies $u \in C_x$ by definition. Furthermore, $v \preceq x$ and $u\preceq v$
implies $v \in C_x$.  Condition \ref{it:Hexit} ensures determinism of the
concurrent iteration strategy (\pref{sec:fixpoint}); this
condition ensures that the value of $u$ does not ``leak'' from $C_x$ during
fixpoint computation until the component $C_x$ stabilizes.

\begin{example}
  Consider the partial order $(Y, \mathsf{T})$ defined in
  \pref{exa:ExamplePartialOrder}. 
  Let $\nesting_1 = \{(y_3, y_1), (y_4,y_2) \}$. $(Y, \mathsf{T}, \nesting_1)$
  violates condition \ref{it:Hnest}. In particular, the components $C_{y_3} =
  C_{y_1} =  \{y_1, y_2, y_3\}$ and $C_{y_4} = C_{y_2} = \{ y_2, y_4\}$ are
  neither disjoint nor is one a subset of the other. Thus, $(Y, \mathsf{T},
  \nesting_1)$ is \emph{not} an HPO.
  
  Let $\nesting_2 = \{(y_3, y_1)\}$. $(Y, \mathsf{T}, \nesting_2)$ violates
  condition \ref{it:Hexit}. In particular, $y_2 \in C_{y_3}$ and $(y_2, y_4)\in
  \mathsf{T}$, but we do not have $y_3 \prec y_4$ or $y_4 \preceq y_3$.  Thus,
  $(Y, \mathsf{T}, \nesting_2)$ is \emph{not} an HPO.

  Let $\nesting_3 = \{(y_2, y_1)\}$. $(Y, \mathsf{T}, \nesting_3)$ is an HPO satisfying 
  all conditions \ref{it:Hpartial}--\ref{it:Hexit}.
  \qef
\end{example}

Building upon the notion of an HPO, we now define a \emph{Weak Partial Order
(WPO)} for a directed graph $G(V, \cfgarrow)$. In the context of fixpoint
computation, $G$ represents the dependency graph of the fixpoint equation
system. To find an effective iteration strategy, the cyclic dependencies in $G$
need to be broken. In effect, a WPO partitions the preorder $\cfgarrowstar$ into
a partial order $\forwardarrowstar$ and an edge set defined using of a nesting
relation $\backwardarrow$.  

\begin{definition}
  \label{def:WPO}
  A \emph{weak partial order} $\wpo$ for a directed graph $G(V, \cfgarrow)$
  is a 4-tuple $(V, X, \forwardarrow, \backwardarrow)$ 
  such that:
  \begin{enumerate}[label={W\arabic*.}, ref={W\arabic*}]
    \item \label{it:Wx} \label{it:VcapX}
      $V \cap X = \emptyset$.
    \item \label{it:Wred}
      $\backwardarrow \subseteq X \times V$, and for all $x \in X$, there exists $v
      \in V$ such that $x \backwardarrow v$.
    \item \label{it:Wblue} \label{it:forward}
      $\forwardarrow \subseteq (V \cup X) \times (V \cup X)$.
    \item \label{it:Whpo} 
      $\hpo (V \cup X, \forwardarrowstar, \backwardarrow)$ is a hierarchical partial
      order (HPO).
    \item \label{it:Wbd} 
      For all $u \cfgarrow v$, either 
      (i)~$u \forwardarrowplus v$, or
      (ii)~$u \in \component{v}{x}_{\subforwardarrowstar}$ and
      $x\backwardarrow v$ for some $x \in X$. \qef
  \end{enumerate}
\end{definition}

Condition \ref{it:Whpo} states that $\hpo (V \cup X, \forwardarrowstar,
\backwardarrow)$  is an HPO. Consequently, $(V \cup X, \forwardarrowstar)$ is a
partial order and $\backwardarrow$ plays the role of the nesting relation
in~\pref{def:HPO}. We refer to the relation $\forwardarrow$ in WPO $\wpo$ as the
\emph{\forward{}s}, the relation $\backwardarrow$ as \emph{\backward{}s}, and
the set $X$ as the \emph{exits}. Furthermore, the notion of components
$\components_\hpo$ of an HPO $\hpo$ as defined in \pref{def:HPO} can be lifted
to \emph{components of WPO} $\components_\wpo \eqdef
\big\{\component{h}{x}_{\subforwardarrowstar} \mid x \backwardarrow h \big\}$.
Condition \ref{it:Whpo} ensures that the concurrent iteration strategy for the
WPO is deterministic (\pref{sec:fixpoint}).
Condition~\ref{it:Wx} states that exits are always new.
Condition~\ref{it:Wred} states that $X$ does not contain any unnecessary elements. 

Condition \ref{it:Wbd} connects the relation $\cfgarrow$ of the directed graph
$G$ with relations $\forwardarrow$ and $\backwardarrow$ used in the HPO $\hpo(V \cup X,
\forwardarrowstar, \backwardarrow)$
in condition \ref{it:Whpo}. 
Condition~\ref{it:Wbd} ensures that all dependencies
$u \cfgarrow v$ in $G$ are captured by the HPO $\hpo$ either via a relation in
the partial order $u \forwardarrowplus v$ or indirectly via the component
corresponding to $x\backwardarrow v$, as formalized by the following 
theorem:

\begin{theorem}
  \label{thm:WPODependencies}
  For graph $G(V, \cfgarrow)$ and its WPO $\wpo(V, X,
  \forwardarrow, \backwardarrow)$,
   $\cfgarrowstar \subseteq (\forwardarrow \cup
  \backwardarrow)^*$.
\end{theorem}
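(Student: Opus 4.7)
The plan is to reduce the global claim to a statement about a single edge and then chain. Specifically, I will first prove the one-step lemma:
\[
  \text{for every } u \cfgarrow v, \quad u\, (\forwardarrow \cup \backwardarrow)^*\, v,
\]
and then obtain \pref{thm:WPODependencies} by induction on the length of a $\cfgarrow$-path, using reflexivity for the base case $u = v$ and transitivity of $(\forwardarrow \cup \backwardarrow)^*$ for the inductive step.

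For the one-step lemma I will invoke condition \ref{it:Wbd}, which gives two cases. In case (i), $u \forwardarrowplus v$, so $u$ reaches $v$ using only $\forwardarrow$-edges, and hence $u\,(\forwardarrow \cup \backwardarrow)^*\, v$ trivially. In case (ii), there exists $x \in X$ with $x \backwardarrow v$ and $u \in \component{v}{x}_{\subforwardarrowstar}$; unfolding the definition of $\component{\cdot}{\cdot}_{\subforwardarrowstar}$ yields $v \forwardarrowstar u$ and, crucially for us, $u \forwardarrowstar x$. Concatenating the $\forwardarrow$-path from $u$ to $x$ with the single $\backwardarrow$-edge $x \backwardarrow v$ produces a path from $u$ to $v$ in $(\forwardarrow \cup \backwardarrow)^*$, as required.

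The inductive step is then immediate: if $u \cfgarrow w$ and $w \cfgarrowstar v$, the one-step lemma gives $u\,(\forwardarrow \cup \backwardarrow)^*\, w$ and the inductive hypothesis gives $w\,(\forwardarrow \cup \backwardarrow)^*\, v$, so transitivity closes the argument.

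I do not expect any genuine obstacle here; the content is entirely bookkeeping around the definitions, and the only subtle point is remembering to extract both inclusions $v \forwardarrowstar u$ and $u \forwardarrowstar x$ from membership in $\component{v}{x}_{\subforwardarrowstar}$ so that the $\forwardarrowstar$-segment is oriented correctly to prepend to the $\backwardarrow$-edge. Conditions \ref{it:Whpo}, \ref{it:Wx}, and \ref{it:Wred} play no role in this proof; all the work is done by \ref{it:Wbd} together with the reflexive-transitive-closure structure of $\forwardarrowstar$.
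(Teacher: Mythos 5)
Your proposal is correct and follows essentially the same route as the paper's proof: both reduce to a single application of condition \ref{it:Wbd} per edge, and in case (ii) extract $u \forwardarrowstar x$ from membership in $\component{v}{x}_{\subforwardarrowstar}$ to form the path $u \forwardarrowstar x \backwardarrow v$. The only difference is that you spell out the reflexive/transitive chaining over paths that the paper leaves implicit.
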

\begin{proof}
  By property \ref{it:Wbd}, for each $u \cfgarrow v$, either $u
  \forwardarrowplus v$, or $u \in \component{v}{x}_{\forwardarrowstar}$ and $x
  \backwardarrow v$ for some $x \in X$. For the latter, $u \in
  \component{v}{x}_{\forwardarrowstar}$ implies that $u \forwardarrowplus x
  \backwardarrow v$. Thus, $\cfgarrowstar \subseteq (\forwardarrow \cup
  \backwardarrow)^*$.
\end{proof}

\begin{figure}[t]
  \begin{subfigure}[b]{.33\textwidth}
    \centering
    \begin{tikzpicture}[auto,node distance=.8cm,font=\small]
      \tikzstyle{every node} = [rectangle, draw, inner sep=0pt,minimum size=2.5ex]
      \node (1) {$1$};
      \node [right of=1] (2)  {$2$};
      \node [below right of=2] (6) {$6$};
      \node [right of=6] (7) {$7$};
      \node [below of=7, node distance=0.5cm] (9) {$9$};
      \node [right of=7] (8)  {$8$};
      \node [above right of=2] (3)  {$3$};
      \node [right of=3] (4) {$4$};
      \node [right of=2, node distance=3cm] (5) {$5$};
      \node [below of=2] (10)  {$10$};
      \path (1) edge[cfgedge] (2); 
      \path (2) edge[cfgedge] (3); 
      \path (3) edge[cfgedge] (4); 
      \path (3) edge[cfgedge, bend right=12] ([yshift=-1mm] 5.north west); 
      \path (6) edge[cfgedge, bend left=12] ([yshift=1mm] 5.south west); 
      \path (4) edge[cfgedge, bend right=30] (3); 
      \path (5) edge[cfgedge, bend right=95] (2); 
      \path (2) edge[cfgedge] (6); 
      \path (6) edge[cfgedge] (7); 
      \path (6) edge[cfgedge] (9); 
      \path (7) edge[cfgedge] (8); 
      \path (9) edge[cfgedge] (8); 
      \path (2) edge[cfgedge] (10); 
      \path (8) edge[cfgedge, bend left=90, looseness=1.3] (6); 
    \end{tikzpicture}
    \caption{}
    \label{fig:cfg1}
  \end{subfigure} %
  \begin{subfigure}[b]{.66\textwidth}
    \centering
    \begin{tikzpicture}[auto,node distance=.8cm,font=\small]
      \tikzstyle{every node} = [circle, draw, inner sep=0pt,minimum size=2.5ex]
      \node (1) at (0, 0) {$1$};
      \node [right of=1] (2) {$2$};
      \node [right of=2] (3) {$3$};
      \node [right of=3] (4) {$4$};
      \node [right of=4] (3') {$x_3$};
      \node (5) at (4.7, 0) {$5$};
      \node [right of=5] (2') {$x_2$};
      \node [below of=3, node distance=0.5cm] (6) {$6$};
      \node [right of=6] (7) {$7$};
      \node [right of=7] (8) {$8$};
      \node [below of=7, node distance=0.45cm] (9) {$9$};
      \node [right of=8] (6') {$x_6$};
      \node [right of=2'] (10) {$10$};
      
      \path (1) edge[forward] (2); 
      \path (2) edge[forward] (3); 
      \path (2) edge[forward] (6); 
      \path (3) edge[forward] (4); 
      \path (4) edge[forward] (3'); 
      \path (3') edge[forward] (5); 
      \path (5) edge[forward] (2'); 
      \path (6) edge[forward] (7); 
      \path (6) edge[forward] (9); 
      \path (7) edge[forward] (8); 
      \path (9) edge[forward] (8); 
      \path (8) edge[forward] (6'); 
      \path (6') edge[forward] (5); 
      \path (2') edge[forward] (10); 
      \path (2') edge[backward, bend right=20] (2); 
      \path (3') edge[backward, bend right=30] (3); 
      \path (6') edge[backward, bend right=20] (6); 
    \end{tikzpicture}
    \vspace{0.4cm}
    \caption{} 
    \label{fig:wpo1}
  \end{subfigure}
  \vspace{-5ex}
  \caption{\subref{fig:cfg1}~Directed graph $G_1$ and \subref{fig:wpo1}~WPO 
  $\wpo_1$. Vertices $V$
  are labeled using DFN;
   exits $X = \{x_2, x_3, x_6 \}$. 
   }
  \label{fig:cfg-wpo1}
\end{figure}

\begin{example}
  Consider the directed graph $G_1(V, \cfgarrow)$ in \pref{fig:cfg1}.
  \pref{fig:wpo1} shows a WPO $\wpo_1 (V, X, \forwardarrow, \backwardarrow)$ for
  $G_1$, where $X$ = $\{x_2, x_3, x_6\}$, and satisfies all
  conditions in \pref{def:WPO}. One can verify that $(V \cup X,
  \forwardarrowstar, \backwardarrow)$ satisfies all conditions   in
  \pref{def:HPO} and is an HPO. 

  Suppose we were to remove $x_{6} \forwardarrow 5$ and 
  instead add $6 \forwardarrow 5$ to $\wpo_1$ (to more closely match 
  the edges in $G_1$), then this change would violate condition \pref{it:Hexit}, 
  and hence condition \pref{it:Whpo}.

  If we were to only remove $x_{6} \forwardarrow 5$ from $\wpo_1$, 
  then it would still satisfy condition \pref{it:Whpo}. 
  However, this change would violate condition \pref{it:Wbd}. 
  \qef
\end{example}

\begin{definition}
\label{def:WPOBackedges} 
  For graph $G(V, \cfgarrow)$ and its WPO $\wpo(V, X, \forwardarrow,
  \backwardarrow)$, the \emph{back edges of $G$ with respect to the WPO $\wpo$},
  denote by $B_\wpo$, are defined as $B_\wpo \eqdef \{ (u, v) \in \cfgarrow \mid
  \exists x \in X. 
  u \in \component{v}{x}_{\forwardarrowstar} \land  x \backwardarrow v\}$. \qef
\end{definition}

In other words, $(u,v) \in B_\wpo$ if $u\cfgarrow v$ satisfies
condition~\ref{it:Wbd}-(ii) in \pref{def:WPO}. \pref{thm:feedback} proves that
$B_\wpo$ is a feedback edge set for $G$, and \pref{thm:partial} shows that the
subgraph $(V, \cfgarrow \setminus B_\wpo)$ forms a DAG. Together these two
theorems capture the fact that the WPO $\wpo(V, X, \forwardarrow,
\backwardarrow)$ partitions the preorder~$\cfgarrowstar$ of $G(V, \cfgarrow)$
into a partial order $\forwardarrowstar$ and a feedback edge set $B_\wpo$.

\begin{theorem}
  \label{thm:feedback}
  For graph $G(V, \cfgarrow)$ and its WPO $\wpo(V, X,
  \forwardarrow, \backwardarrow)$,
  $B_\wpo$ is a feedback edge set for $G$.
\end{theorem}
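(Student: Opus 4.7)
The plan is to show that $(V, (\cfgarrow \setminus B_\wpo)^*)$ is a partial order by establishing the chain of inclusions $(\cfgarrow \setminus B_\wpo) \subseteq \forwardarrowplus$ and then $(\cfgarrow \setminus B_\wpo)^* \subseteq \forwardarrowstar$, after which antisymmetry transfers for free from the partial order $(V \cup X, \forwardarrowstar)$ guaranteed by \ref{it:Whpo}.

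First, I would take an arbitrary edge $u \cfgarrow v$ with $(u,v) \notin B_\wpo$ and apply condition \ref{it:Wbd}. Case (ii) of \ref{it:Wbd} says $u \in \component{v}{x}_{\forwardarrowstar}$ and $x \backwardarrow v$ for some $x \in X$, but this is precisely the condition under which \pref{def:WPOBackedges} places $(u,v)$ into $B_\wpo$ — contradicting our assumption. So only case (i) is possible, giving $u \forwardarrowplus v$, and in particular $u \forwardarrowstar v$. Hence $\cfgarrow \setminus B_\wpo$ is a subrelation of $\forwardarrowstar$.

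Next, I would take reflexive transitive closures. Since $\forwardarrowstar$ is already reflexive and transitive, the inclusion of single edges lifts to $(\cfgarrow \setminus B_\wpo)^* \subseteq \forwardarrowstar$. By \ref{it:Whpo} and \ref{it:Hpartial}, $(V \cup X, \forwardarrowstar)$ is a partial order, hence antisymmetric; antisymmetry is trivially inherited by any subrelation, so $(\cfgarrow \setminus B_\wpo)^*$ restricted to $V$ is antisymmetric. Combined with its built-in reflexivity and transitivity, $(V, (\cfgarrow \setminus B_\wpo)^*)$ is a partial order, which by the paper's definition of a feedback edge set concludes the proof.

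There is no genuine obstacle here: the argument is essentially a direct unfolding of \ref{it:Wbd} together with \pref{def:WPOBackedges}. The only point worth stating carefully is the observation that case (ii) of \ref{it:Wbd} coincides exactly with the membership condition for $B_\wpo$, so discarding $B_\wpo$ leaves only edges covered by case (i), which live inside the partial order $\forwardarrowstar$.
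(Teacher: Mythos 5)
Your proof is correct, but it takes a different route from the paper's. The paper proves \pref{thm:feedback} by contradiction on an arbitrary cycle $v_1 \cfgarrow \dots \cfgarrow v_n \cfgarrow v_1$: if no edge of the cycle were in $B_\wpo$, condition \ref{it:Wbd} would force $v_1 \forwardarrowplus v_n$ and $v_n \forwardarrowplus v_1$, contradicting antisymmetry of $\forwardarrowstar$; hence every cycle is cut. You instead argue directly from the definition of a feedback edge set, showing $(\cfgarrow \setminus B_\wpo) \subseteq \forwardarrowplus$ and lifting this to $(\cfgarrow \setminus B_\wpo)^* \subseteq \forwardarrowstar$, so that antisymmetry is inherited and $(V, (\cfgarrow \setminus B_\wpo)^*)$ is a partial order. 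Both arguments hinge on the same observation --- that case (ii) of \ref{it:Wbd} is exactly the membership condition of \pref{def:WPOBackedges}, so edges outside $B_\wpo$ satisfy case (i) --- but your containment argument is essentially the paper's proof of the subsequent \pref{thm:partial}, from which the feedback-edge-set property then falls out as an immediate corollary. In that sense your route is slightly more economical: it establishes both \pref{thm:feedback} and \pref{thm:partial} in one pass, whereas the paper's cycle-based contradiction proves only the former and gives a more directly combinatorial picture of which edge of each cycle gets cut.
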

\begin{proof}
  Let $v_1 \cfgarrow v_2 \cfgarrow \dots \cfgarrow v_n \cfgarrow v_1$ be a cycle
  of $n$ distinct vertices in $G$. We will show that there exists $i \in [1,n)$
  such that $v_i \cfgarrow v_{i+1} \in B_\wpo$; that is, $v_i \in
  \component{v_{i+1}}{x}_{\forwardarrowstar}$ and $x \backwardarrow v_{i+1}$ for
  some $x \in X$. If this were not true, then $v_1 \forwardarrowplus \dots
  \forwardarrowplus v_n \forwardarrowplus v_1$ (using~\ref{it:Wbd}). Therefore,
  $v_1 \forwardarrowplus v_n$ and $v_n \forwardarrowplus v_1$, which contradicts
  the fact that $\forwardarrowstar$ is a partial order. Thus, $B_\wpo$ cuts all
  cycles at least once and is a feedback edge set.
\end{proof}

\begin{example}
  For the graph $G_1(V, \cfgarrow)$ in \pref{fig:cfg1} and WPO $\wpo_1 (V, X,
 \forwardarrow, \backwardarrow)$ in \pref{fig:wpo1}, $B_{\wpo_1} = \{(4,3),
 (8,6), (5,2)\}$. One can verify that $B_{\wpo_1}$ is a feedback edge set for
 $G_1$. \qef
\end{example}

\begin{theorem}
  \label{thm:partial}
  For graph $G(V, \cfgarrow)$ and its WPO $\wpo(V, X, \forwardarrow,
  \backwardarrow)$, $(\cfgarrow \setminus B_\wpo)^+ \subseteq
  \forwardarrowplus$.
\end{theorem}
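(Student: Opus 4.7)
The plan is to first establish the single-step version, namely that every edge $u \cfgarrow v$ not in $B_\wpo$ already satisfies $u \forwardarrowplus v$, and then lift this to the transitive closure using the transitivity of $\forwardarrowplus$.

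For the single-step claim, I would appeal directly to condition \ref{it:Wbd} of \pref{def:WPO}. Given $(u,v) \in \cfgarrow$, condition \ref{it:Wbd} leaves exactly two possibilities: either (i) $u \forwardarrowplus v$, or (ii) there exists $x \in X$ with $u \in \component{v}{x}_{\forwardarrowstar}$ and $x \backwardarrow v$. But case (ii) is precisely the membership condition for $B_\wpo$ in \pref{def:WPOBackedges}. Hence if $(u,v) \notin B_\wpo$, case (ii) is excluded and $u \forwardarrowplus v$ must hold.

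For the transitive closure, suppose $u \,(\cfgarrow \setminus B_\wpo)^+\, v$, so there is a sequence $u = w_0 \cfgarrow w_1 \cfgarrow \cdots \cfgarrow w_k = v$ with each $(w_i, w_{i+1}) \in \cfgarrow \setminus B_\wpo$. Applying the single-step claim to each edge yields $w_i \forwardarrowplus w_{i+1}$ for all $i \in [0, k)$. Since $\forwardarrowstar$ is a partial order by \ref{it:Whpo} and \pref{def:HPO}-\ref{it:Hpartial}, its strict version $\forwardarrowplus$ is transitive, so chaining these relations gives $u \forwardarrowplus v$, as required.

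There is no real obstacle here; the proof is essentially a direct unfolding of \pref{def:WPO}-\ref{it:Wbd} against \pref{def:WPOBackedges}, followed by a straightforward induction on path length. The only thing to be careful about is not confusing $\forwardarrowstar$ (the reflexive closure, which is the partial order in the HPO) with $\forwardarrowplus$ (the strict/transitive version) when invoking transitivity, and making sure each step of the path remains outside $B_\wpo$ by hypothesis.
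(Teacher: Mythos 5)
Your proof is correct and follows essentially the same route as the paper's: observe that $B_\wpo$ is by definition exactly the set of edges satisfying case~(ii) of \ref{it:Wbd}, so every edge outside $B_\wpo$ must satisfy case~(i) and hence $u \forwardarrowplus v$, then close under transitivity. You are slightly more explicit than the paper about the contrapositive step and the path-length induction, but the substance is identical.
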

\begin{proof}
  Each edge $(u,v) \in B_\wpo$ satisfies~\ref{it:Wbd}-(ii) by
  definition. Therefore, all edges in $(\cfgarrow \setminus B_\wpo)$
  must satisfy~\ref{it:Wbd}-(i). Thus, $u \forwardarrowplus v$ for all edges
  $(u,v) \in (\cfgarrow \setminus B_\wpo)$, and $(\cfgarrow \setminus B_\wpo)^*
  \subseteq \forwardarrowstar$.
\end{proof}

Given the tuple $\wpo(V, X, \forwardarrow, \backwardarrow)$ and a set $S$, 
we use $\wpo_{\downharpoonright S}$ to denote 
the tuple $(V \cap S, X \cap S, \forwardarrow_{\downharpoonright S},
\backwardarrow_{\downharpoonright S})$.
The following two theorems enable us to decompose a WPO into sub-WPOs enabling
the use of structural induction when proving properties of WPOs.

\begin{theorem}
  \label{thm:inductdag}
  For graph $G(V, \cfgarrow)$ and its WPO $\wpo(V, X, \forwardarrow,
  \backwardarrow)$, $\wpo_{\downharpoonright C}$ is a WPO for subgraph
  $G_{\downharpoonright C}$ for all $C \in \components_\wpo$.
\end{theorem}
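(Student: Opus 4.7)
Fix $C = \component{h}{x}_{\forwardarrowstar} \in \components_{\wpo}$ with $x \backwardarrow h$, noting that both $h$ and $x$ lie in $C$. The plan is to verify each of \ref{it:Wx}--\ref{it:Wbd} for $\wpo_{\downharpoonright C} = (V \cap C, X \cap C, \forwardarrow_{\downharpoonright C}, \backwardarrow_{\downharpoonright C})$ with respect to $G_{\downharpoonright C}$. Conditions \ref{it:Wx} and \ref{it:Wblue} are inherited immediately from $\wpo$.

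The technical core is a containment lemma: whenever $u, v \in C$ and $u \forwardarrowstar w \forwardarrowstar v$, also $w \in C$; consequently the closures $(\forwardarrow_{\downharpoonright C})^{*}$ and $\forwardarrowstar \cap (C \times C)$ coincide. To prove it I would chain $h \forwardarrowstar u \forwardarrowstar w$ to obtain $h \forwardarrowstar w$, then apply \ref{it:Hexit} to $h, u, x, w$ to get either $x \forwardarrowplus w$ or $w \forwardarrowstar x$. The second case gives $w \in \component{h}{x}_{\forwardarrowstar} = C$ directly; in the first case, combining $x \forwardarrowplus w$ with $w \forwardarrowstar v \forwardarrowstar x$ yields $x \forwardarrowstar w \forwardarrowstar x$, so $w = x$ by antisymmetry and again $w \in C$.

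With the lemma in hand I would dispatch the remaining conditions. For \ref{it:Wred}: given $x' \in X \cap C$ with $x' \backwardarrow h'$ in $\wpo$, the components $C_{x'}$ and $C$ share $x'$, so by \ref{it:Hnest} one contains the other; the case $C \subsetneq C_{x'}$ would force $x \forwardarrowstar x'$ (from $x \in C_{x'}$) and $x' \forwardarrowstar x$ (from $x' \in C$), collapsing to $x = x'$ by antisymmetry, to $h = h'$ by \ref{it:Hoto}, and hence to $C_{x'} = C$, a contradiction. So $C_{x'} \subseteq C$ and $h' \in C$. Condition \ref{it:Whpo} then reduces to checking \ref{it:Hpartial}--\ref{it:Hexit} on the restricted triple: \ref{it:Hpartial}--\ref{it:Hdir} are immediate under restriction; \ref{it:Hnest} follows because, by the lemma, the components of the restricted HPO are exactly $\{C_{x'} \in \components_{\wpo} : C_{x'} \subseteq C\}$, a sub-forest of $\components_{\wpo}$; and \ref{it:Hexit} for the restricted HPO follows from \ref{it:Hexit} on $\wpo$, since both endpoints of its conclusion already lie in $C$ and the lemma lifts the original relations to the restricted closure. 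Finally, for \ref{it:Wbd}, given $u \cfgarrow v$ with $u, v \in C$, the $u \forwardarrowplus v$ branch of the original \ref{it:Wbd} restricts to $C$ by the lemma, and the nested-component branch $u \in \component{v}{x'}_{\forwardarrowstar}$ with $x' \backwardarrow v$ places $x' \in C$ by the same forest argument used for \ref{it:Wred}.

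The principal obstacle is the containment lemma, since it is the single step that genuinely combines \ref{it:Hexit} with antisymmetry of $\forwardarrowstar$; every other clause is a direct specialization of the corresponding axiom of $\wpo$ or $\hpo$ together with the forest property \ref{it:Hnest}.
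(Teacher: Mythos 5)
Your proof is correct and follows essentially the same route as the paper's: the containment (convexity) lemma you isolate---obtained from \ref{it:Hexit} together with antisymmetry of $\forwardarrowstar$---is precisely the argument the paper gives inline to establish \ref{it:Wbd} for the restriction, and the remaining conditions are dispatched as direct specializations just as in the paper (you merely spell out the forest argument for \ref{it:Wred} and the nested-component branch that the paper declares trivial).
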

\begin{proof}
  We show that $\wpo_{\downharpoonright C}$ satisfies all conditions
  \pref{it:Wx}--\pref{it:Wbd} in \pref{def:WPO}
  for all $C \in \components_\wpo$.
  Conditions \pref{it:Wx}, \pref{it:Wred}, \pref{it:Wblue}, \pref{it:Whpo}-[\pref{it:Hpartial}, \pref{it:Hoto}, \pref{it:Hdir}, \pref{it:Hnest}] trivially holds true.

  [\ref{it:Whpo}-\ref{it:Hexit}]~If $v \notin C$, \ref{it:Hexit} is true because $(u, v)
  \notin \forwardarrowstar_{\downharpoonright C}$. Else, \ref{it:Hexit} is still
  satisfied with $\forwardarrowstar_{\downharpoonright C}$.

  [\ref{it:Wbd}]
  We show that $u \forwardarrowplus v$ implies $u
  \forwardarrow_{\downharpoonright C}^+ v$, if $u, v \in C$. Let $C =
  \component{h}{x}_{\forwardarrowstar}$ with $x \backwardarrow h$. If 
  $u \forwardarrowplus_{\downharpoonright C} v$ is false, then there exists $w \in
  \postset{u}_{\subforwardarrowstar} \cap \preset{v}_{\subforwardarrowstar}$
  such that $w \notin C$.
  However, $u \forwardarrowplus w$ and $w \notin C$ implies $x \forwardarrowplus
  w$ (using~\ref{it:Hexit}). This contradicts that $(V \cup X, \forwardarrowstar)$ is a
  partial order, because $w \in \preset{v}_{\subforwardarrowstar}$ and $v \in
  \component{h}{x}_{\subforwardarrowstar}$ implies $w \forwardarrowplus x$.
  Thus, \ref{it:Wbd} is satisfied.
\end{proof}

\begin{theorem}
  \label{thm:inductcomp}
  For graph $G(V, \cfgarrow)$ and its WPO $\wpo(V, X, \forwardarrow,
\backwardarrow)$, if $V \cup X = \component{h}{x}_{\forwardarrowstar}$
  for some $(x, h) \in \backwardarrow$,
  then $\wpo_{\downharpoonright S}$ is
a WPO for subgraph $G_{\downharpoonright S}$, where $S \eqdef V \cup X
\setminus \{h, x\}$.
\end{theorem}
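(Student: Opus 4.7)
\noindent\textbf{Proof proposal for \pref{thm:inductcomp}.}
The plan is to verify all five conditions \ref{it:Wx}--\ref{it:Wbd} of \pref{def:WPO} for $\wpo_{\downharpoonright S}$, with the help of one key technical lemma about how paths behave under restriction.

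The crucial observation is that the hypothesis $V \cup X = \component{h}{x}_{\forwardarrowstar}$ makes $h$ the unique minimum and $x$ the unique maximum of the partial order $\forwardarrowstar$ on $V \cup X$. From this I would first establish a path lemma: for any $u, v \in S$, we have $u \forwardarrowplus v$ if and only if there is a witnessing $\forwardarrow$-path from $u$ to $v$ whose intermediate vertices all lie in $S$. Indeed, if such a path passed through $h$ then $u \forwardarrowstar h$, forcing $u = h$ by antisymmetry, contradicting $u \in S$; symmetrically, passing through $x$ would force $v = x$. Thus restriction to $S$ does not destroy any reachability between elements of $S$, and the relation $(\forwardarrow_{\downharpoonright S})^*$ agrees with $\forwardarrowstar_{\downharpoonright S}$ when viewed on $S \times S$.

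With this lemma in hand, I would check \ref{it:Whpo} (that $(S, \forwardarrowstar_{\downharpoonright S}, \backwardarrow_{\downharpoonright S})$ is an HPO) by verifying \ref{it:Hpartial}--\ref{it:Hexit}: \ref{it:Hpartial}, \ref{it:Hoto}, \ref{it:Hdir} follow immediately from restriction; the path lemma ensures for \ref{it:Hnest} that for every $(x', h') \in \backwardarrow_{\downharpoonright S}$ the component $\component{h'}{x'}_{\forwardarrowstar_{\downharpoonright S}}$ equals $\component{h'}{x'}_{\forwardarrowstar}$, so the forest condition on $\components_{\wpo_{\downharpoonright S}}$ is inherited from that of $\components_{\wpo}$; and \ref{it:Hexit} lifts because each of its two conclusions ($x' \forwardarrowplus v'$ or $v' \forwardarrowstar x'$) transfers between $\forwardarrowstar$ and $\forwardarrowstar_{\downharpoonright S}$ by the path lemma. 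Conditions \ref{it:Wx} and \ref{it:Wblue} are immediate. For \ref{it:Wred} I need that every $x' \in X \cap S$ retains some $v' \in V \cap S$ with $x' \backwardarrow v'$: the original $v' \in V$ witness is not $x$, and one-to-one-ness of $\backwardarrow$ together with $x \backwardarrow h$ rules out $v' = h$.

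The last condition \ref{it:Wbd} is where the pieces come together. For $u \cfgarrow v$ with $u, v \in S$, the original \ref{it:Wbd} yields either $u \forwardarrowplus v$, which upgrades to $u \forwardarrowplus_{\downharpoonright S} v$ by the path lemma, or some $x' \in X$ with $u \in \component{v}{x'}_{\forwardarrowstar}$ and $x' \backwardarrow v$. In the latter case, one-to-one-ness of $\backwardarrow$ forces $x' \neq x$ (otherwise $v = h \notin S$), hence $x' \in X \cap S$, and the path lemma again promotes $u \in \component{v}{x'}_{\forwardarrowstar}$ to $u \in \component{v}{x'}_{\forwardarrowstar_{\downharpoonright S}}$. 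I expect the main obstacle to be the path lemma: although intuitively clear, a careful argument is needed to rule out \emph{every} way a path could detour through the removed head or exit, and this is the place where the assumption $V \cup X = \component{h}{x}_{\forwardarrowstar}$ is indispensable.
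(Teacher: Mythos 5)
Your proposal is correct and follows essentially the same route as the paper's own proof: the paper's two-line justification that ``$x$ has no outgoing and $h$ has no incoming scheduling constraints'' is exactly the content of your path lemma, and both arguments then discharge conditions \ref{it:Wx}--\ref{it:Wbd} by restriction. You simply spell out in more detail (via antisymmetry of $\forwardarrowstar$ and one-to-one-ness of $\backwardarrow$) the steps the paper labels as trivial.
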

\begin{proof}
  We show that $\wpo_{\downharpoonright S}$ satisfies all conditions
  \pref{it:Wx}--\pref{it:Wbd} in \pref{def:WPO}.
  Conditions \pref{it:Wx}, \pref{it:Wred}, \pref{it:Wblue}, \pref{it:Whpo}-[\pref{it:Hpartial}, \pref{it:Hoto}, \pref{it:Hdir}, \pref{it:Hnest}] trivially holds true.

  [\ref{it:Whpo}-\ref{it:Hexit}]~$x$ has no outgoing and $h$ has no incoming
  scheduling constraints. Thus, $\wpo_{\downharpoonright S}$ still
  satisfies~\ref{it:Hexit}.

  [\ref{it:Wbd}]
  Case (i) is still satisfied because $h$ only had outgoing \forward{}s
  and $x$ only had incoming \forward{}s.
  Case (ii) is still satisfied due to \ref{it:Hoto} and \ref{it:Hnest}.
\end{proof}

\begin{example}
  The decomposition of WPO $\wpo_1$ for graph $G_1$ in \pref{fig:cfg-wpo1} is:
  \begin{center}
    \begin{tikzpicture}[auto,node distance=1cm,font=\small]
      \tikzstyle{every node} = [circle, draw, inner sep=0pt,minimum size=2.5ex]
      \node (1) at (0, 0) {$1$};
      \node [right of=1] (2) {$2$};
      \node [right of=2] (3) {$3$};
      \node [right of=3] (4) {$4$};
      \node [right of=4] (3') {$x_3$};
      \node [right of=3', node distance=1.8cm] (5) {$5$};
      \node [right of=5] (2') {$x_2$};
      \node [below of=3, node distance=1cm] (6) {$6$};
      \node [right of=6] (7) {$7$};
      \node [right of=7] (8) {$8$};
      \node [below of=7, node distance=0.42cm] (9) {$9$};
      \node [right of=8] (6') {$x_6$};
      \node [right of=2'] (10) {$10$};
      
      \path (1) edge[forward] (2); 
      \path (2) edge[forward] (3); 
      \path (2) edge[forward] (6); 
      \path (3) edge[forward] (4); 
      \path (4) edge[forward] (3'); 
      \path (3') edge[forward] (5); 
      \path (5) edge[forward] (2'); 
      \path (6) edge[forward] (7); 
      \path (6) edge[forward] (9); 
      \path (7) edge[forward] (8); 
      \path (9) edge[forward] (8); 
      \path (8) edge[forward] (6'); 
      \path (6') edge[forward] (5); 
      \path (2') edge[forward] (10); 
      \draw[thick] ($(7.north west) + (-0.1, 0.1)$) rectangle ($(8.south east) + (0.1, -0.50)$);
      \path (6') edge[backward, bend right=25] (6); 
      \draw[dotted, thick] ($(6.north west) + (-0.1, 0.35)$) rectangle ($(6'.south east) + (0.1, -0.55)$);
      \draw[thick] ($(4.north west) + (-0.1, 0.1)$) rectangle ($(4.south east) + (0.1, -0.1)$);
      \path (3') edge[backward, bend right=25] (3); 
      \draw[dotted, thick] ($(3.north west) + (-0.1, 0.2)$) rectangle ($(3'.south east) + (0.1, -0.15)$);
      \draw[thick] ($(3.north west) + (-0.15, 0.25)$) rectangle ($(5.south east) + (0.15, -0.2)$);
      \path (2') edge[backward, bend right=30] (2); 
      \draw[dotted, thick] ($(2.north west) + (-0.1, 0.8)$) rectangle ($(2'.south east) + (0.1, -0.25)$);
      \draw[thick] ($(1.north west) + (-0.1, 0.85)$) rectangle ($(10.south east) + (0.1, -1.6)$);
      \node[draw=none, left] at ([yshift=-1mm]$(1.north west) + (-0.1, 0.85)$) {$\wpo_{1}$};
      \node[draw=none, left] at ([yshift=-2mm]$(2.north west) + (-0.1, 0.8)$) {$\wpo_{\downharpoonright C_2}$};
      \node[draw=none, above] at ([xshift=2cm]$(3.north west) + (-0.15, 0.2)$) {$\wpo_{\downharpoonright S_2}$};
      \node[draw=none, right] at ([yshift=-1.5mm]$(3'.north east) + (0.1, 0.2)$) {$\wpo_{\downharpoonright C_3}$};
      \node[draw=none, left] at ([yshift=-1cm]$(6.north west) + (-0.1, 0.35)$) {$\wpo_{\downharpoonright C_6}$};
      \node[draw=none, right] at ([yshift=2mm] $(8.south east) + (0.1, -0.5)$) {$\wpo_{\downharpoonright S_6}$};
    \end{tikzpicture}
    \end{center}
  $\components_{\wpo} = \{C_2, C_3, C_6\}$, where
  $C_2 = \{2, 3, 4, x_3, 5, x_2\}$,
  $C_3 = \{3, 4, x_3\}$,
  and $C_6 = \{6, 7, 8, 9, x_6\}$.
 As proved in \pref{thm:inductdag},  
 $\wpo_{\downharpoonright C_2}$,
 $\wpo_{\downharpoonright C_3}$, and $\wpo_{\downharpoonright C_6}$ (shown using dotted lines)
 are WPOs for the subgraphs $G_{\downharpoonright C_2}$,
 $G_{\downharpoonright C_3}$, and $G_{\downharpoonright C_6}$, respectively.
  Furthermore, \pref{thm:inductcomp} is applicable to each of these WPOs.
  Therefore,  $\wpo_{\downharpoonright S_2}$,
  $\wpo_{\downharpoonright S_3}$, and $\wpo_{\downharpoonright S_6}$ (shown using solid lines)
  are WPOs 
  for subgraphs $G_{\downharpoonright S_2}$,
  $G_{\downharpoonright S_3}$, and $G_{\downharpoonright S_6}$, respectively,
  where $S_h = C_h \setminus \{h, x_h\}$ for $h \in \{2, 3, 6\}$.
  For example, $S_6 = C_6 \setminus \{6, x_6\} = \{7, 8, 9\}$.
  Note that 
  $\wpo_{\downharpoonright C_6}$ is a WPO for subgraph
  \raisebox{-11pt}{
  \begin{tikzpicture}[auto,node distance=.7cm,font=\small]
    \tikzstyle{every node} = [rectangle, draw, inner sep=0pt,minimum size=2ex]
    \node  (6) at (0,0) {$6$};
    \node [right of=6] (7) {$7$};
    \node [right of=7] (8)  {$8$};
    \node [below of=7,node distance=0.35cm] (9) {$9$};

    \path (6) edge[cfgedge] (7); 
    \path (6) edge[cfgedge] (9); 
    \path (7) edge[cfgedge] (8); 
    \path (9) edge[cfgedge] (8); 
    \path (8) edge[cfgedge, bend right=24] (6); 
  \end{tikzpicture}},
  while $\wpo_{\downharpoonright S_6}$ is a WPO for subgraph
  \raisebox{-9pt}{
  \begin{tikzpicture}[auto,node distance=.7cm,font=\small]
    \tikzstyle{every node} = [rectangle, draw, inner sep=0pt,minimum size=2ex]
    \node (7) at (0,0) {$7$};
    \node [right of=7] (8)  {$8$};
    \node [below of=7,node distance=0.35cm] (9) {$9$};

    \path (7) edge[cfgedge] (8); 
    \path (9) edge[cfgedge] (8); 
  \end{tikzpicture}}.
  \qef
\end{example}

\begin{definition}
  \label{def:MaximalComponent}
  Given a WPO $\wpo$, $C \in
  \components_{\wpo}$ is a \emph{maximal component} if there does not exist another
  component $C' \in \components_{\wpo}$ such that $C \subset C'$.  
  $\maximalcomponents_\wpo$ denotes the set of maximal components of $\wpo$.
  \qef
\end{definition}
\begin{theorem}
  \label{thm:nocycle}
  For graph $G(V, \cfgarrow)$ and its WPO $\wpo(V, X,
  \forwardarrow, \backwardarrow)$,
  if there is a cycle in $G$ consisting of vertices $V'$,
  then there exists $C \in \maximalcomponents_\wpo$
  such that $V' \subseteq C$.
\end{theorem}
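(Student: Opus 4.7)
The plan is to collapse every maximal component of $\wpo$ to a single super-node and show the resulting quotient graph is acyclic; any cycle in $G$ must then live inside one super-node, i.e., inside a single maximal component. For each $v \in V$, let $[v]$ denote the unique maximal component of $\wpo$ containing $v$ if one exists---uniqueness follows from \ref{it:Hnest}, since components containing $v$ are pairwise nested and therefore form a chain under $\subseteq$---and set $[v] = \{v\}$ otherwise. Write $\bar{x}([v])$ for the exit of $[v]$ when $[v]$ is a maximal component and $v$ itself when $[v]$ is a singleton, and let $\preceq$ abbreviate $\forwardarrowstar$.

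The key lemma I will establish is: for every $G$-edge $u \cfgarrow v$ with $[u] \neq [v]$, one has $\bar{x}([u]) \prec \bar{x}([v])$. First, \ref{it:Wbd} case~(ii) is ruled out because it would place $u$ inside a component $\component{v}{x_v}_{\subforwardarrowstar} \subseteq [v]$, contradicting $[u] \neq [v]$ via \ref{it:Hnest}. Hence \ref{it:Wbd} case~(i) applies, giving $u \forwardarrowplus v$, so $u \prec v$ strictly. If $[u]$ is a maximal component with exit $x_u$, then applying \ref{it:Hexit} to $u \preceq v$ forces either $v \preceq x_u$ or $x_u \prec v$; the former would put $v \in [u]$ and contradict $[u] \neq [v]$, so $x_u \prec v$. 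In the singleton subcase $\bar{x}([u]) = u \prec v$ directly. Symmetrically $v \preceq \bar{x}([v])$ (equality when $[v]$ is a singleton, $v \preceq x_v$ otherwise), so chaining yields $\bar{x}([u]) \prec v \preceq \bar{x}([v])$.

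Given a cycle $v_1 \cfgarrow v_2 \cfgarrow \cdots \cfgarrow v_n \cfgarrow v_1$ in $G$ with vertex set $V'$, if any consecutive pair had $[v_i] \neq [v_{i+1}]$, chaining the lemma around the cycle would yield $\bar{x}([v_1]) \prec \bar{x}([v_1])$, contradicting antisymmetry of $\preceq$. Hence all $[v_i]$ coincide with a common class $C$. By \pref{thm:feedback} the cycle contains at least one back edge $(v_i, v_{i+1}) \in B_\wpo$, and by \pref{def:WPOBackedges} both endpoints lie in an actual component $\component{v_{i+1}}{x}_{\subforwardarrowstar}$, forcing $C$ to be a maximal component rather than a singleton. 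Thus $V' \subseteq C \in \maximalcomponents_\wpo$. The main obstacle is the strict inequality in the key lemma: \ref{it:Wbd} case~(i) alone only delivers $u \preceq v$, and upgrading this to a strict exit comparison takes the full force of \ref{it:Hexit} together with the disjointness of the distinct maximal components $[u]$ and $[v]$, which is what blocks $v$ from slipping back into $[u]$ via $v \preceq x_u$.
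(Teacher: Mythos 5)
Your proof is correct and follows essentially the same route as the paper's own argument: rule out case~(ii) of \ref{it:Wbd} for edges crossing maximal components, use \ref{it:Hexit} to push the ordering past the exit of the source's component, and chain the resulting strict inequalities around the cycle to contradict antisymmetry of $\forwardarrowstar$. Your write-up is in fact slightly more careful than the paper's, since it explicitly treats vertices lying in no component as singleton classes and adds the closing observation, via \pref{thm:feedback} and \pref{def:WPOBackedges}, that the common class must be a genuine maximal component rather than a singleton.
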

\begin{proof}
  Assume that the theorem is false. Then, there exists multiple maximal components
  that partition the vertices in the cycle. Let $(u,v)$ be an edge in the cycle
  where $u$ and $v$ are in different maximal components. By
  \pref{it:Wbd}, $u \forwardarrowplus v$, and by \pref{it:Hexit}, $x_u
  \forwardarrowplus v$, where $x_u$ is the exit of the maximal component that
  contains $u$. By the definition of the component, $v \forwardarrowplus x_v$, where
  $x_v$ is the exit of the maximal component that contains $v$. Therefore,
  $x_u \forwardarrowplus x_v$. Applying the same reasoning for all such
  edges in the cycle, we get $x_u \forwardarrowplus x_v \forwardarrowplus \dots
  \forwardarrowplus x_u$. This contradicts the fact that $(V \cup X, \forwardarrowstar)$
  is a partial order for the WPO $\wpo$.
\end{proof}

\begin{corollary}
  \label{cor:nocycle}
  For $G(V, \cfgarrow)$ and its WPO $\wpo(V, X,
  \forwardarrow, \backwardarrow)$, 
  if $G$ is a non-trivial strongly connected graph, then 
  there exists $h \in V$ and $x \in X$ such that 
  $\maximalcomponents_{\wpo} = \{\component{h}{x}_{\forwardarrowstar}\}$
  and $\component{h}{x}_{\forwardarrowstar} = V \cup X$.
\end{corollary}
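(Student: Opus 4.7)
The plan is to reduce the corollary to \pref{thm:nocycle} by exploiting strong connectivity to place every vertex of $V$ inside a single maximal component, and then use the well-nesting of components to pull the exits in as well.

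First I would fix some $v_0 \in V$ and, for any other $v \in V$, use the strong connectivity and non-triviality of $G$ to obtain a cycle in $G$ that contains both $v_0$ and $v$ (by concatenating a path $v_0 \cfgarrowstar v$ with a path $v \cfgarrowstar v_0$). Applying \pref{thm:nocycle} to this cycle produces a maximal component $C_{v_0,v} \in \maximalcomponents_\wpo$ containing both. Now let $C$ be the (unique) maximal component that contains $v_0$: by \ref{it:Hnest} the family $(\components_\wpo, \subseteq)$ is a forest, so $C_{v_0,v}$ and $C$, sharing $v_0$, must be nested, and by maximality of $C$ we get $C_{v_0,v} \subseteq C$. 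Hence $v \in C$, and since $v$ was arbitrary, $V \subseteq C$.

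Next I would unpack the form of $C$. By the definition of components, $C = \component{h}{x}_{\forwardarrowstar}$ for some $x \backwardarrow h$. Condition \ref{it:Wred} forces $x \in X$ and $h \in V$, giving the required $h$ and $x$. To show $X \subseteq C$, take any $x' \in X$; by \ref{it:Wred} there is $h' \in V$ with $x' \backwardarrow h'$, so $x' \in \component{h'}{x'}_{\forwardarrowstar} \in \components_\wpo$. Since $h' \in V \subseteq C$, the components $C$ and $\component{h'}{x'}_{\forwardarrowstar}$ share $h'$, hence by \ref{it:Hnest} are nested; by maximality of $C$ we get $\component{h'}{x'}_{\forwardarrowstar} \subseteq C$, so $x' \in C$. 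Therefore $X \subseteq C$, and combined with $V \subseteq C \subseteq V \cup X$ we obtain $C = V \cup X = \component{h}{x}_{\forwardarrowstar}$.

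Finally, I would argue uniqueness: any $C' \in \maximalcomponents_\wpo$ contains its head, which lies in $V$ (again by \ref{it:Wred}) and hence in $C$, so $C'$ and $C$ share an element; by \ref{it:Hnest} they are nested, and since both are maximal, $C' = C$. Thus $\maximalcomponents_\wpo = \{C\} = \{\component{h}{x}_{\forwardarrowstar}\}$. The main obstacle, in my view, is being careful about the direction of containment when combining \pref{thm:nocycle} with well-nesting: one needs to observe that maximal components are pairwise disjoint (a consequence of \ref{it:Hnest}), so that every vertex of $V$ lies in a \emph{unique} maximal component, which is what lets the single-component conclusion go through.
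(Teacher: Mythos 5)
Your proof is correct and follows essentially the same route as the paper's: both use strong connectivity to build a cycle through an arbitrary vertex and a fixed reference point, invoke \pref{thm:nocycle} to place all of $V$ in one maximal component, and then use \ref{it:Wred} together with the well-nesting guaranteed by \ref{it:Hnest} to absorb the exits in $X$. The only difference is presentational — you argue the $X$-case directly via nesting and maximality where the paper argues by contradiction — so no further comparison is needed.
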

\begin{proof}
  Because there exists a cycle in the graph, there must exists at least one component in the WPO.
  Let $h \in V$ and $x \in X$ be the head and exit of a maximal component in $\wpo$.
  Because $V \cup X$ contains all elements in the WPO,
  $\component{h}{x}_{\forwardarrowstar} \subseteq V \cup X$.
  Now, suppose $\component{h}{x}_{\forwardarrowstar} \nsupseteq V \cup X$.
  Then, there exists $v \in V \cup X$ such that $v \notin \component{h}{x}_{\forwardarrowstar}$.
  If $v \in V$, then there exists a cycle
  $v \cfgarrowplus h \cfgarrowplus v$, because the graph is strongly connected.
  Then, by \pref{thm:nocycle}, $v \in \component{h}{x}_{\forwardarrowstar}$, which is a contradiction.
  If $v \in X$, then there exists $w \in V$ such that $v \backwardarrow w$ by \pref{it:Wred}.
  Due to \pref{it:Hnest}, $w \notin \component{h}{x}_{\forwardarrowstar}$.
  By the same reasoning as the previous case, this leads to a contradiction.
\end{proof}

\section{Deterministic Concurrent Fixpoint Algorithm}
\label{sec:fixpoint}

This section describes a deterministic concurrent algorithm for computing
a fixpoint of an equation system. Given the equation system $\Value =
F(\Value)$ with dependency graph $G(V, \cfgarrow)$, we first construct a WPO
$\wpo (V, X, \forwardarrow, \backwardarrow)$. The algorithm in
\pref{fig:RulesWPO} uses $\wpo$ to compute the fixpoint of $\Value =
F(\Value)$. It defines a concurrent iteration strategy for a WPO: equations are
applied concurrently by following the \forward{}s $\forwardarrow$, while
\backward{}s $\backwardarrow$ act as ``iterate until stabilization'' operators,
checking the stabilization at the exits and iterating the components.

\newcommand{\appf}{\texttt{Apply$F$}}
\newcommand{\cstab}{\texttt{ComponentStabilized}}
\newcommand{\pred}{\texttt{NumSchedPreds}}
\newcommand{\cpred}{\texttt{NumOuterSchedPreds}}
\newcommand{\resetcd}{\texttt{Set$\Map$ForComponent}}
\begin{figure}[t]
  \begin{subfigure}[b]{\linewidth}
  {\small
  \begin{mathpar}
    \inferrule*[right=$\InitRule$]
    { }
    { \textrm{forall } v \in V, \Value[v \mapsto (\textrm{$v$ is entry ? } \top  :  \bot)] \\
      \textrm{forall } v \in V \cup X, \Map[v \mapsto 0]
    }
    \and
    \inferrule*[right=$\NERule$]
    { v \in V \\ \Map(v) = \pred{}(v) }
    { \Map[v \mapsto 0] \\ \appf{}(v) \\ \textrm{forall } v \forwardarrow w, \Map[w \mapsto (\Map[w] + 1)] }
    \and
    \inferrule*[right=$\ECRule$]
    { x \in X \\ \Map(x) = \pred{}(x) \\ \cstab{}(x) }
    { \Map[x \mapsto 0] \\ \textrm{forall } x \forwardarrow w, \Map[w \mapsto (\Map(w) + 1)] }
    \and 
    \inferrule*[right=$\ENCRule$]
    { x \in X \\ \Map(x) = \pred{}(x) \\ \neg\cstab{}(x) }
    { \Map[x \mapsto 0] \\ \resetcd{}(x) }
  \end{mathpar}
  }%
  \end{subfigure}
  \begin{subfigure}[b]{\linewidth}
    {\small
      \begingroup 
      \setlength{\belowdisplayskip}{0pt}%
      \setlength{\belowdisplayshortskip}{0pt}%
      \begin{align*}
        \appf{}(v) &\eqdef \Value[v \mapsto \big(v \in \text{ image of } \backwardarrow \textrm{ ? } \Value(v) \widen F_{v}(\Value) \textrm{ : } F_{v}(\Value)\big)] \\
        \cstab{}(x) &\eqdef \exists h \in V. x \backwardarrow h \wedge F_{h}(\Value) \sqsubseteq \Value(h) \\
        \resetcd{}(x) &\eqdef \textrm{forall } v \in C_x, \Map[v \mapsto \cpred{}(v, x)] \\
        \pred{}(v) &\eqdef |\{ u \in V \cup X \mid u \forwardarrow v \}| \\
        \cpred{}(v, x) &\eqdef |\{ u \in V \cup X \mid u \forwardarrow v, u \notin C_x, v \in C_x\}|
      \end{align*}
      \endgroup %
    }%
  \end{subfigure}
  \caption{
    Deterministic concurrent fixpoint algorithm for WPO.
    $\Value$ maps an element in $V$ to its value.
    $\Map$ maps an element in $V \cup X$ to its count of executed scheduling predecessors.
    Operations on $\Map$ are atomic.}
  \label{fig:RulesWPO}
  \vspace{-2ex}
\end{figure}

Except for
the initialization rule \InitRule{}, which is applied once at the beginning,
rules in \pref{fig:RulesWPO} are applied concurrently whenever some element in $ V \cup X$ satisfies
the conditions. The algorithm uses a value map $\Value$, which maps an element
in $V$ to its abstract value, and a count map $\Map$, which maps an element in
$V \cup X$ to its counts of executed scheduling predecessors. Access to the value map
$\Value$ is synchronized by \forward{}s, and operations on $\Map$ are assumed to
be atomic. Rule \InitRule{} initializes values for elements in $V$ to $\bot$
except for the entry of the graph, whose value is initialized to $\top$. The
counts for elements in $V \cup X$ are all initialized to 0.

Rule \NERule{} applies to a non-exit element $v \in V$ whose scheduling predecessors
are all executed ($\Map(v) = \pred(v)$). This rule applies the
function $F_v$ to update the value $\Value_v$ ($\appf(v)$). Definition of the
function $\appf$ shows that the widening is applied at the image of
$\backwardarrow$ (see \pref{thm:widening}). The rule then notifies the
scheduling successors of $v$ that $v$ has executed by incrementing their counts.
Because elements within a component can be iterated multiple times, the
count of an element is reset after its execution. If there is no component in
the WPO, then only the \NERule{} rule is applicable, and the algorithm reduces to a
DAG scheduling algorithm.

Rules \ECRule{} and \ENCRule{} are applied to an exit $x$ ($x \in X$)
whose scheduling predecessors are all executed ($\Map(x) = \pred(x)$).
If the component $C_x$ is stabilized, \ECRule{} is applied, and \ENCRule{} otherwise.
A component is \emph{stabilized} if iterating it once more does not change the values
of elements inside the component.
Boolean function \cstab{} checks the stabilization of $C_x$ by checking the stabilization of its head (see \pref{thm:convergence}).
Upon stabilization, rule \ECRule{} notifies the scheduling successors of $x$
and resets the count for $x$. 

\begin{example}
  An iteration sequence generated by rules in \pref{fig:RulesWPO} for WPO $\wpo_1$ (\pref{fig:wpo1}) is:

  \newcommand{\PreserveBackslash}[1]{\let\temp=\\#1\let\\=\temp}
\newcolumntype{C}[1]{>{\PreserveBackslash\centering}p{#1}}
\newcolumntype{R}[1]{>{\PreserveBackslash\raggedleft}p{#1}}
\newcolumntype{L}[1]{>{\PreserveBackslash\raggedright}p{#1}}
\begin{center}
{\small
\begin{tabular}{c|cccccccccccccc}
  & \multicolumn{5}{l}{Time step in $\mathbb{N}$ $\longrightarrow$} \\
\midrule
  Scheduled element    & 1 & 2  & 3 & 4 & $x_3$ & 3     & 4 & $x_3$ & 3 & 4     & $x_3$ & 5 & $x_2$ & 10 \\
    $u \in V \cup X$   &   &    & 6 & 7 & 8     & $x_6$ & 6 & 7     & 8 & $x_6$ \\
                       &   &    &   & 9 &       &       &   & 9        \\
\end{tabular}
}
\end{center}

  The initial value of $\Map(8)$ is 0. Applying \NERule{} to $7$ and
  $9$ increments $\Map(8)$ to 2. $\Map(8)$ now equals $\pred(8)$, and \NERule{}
  is applied to $8$.
  Applying \NERule{} to $8$ updates $\Value_8$ by applying the function $F_8$,
  increments $\Map(x_6)$, and resets $\Map(8)$ to 0.
  Due to the reset, same thing happens when $C_6$ is iterated once more.
  The initial value of $\Map(x_6)$ is $0$. Applying \NERule{} to $8$ increments
  $\Map(x_6)$ to 1, which equals $\pred(x_6)$. The stabilization of component
  $C_6$ is checked at $x_6$. If it is stabilized, \ECRule{} is applied to $x_6$,
  which increments $\Map(5)$ and resets $\Map(x_6)$ to~$0$.
  \qef
\end{example}

The rule \ENCRule{} is applied if the component $C_x$ is not stabilized.
Unlike rule \ECRule{}, \ENCRule{} does not notify the scheduling successors of $x$. 
It resets the counts for each element in $C_x$ as well as that for $x$.
$\resetcd{}(x)$ sets the count for $v \in C_x$ to $\cpred(v, x)$, which is the number of its
scheduling predecessors not in $C_x$.
In particular, the count for the head of
$C_x$, whose scheduling predecessors are all not in $C_x$, is set to the number
of all scheduling predecessors, allowing rule \NERule{} to be applied to the
head.
The map $\cpred(v, x)$ can be computed during WPO construction.

\begin{example}
  \label{exa:IrreducibleGraph}
  Let $G_2$ be 
   \raisebox{-12pt}{
  \begin{tikzpicture}[auto,node distance=.7cm,font=\small]
    \tikzstyle{every node} = [rectangle, draw, inner sep=0pt,minimum size=2.5ex]
    \node (1) at (0,0) {$1$};
    \node [right of=1] (2) {$2$};
    \node [right of=2] (3) {$3$};
    \node [right of=3] (4) {$4$};
    \node [above right of=2, node distance=.6cm] (5) {$5$};
    \node [below right of=1, node distance=.6cm] (6) {$6$};

    \path (1) edge[cfgedge] (6.west); 
    \path (1) edge[cfgedge] (2); 
    \path (2) edge[cfgedge] (3); 
    \path (2.north) edge[cfgedge] (5.west); 
    \path (3) edge[cfgedge] (4); 
    \path (4) edge[cfgedge, bend right=20] (3); 
    \path (6) edge[cfgedge] (4.south); 
    \path (5) edge[cfgedge] (4.north); 
    \path (3) edge[cfgedge, bend right=20] (2); 
  \end{tikzpicture}
  }%
  and its WPO $\wpo_2$ be
  \raisebox{-12pt}{
  \begin{tikzpicture}[auto,node distance=.7cm,font=\small]
    \tikzstyle{every node} = [circle, draw, inner sep=0pt,minimum size=2.5ex]
    \node (1) {$1$};
    \node [right of=1] (2) {$2$};
    \node [right of=2] (3) {$3$};
    \node [above right of=2, node distance=.6cm] (5) {$5$};
    \node [right of=3] (4) {$4$};
    \node [right of=4] (3') {$x_3$};
    \node [right of=3'] (2') {$x_{2}$};
    \node [below right of=1, node distance=.6cm] (6) {$6$};
    
    \path (1) edge[forward] (2); 
    \path (1) edge[forward] (6); 
    \path (2) edge[forward] (3); 
    \path (2) edge[forward] (5); 
    \path (3) edge[forward] (4); 
    \path (4) edge[forward] (3'); 
    \path (3') edge[forward] (2'); 
    \path (5) edge[forward] (4); 
    \path (6) edge[forward] (4.south); 
    \path (3') edge[backward, bend left=30] (3); 
    \path (2') edge[backward, bend left=30] (2); 
  \end{tikzpicture}
  }.
  An iteration sequence generated by the concurrent fixpoint algorithm for WPO $\wpo_2$ is:

  \begin{center}
{\small
\begin{tabular}{c|cccccccccccccc}
  & \multicolumn{5}{l}{Time step in $\mathbb{N}$ $\longrightarrow$} \\
\midrule
  Scheduled element  & 1 & 2 & 3 & 4 & $x_3$ & 3 & 4 & $x_3$ & $x_2$ & 2 & 3 & 4 & $x_3$ & $x_2$  \\
   $u \in V \cup X$  &   & 6 & 5 &   &       &   &   &       &       &   & 5  \\
\end{tabular}
}
\end{center}

  Consider the element $4$, whose scheduling predecessors in the WPO are $3$,
  $5$, and $6$. Furthermore, $4 \in C_3$ and $4 \in C_2$ with $C_3 \subsetneq
  C_2$. After \NERule{} is applied to $4$, $\Map(4)$ is reset to 0. Then, if the
  stabilization check of $C_3$ fails at $x_3$, \ENCRule{} sets $\Map(4)$ to 2,
  which is $\cpred(4, x_3)$. If it is not set to 2, then the fact that elements $5$
  and $6$ are executed will not be reflected in $\Map(4)$, and the iteration over
  $C_3$ will be blocked at element $4$. If the stabilization check of
  $C_2$ fails at $x_2$, \ENCRule{} sets $\Map(4)$ to $\cpred(4,
  x_2) = 1$. %
  \qef
\end{example}

In \appf{}, the image of $\backwardarrow$ is chosen as the set of
widening points. These are heads of the components. The following theorem proves
that the set of component heads is an admissible set of the widening points,
which guarantee the termination of the fixpoint computation:

\begin{theorem}
  \label{thm:widening}
  Given a dependency graph $G(V, \cfgarrow)$ and its WPO $\wpo(V, X,
  \forwardarrow, \backwardarrow)$, the set of component heads is an
  admissible set of widening points.
\end{theorem}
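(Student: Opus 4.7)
The plan is to reduce admissibility directly to \pref{thm:feedback}. Recall that a set of widening points is admissible when every cycle in the dependency graph $G(V,\cfgarrow)$ contains at least one widening point, so that the upward chain along any cycle is eventually cut by a widening application, guaranteeing termination of the fixpoint computation.

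First, I would observe that the widening points chosen in $\appf$ are exactly the image of $\backwardarrow$, which, by \pref{def:WPO} together with the definition of components, coincides with the set of component heads $H \eqdef \{h \in V \mid \exists x \in X.\ x \backwardarrow h\}$. So the goal reduces to showing that every cycle in $G$ hits some $h \in H$.

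Next, I would invoke \pref{thm:feedback}, which establishes that $B_\wpo$ is a feedback edge set of $G$. By \pref{def:WPOBackedges}, every $(u,v)\in B_\wpo$ satisfies $x \backwardarrow v$ for some $x \in X$, so the target $v$ is by definition a component head, i.e., $v \in H$. Since $B_\wpo$ is a feedback edge set, every cycle in $G$ must contain at least one edge $(u,v)\in B_\wpo$, and thus at least one vertex in $H$. This yields the desired admissibility.

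I do not expect a serious obstacle here: the nontrivial content is already packaged in \pref{thm:feedback} and in the structural conditions \ref{it:Wbd} and \ref{it:Wred}, which together ensure that cycle-breaking edges always point into component heads. The only subtlety to keep in mind is to check that the widening applied by $\appf$ really does take place at every $v$ in the image of $\backwardarrow$ (and nowhere else), so that the cut identified via $B_\wpo$ does carry an actual widening operator; this is immediate from the definition of $\appf$ in \pref{fig:RulesWPO}.
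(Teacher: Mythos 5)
Your proposal is correct and follows essentially the same route as the paper's own proof: reduce to \pref{thm:feedback}, note that every edge in $B_\wpo$ targets a component head by \pref{def:WPOBackedges}, and conclude that the heads form a feedback vertex set and hence an admissible set of widening points. The extra detail you spell out (every cycle must contain an edge of $B_\wpo$, whose target is a head) is exactly the step the paper compresses into ``consequently, the set of component heads is a feedback vertex set.''
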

\begin{proof}
  \pref{thm:feedback} proves that $B_\wpo \eqdef \{ (u, v) \in \cfgarrow \mid
  \exists x \in X. u \in \component{v}{x}_{\forwardarrowstar} \land  x
  \backwardarrow v\}$ is a feedback edge set. Consequently, the set of component
  heads $\{ h \mid \exists x\in X. x \backwardarrow h\}$ is a feedback
  vertex set. Therefore, the set $W$ is an admissible set of widening points
  \cite{kn:CC77}.
\end{proof}

\begin{example}
  The set of component heads $\{2, 3, 6\}$ is an admissible set of widening
  points for the WPO $\wpo_1$ in \pref{fig:wpo1}.
  \qef
\end{example}

The following theorem justifies our definition of \cstab{}; viz., checking the
stabilization of $\Value_h$ is sufficient for checking the stabilization of the
component $C_h$.

\begin{theorem}
  \label{thm:convergence}
  During the execution of concurrent fixpoint algorithm with WPO $\wpo(V, X, \forwardarrow, \backwardarrow)$,
  stabilization of the head $h$ implies the stabilization of the component $C_h$ at its exit
  for all $C_h \in \components_{\wpo}$.
\end{theorem}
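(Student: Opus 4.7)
The plan is to prove the theorem by induction on the nesting depth of $C_h$ in $\wpo$. When $\cstab{}(x)$ is evaluated at the exit $x$ of $C_h$, we have $\Map(x) = \pred{}(x)$, so every element of $C_h$ has already been processed and every nested component has stabilized during the just-finished outer iteration of $C_h$. I will show that if $F_h(\Value) \sqsubseteq \Value(h)$ at this moment, then re-iterating $C_h$ once more leaves every value in $C_h$ unchanged, which is exactly what component stabilization means.

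The argument rests on three ingredients: (i)~the standard widening property $y \sqsubseteq x \Rightarrow x \widen y = x$, which guarantees that re-applying \appf{} at a head whose $F$-image is already subsumed leaves the head's value intact; (ii)~condition \ref{it:Hexit} together with \pref{thm:WPODependencies}, which force every graph-level edge $u \cfgarrow v$ with $v$ not a head to satisfy $u \forwardarrowplus v$, since by \pref{def:WPOBackedges} back edges only target heads; and (iii)~the fact that every value outside $C_h$ is frozen while $C_h$ iterates, because condition \ref{it:Hexit} routes all cross-boundary dependencies out of $C_h$ through the exit $x$, which has not yet fired \ECRule{}.

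For the base case, assume $C_h$ has no nested components and linearize its elements by a topological extension of $\forwardarrow_{\downharpoonright C_h}$, so $h$ comes first. The head is re-computed first and, by (i), its value does not change. Then for each subsequent $v \in C_h \setminus \{h\}$, $v$ is not a head, so by (ii) every graph-level predecessor $u$ of $v$ satisfies $u \forwardarrowplus v$; such $u$ is either outside $C_h$ (frozen by (iii)) or inside $C_h$ and, by a secondary induction on topological position, already unchanged. Hence $F_v(\Value)$ produces the same value as in the previous iteration, so $\Value(v)$ is unchanged.

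For the inductive step, $C_h$ has nested components; the head and all non-nested-head elements are handled verbatim as above. The interesting case is a nested head $v$ of some $C'_v \subsetneq C_h$: by (ii) and \ref{it:Wbd}, the graph-level predecessors of $v$ split into $\forwardarrow$-predecessors outside $C'_v$ (unchanged by the secondary IH and (iii)) and back-edge predecessors inside $C'_v$ (untouched since $C'_v$ last stabilized, because once \ECRule{} fires at $x_v$ no element of $C'_v$ is re-scheduled until the outer iteration resumes). Therefore $F_v(\Value)$ evaluates to exactly the value that witnessed $\cstab{}(x_v)$ in the just-finished outer iteration, so $F_v(\Value) \sqsubseteq \Value(v)$ still holds and (i) gives that $\Value(v)$ is unchanged after re-scheduling $v$; the outer IH on nesting depth applied to $C'_v$ then yields that re-iterating $C'_v$ preserves every value inside it. The main obstacle is precisely this last step: carefully justifying that the inner stabilization witness for $v$ survives intact from one outer iteration of $C_h$ to the next, which hinges on the bookkeeping observation that no graph-level predecessor of $v$ can be modified between $\cstab{}(x_v)$ succeeding and $v$ being re-scheduled.
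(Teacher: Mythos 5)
Your argument is correct, and its mathematical core is the same as the paper's: every graph edge $u \cfgarrow v$ into an element of $C_h$ is dispatched by \pref{it:Wbd} either to a scheduling constraint $u \forwardarrowplus v$, whose source cannot be modified after $v$'s last update because of \pref{it:Hexit}, or to a back edge into a nested component $C_v$, whose head's stabilization is re-checked at the exit of $C_v$ after $u$'s last update. The difference is only in packaging. The paper argues by contradiction in one step --- pick a non-stabilized $v \in C_h$, extract a predecessor $u$ whose value changed after the last update of $\Value_v$, and refute both cases --- while you run the same dichotomy forward as a double induction (outer on nesting depth, inner on a topological linearization of the scheduling constraints inside $C_h$). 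What your version buys is explicitness: it surfaces the widening identity $y \sqsubseteq x \Rightarrow x \widen y = x$, which the paper's proof (and the soundness of checking stabilization only at the head) silently relies on, and it spells out why the stabilization witness of a nested head survives from one outer iteration to the next, which the paper compresses into its final sentence. What the paper's version buys is brevity, since it never has to formalize the hypothetical extra iteration of $C_h$ as an explicit schedule.
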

\begin{proof}
  Suppose that there exists an element $v \in C_h$ such that $v$ is not stabilized
  although the head $h$ is stabilized.
  That is, suppose that $\Value_v$ changes when $C_h$ is iterated once more.
  For this to be possible, there must exist $u \in V$ such that $u \cfgarrow v$ and
  $\Value_u$ changed after the last update of $\Value_v$.
  By $u \cfgarrow v$ and \pref{it:Wbd}, either (i)~$u \forwardarrowplus v$ or 
  (ii)~$u \in \component{v}{x}_{\forwardarrowstar} = C_v$ and $x \backwardarrow v$ for some $x \in X$.
  It cannot be the case~(i), because $u \forwardarrowplus v$ and \pref{it:Hexit}
  imply that $\Value_u$ cannot be updated after the last update of $\Value_v$.
  Therefore, it should always be the case~(ii). By \pref{it:Hnest}, $C_v \subsetneq C_h$.
  However, because $u \in C_v$, our algorithm checks the stabilization of $v$ at
  the exit of $C_v$
  after the last update of $\Value_u$. This contradicts the assumption that
  $\Value_u$ changed after the last update of $\Value_v$. 
\end{proof}

A WPO $\wpo(V, X, \forwardarrow, \backwardarrow)$ where
$V = \{v\}$, $X = \forwardarrow = \backwardarrow = \emptyset$
is said to be a \emph{trivial WPO}, which is represented as
\raisebox{-2.5pt}{
  \begin{tikzpicture}[auto,node distance=.7cm,font=\small]
    \tikzstyle{every node} = [circle, draw, inner sep=0pt,minimum size=2.5ex]
    \node (v) {$\small v$};
  \end{tikzpicture}
}. It can only be a WPO for a trivial SCC with vertex $v$.
A WPO $\wpo(V, X, \forwardarrow, \backwardarrow)$ where
$V = \{h\}$, $X = \{x\}$, $\forwardarrow = \{(h, x)\}$, $\backwardarrow = (x, h)$
is said to be a \emph{self-loop WPO}, and is represented as
\raisebox{-2pt}{
  \begin{tikzpicture}[auto,node distance=.7cm,font=\small]
    \tikzstyle{every node} = [circle, draw, inner sep=0pt,minimum size=2.5ex]
    \node (h) {$\small h$};
    \node [right of=h] (x) {$\small x$};
    \path (h) edge[forward] (x); 
    \path (x) edge[backward, bend right=30] (h); 
  \end{tikzpicture}
}. It can only be a WPO for a trivial SCC with vertex $h$ or a single vertex $h$ with a self-loop.

The following theorem proves that the concurrent fixpoint algorithm in
\pref{fig:RulesWPO} is deterministic.

\begin{theorem}
  \label{thm:deter}
  Given a WPO $\wpo(V, X, \forwardarrow, \backwardarrow)$ for a graph $G(V, \cfgarrow)$
  and a set of monotonone, deterministic functions $\{F_v \mid v \in V\}$,
  concurrent fixpoint algorithm in \pref{fig:RulesWPO} is deterministic,
  computing the same approximation of the least fixpoint for the given set of functions.
\end{theorem}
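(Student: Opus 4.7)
The plan is to prove determinism by structural induction on the WPO, using the decomposition Theorems~\ref{thm:inductdag} and~\ref{thm:inductcomp} to peel off one maximal component at a time. The central insight is that condition~\ref{it:Hexit}, together with the exit-based stabilization check implemented by rules \ECRule{} and \ENCRule{}, makes every maximal component $C_h$ behave as an atomic operation from the outside: no element outside $C_h$ can observe an updated value of any element inside $C_h$ until \ECRule{} has fired at the exit $x$, because every \forward{} leaving $C_h$ must originate at $x$.

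For the base cases, a trivial WPO performs a single \NERule{} firing on its vertex and is obviously deterministic; a self-loop WPO alternates \NERule{} on $h$ (with widening, since $h$ lies in the image of $\backwardarrow$) with the stabilization test at $x$, which is deterministic because $F_h$, the widening operator, and the $\sqsubseteq$-test are all deterministic functions of the current state. For the inductive step, I view the top level of $\wpo$ as a DAG scheduling problem in which each maximal component is replaced by a single compound node that consumes the final values of its scheduling predecessors and produces its stabilized head value when \ECRule{} fires at its exit. By condition~\ref{it:Wbd}, any two top-level nodes that are incomparable under $\forwardarrowstar$ cannot have an edge between them in $\cfgarrow$, so their $F_v$ computations read disjoint inputs and commute; together with the atomicity of $\Map$ updates this yields a confluence argument showing that all valid linearizations of the top-level DAG produce the same final $\Value$. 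The induction hypothesis applied to each $\wpo_{\downharpoonright S_h}$ via Theorem~\ref{thm:inductcomp} then handles every nested subcomputation.

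The main obstacle I expect is making the atomic-block view rigorous, which comes down to a precise invariant on the count map: at every reachable state, $\Map(v)$ equals the number of scheduling predecessors of $v$ that have fired since the most recent reset at $v$. This invariant must be preserved across the $\resetcd$ step in \ENCRule{} (which reinitializes counts for elements of $C_h$ using $\cpred$ to account for scheduling predecessors coming from outside $C_h$) as well as across the per-firing reset inside \NERule{}, and it must imply that while $C_h$ is being reiterated no scheduling successor of $x$ can become enabled, since \ENCRule{} deliberately omits notification of $x$'s successors. A clean way to close the argument is to show that any concurrent trace can be rewritten, by commuting adjacent independent firings, into the sequential trace produced by Bourdoncle's recursive strategy applied to a linear extension of $\forwardarrowstar$ (a connection formally established in~\pref{sec:wto}), so that determinism of our concurrent algorithm reduces to that of the established sequential WTO-based one.
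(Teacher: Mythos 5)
Your proposal follows essentially the same route as the paper's proof: structural induction on the WPO using the decompositions of Theorems~\ref{thm:inductdag} and~\ref{thm:inductcomp}, the same two base cases, and the same key use of condition~\ref{it:Hexit} to make each component atomic so that iterating a sub-WPO amounts to running the algorithm on partially applied functions. Your added detail on the $\Map$ invariant and the confluence of the top-level DAG scheduling makes explicit what the paper leaves implicit, but it is the same argument.
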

\begin{proof} We use structural induction on the WPO $\wpo$ to show this.

  \noindent \textsc{[Base case]:} 
  The two cases for the base case are (i)~
  $\wpo = $
  \raisebox{-2.5pt}{
    \begin{tikzpicture}[auto,node distance=.7cm,font=\small]
      \tikzstyle{every node} = [circle, draw, inner sep=0pt,minimum size=2.5ex]
      \node (v) {$\small v$};
    \end{tikzpicture}
  }
  and (ii)~
  $\wpo = $
  \raisebox{-2pt}{
    \begin{tikzpicture}[auto,node distance=.7cm,font=\small]
      \tikzstyle{every node} = [circle, draw, inner sep=0pt,minimum size=2.5ex]
      \node (h) {$\small h$};
      \node [right of=h] (x) {$\small x$};
      \path (h) edge[forward] (x); 
      \path (x) edge[backward, bend right=30] (h); 
    \end{tikzpicture}
  }.
  If
  $\wpo = $
  \raisebox{-2.5pt}{
    \begin{tikzpicture}[auto,node distance=.7cm,font=\small]
      \tikzstyle{every node} = [circle, draw, inner sep=0pt,minimum size=2.5ex]
      \node (v) {$\small v$};
    \end{tikzpicture}
  },
  then $v$ is the only vertex in $G$.
  Functions are assumed to be deterministic, so
  applying the function $F_v()$ in rule $\NERule$ of \pref{fig:RulesWPO}
  is deterministic. Because $F_v()$ does not take any arguments, the computed value $\Value_v$ is a unique fixpoint of $F_v()$.

  If
  $\wpo = $
  \raisebox{-2pt}{
    \begin{tikzpicture}[auto,node distance=.7cm,font=\small]
      \tikzstyle{every node} = [circle, draw, inner sep=0pt,minimum size=2.5ex]
      \node (h) {$\small h$};
      \node [right of=h] (x) {$\small x$};
      \path (h) edge[forward] (x); 
      \path (x) edge[backward, bend right=30] (h); 
    \end{tikzpicture}
  },
  then $h$ is the only vertex in $G$.
  If $h$ has a self-loop, function $F_h(\Value_h)$ and widening operator $\widen$ may need to be applied multiple
  times to reach a post-fixpoint (approximation of the least fixpoint) of $F_h(\Value_h)$.
  Rule $\NERule$ applies them once on $\Value_h$ and signals the exit $x$.
  If the post-fixpoint of $F_h(\Value_h)$ is not reached, $\cstab$ returns false, and rule $\ENCRule$
  in \pref{fig:RulesWPO} applies rule $\NERule$ on $h$ again.
  If the post-fixpoint is reached, $\cstab$ returns true, and rule $\ECRule$
  in \pref{fig:RulesWPO} stops the algorithm.
  Because $F_h(\Value_h)$ and $\widen$ are deterministic, each iteration is deterministic,
  and the entire sequence of iterations are deterministic.
  The computed value $\Value_h$ is a post-fixpoint of $F_h(\Value_h)$.

  \noindent \textsc{[Inductive step]:}
  By condition \pref{it:Hnest} and \pref{thm:inductdag},
  $\wpo$ can be decomposed into a set of WPOs of its maximal components and trivial WPOs.
  The two cases for the inductive step are (i)~the decomposition of $\wpo$
  is a single WPO of the maximal component
  and (ii)~$\wpo$
  is decomposed into multiple WPOs.

  If the decomposition of $\wpo$ is a single WPO of the maximal component
  $\component{h}{x}_{\forwardarrowstar}$,
  then by \pref{thm:inductcomp},
  $\wpo = $
  \raisebox{-6pt}{
    \begin{tikzpicture}[auto,node distance=.7cm,font=\small]
      \tikzstyle{every node} = [circle, draw, inner sep=0pt,minimum size=2.5ex]
      \node (h) at (0, 0) {$\small h$};
      \node [rectangle, thick, minimum size=4ex, right of=h, node distance=1cm] (w) {$\ \wpo_{\downharpoonright S}\ $};
      \node [right of=w, node distance=1cm] (x) {$\small x$};

      \path (h) edge[forward] ([yshift=-1.5mm] w.north west); 
      \path (h) edge[forward] (w); 
      \path (h) edge[forward] ([yshift=1.5mm] w.south west); 
      \path ([yshift=-1.5mm] w.north east) edge[forward] (x); 
      \path (w) edge[forward] (x); 
      \path ([yshift=1.5mm] w.south east) edge[forward] (x); 
      \path (x) edge[backward, bend right=35] (h); 
    \end{tikzpicture}
  },
  where $S \eqdef \component{h}{x}_{\forwardarrowstar} \setminus \{h, x\}$ and $\wpo_{\downharpoonright S}$ is a WPO. By the induction hypothesis,
  the fixpoint algorithm is deterministic for $\wpo_{\downharpoonright S}$. 
  The head $h$ of $\wpo$ is its unique source, so each iteration begins with
  the application of rule $\NERule$ on $h$.
  After applying $F_h(\cdot)$ and $\widen$, rule $\NERule$ signals all its scheduling successors,
  initiating the iteration over $\wpo_{\downharpoonright S}$.
  Because all sinks of $\wpo_{\downharpoonright S}$ are connected to the exit $x$ of $\wpo$,
  $x$ will be processed after the iteration finishes.
  Thus, $\Value_h$ remains fixed during the iteration.
  A single iteration over $\wpo_{\downharpoonright S}$ is identical to running the fixpoint algorithm on
  $\wpo_{\downharpoonright S}$ with the set of functions $\{F_v' \mid v \in V' \}$, where
  $\wpo_{\downharpoonright S}$ is a WPO for subgraph
  $G_{\downharpoonright S}$ and function $F_v'$ is a partial application of $F_v$ that binds
  the argument that take $\Value_h$ to its current value.
  The number of functions and the arity of each function decrease.
  Because $F_h(\cdot)$ and $\widen$ are deterministic, and each iteration over
  $\wpo_{\downharpoonright S}$ is deterministic, it is deterministic for $\wpo$.
  The algorithm iterates until the post-fixpoint of $F_h(\cdot)$ is reached,
  and by \pref{thm:convergence}, 
  the computed value $\Value_v$ is a post-fixpoint of $F_v(\cdot)$ for all $v \in V$.

  If $\wpo$ is decomposed into multiple WPOs,
  then by the induction hypothesis,
  the fixpoint algorithm deterministically computes the post-fixpoints for all sub-WPOs.
  Let $\wpo_i(V_i, X_i, \forwardarrow_{\downharpoonright V_i \cup X_i}, \backwardarrow_{\downharpoonright V_i \cup X_i})$ be an arbitrary sub-WPO.
  For any $u \in V \setminus V_i$ such that $u \cfgarrow v$, we have $u \forwardarrowplus v$
  by \pref{it:Wbd} and \pref{it:Hnest}.
  Hence, $v$ is processed after $u$.
  Combined with \pref{it:Hexit}, $\Value_u$ remains unchanged during the iteration of
  $\wpo_i$.
  Single iteration over $\wpo_i$ is equal to running the fixpoint algorithm on
  $\wpo_i$ with the set of the functions $\{F_v' \mid v \in V'\}$,
  where function $F_v'$ is a partial application of $F_v$ that binds the
  arguments in $V \setminus V_i$ to their current values.
  The number of functions decreases, and the arity of the functions does not increase.
  The outer scheduling predecessors of $W_i$ can be ignored in the iterations by $\resetcd$ in rule $\ENCRule$.
  Therefore, by the induction hypothesis, each iteration over $\wpo_i$ is deterministic,
  and because the choice of $\wpo_i$ is arbitrary,
  the algorithm is deterministic for $\wpo$.
  Furthermore, $(v, u) \notin \cfgarrowplus$ for any $u \in V \setminus V_i$ such that $u \cfgarrow v$,
  because its negation would contradict \pref{thm:nocycle}.
  Therefore, change in $\Value_v$ does not change $\Value_u$, and $\Value_u$ is still
  a post-fixpoint of $F_u(\cdot)$.
\end{proof}

\section{Algorithms for WPO Construction}
\label{sec:Algorithm}

\newcommand{\algtd}{\texttt{ConstructWPO\textsuperscript{TD}}}
\newcommand{\algtdscc}{\texttt{ConstructWPO\textsuperscript{TD}\textsubscript{SCC}}}
\newcommand{\algbu}{\texttt{ConstructWPO\textsuperscript{BU}}}

This section presents two algorithms for constructing a WPO for a graph $G(V, \cfgarrow)$. 
The first algorithm, \algtd{}, is a top-down recursive algorithm that is
inefficient but intuitive~(\pref{sec:AlgorithmTD}).
The second one, \algbu{}, is an efficient bottom-up iterative algorithm that has
almost-linear time complexity~(\pref{sec:AlgorithmBU}).
Both algorithms do not introduce superfluous \forward{}s 
that could restrict the parallelism during the fixpoint computation.

\subsection{Top-down Recursive Construction}
\label{sec:AlgorithmTD}

\pref{alg:td} presents a top-down recursive algorithm \algtd{}, which acts as a
proxy between the axiomatic characterization of WPO in \pref{sec:wpo} and the
efficient construction algorithm \algbu{} in \pref{sec:AlgorithmBU}.

\SetKwProg{Def}{def}{:}{end}
\SetKwFunction{TWPOKw}{trivialWPO}
\SetKwFunction{SWPOKw}{self-loopWPO}
\SetKwFunction{CWPOSCCKw}{sccWPO}

\begin{figure}[t]
  \begin{minipage}[t]{0.96\textwidth}
\SetInd{.3em}{0.5em}
  \vspace{0pt}
  \begin{algorithm}[H]
    \small
    \SetKw{andKw}{and}
    \SetKw{returnKw}{return}
    \SetKw{stKw}{s.t.}
    \DontPrintSemicolon
    \setlength{\columnsep}{15pt}
    \Indmm
    \KwIn{Directed graph $G(V, \cfgarrow)$, Depth-first forest $D$}
    \KwOut{WPO $\wpo(V, X, \forwardarrow, \backwardarrow)$}
    \Indpp
    \begin{multicols}{2}
      \small
      \tikzmk{A}
      $G_1, G_2, \ldots, G_k \coloneqq SCC(G)$\label{li:td-scc}\Comment*[r]{Maximal SCCs.}
      \ForEach{$i \in [1, k]$} {
        \Comment*[l]{WPOs for SCCs.}
        $(V_i, X_i, \forwardarrow_i, \backwardarrow_i), h_i, x_i \coloneqq$ \CWPOSCCKw{$G_i, D$}\label{li:td-rec}
      }
      $V, X, \forwardarrow, \backwardarrow  \coloneqq \bigcup\limits_{i=1}^{k} V_i, \bigcup\limits_{i=1}^{k} X_i, \bigcup\limits_{i=1}^{k} \forwardarrow_i, \bigcup\limits_{i=1}^{k} \backwardarrow_i$\label{li:td-merge}\;
      \Comment*[l]{Edges between different maximal SCCs.}
      \ForEach{$u \cfgarrow v$ \stKw $u \in V_i \land v \in V_j \land i \neq j$}{ \label{li:td-for}
        $\forwardarrow \coloneqq \forwardarrow \cup \{(x_i, v)\}$\label{li:td-forwardarrow}
      }
      \returnKw{$(V, X, \forwardarrow, \backwardarrow)$}
      \BlankLine
      \tikzmk{B} \boxit{gray}{.97}
      \tikzmk{A}
      \Def{\TWPOKw{$\etch$}}{
      \tikzmk{B} \boxit{cyan}{.92}
        \returnKw{$(\{\etch\}, \emptyset, \emptyset, \emptyset), \etch, \etch$}\label{li:twpo-ret}
      }
      \tikzmk{A}
      \Def{\SWPOKw{$\etch$}}{
      \tikzmk{B} \boxit{cyan}{.92}
        $x_\etch \coloneqq$ new exit\label{li:swpo-newvar}\;
        \returnKw{$(\{\etch\}, \{x_\etch\}, \{(\etch, x_\etch)\}, \{(x_\etch, \etch)\}), \etch, x_\etch$}\label{li:swpo-ret}
      }
      \tikzmk{A}
      \Def{\CWPOSCCKw{$G, D$}}{
      \tikzmk{B} \boxit{cyan}{.92}
        \Comment*[l]{$G$ is strongly connected.}
        $\etch \coloneqq \argmin_{v\in V}DFN(D, v)$\label{li:tdscc-etch}\Comment*[r]{Minimum DFN.}
        $B \coloneqq \{ (v, \etch) \mid  v \in V \text{ and } v \cfgarrow \etch\}$\label{li:tdscc-B}\;
        \lIf{$|B| = 0$}{
          \returnKw{\TWPOKw{$\etch$}}\label{li:tdscc-twpo}
        }
        \lIf{$|V| = 1$}{
          \returnKw{\SWPOKw{$\etch$}}\label{li:tdscc-slwpo}
        }
        $x_\etch \coloneqq$ new exit\label{li:tdscc-newvar}\;
        $V' \coloneqq V \cup \{x_\etch \} \setminus \{\etch\}$\label{li:tdscc-removal}\;
        $\cfgarrow' \coloneqq \cfgarrow_{\downharpoonright V'} \cup \{ (v, x_\etch) \mid (v, \etch) \in B\}$\label{li:tdscc-modcfg}\;
        \Comment*[l]{WPO for modified graph.}
        $V', X', \forwardarrow', \backwardarrow' \coloneqq$ \algtd{$((V', \cfgarrow'), D_{\downharpoonright V'})$}\label{li:tdscc-rec}\;
        $X \coloneqq X' \cup \{x_\etch \}$\label{li:tdscc-move}\;
        $\forwardarrow \coloneqq \forwardarrow' \cup \{ (\etch, v) \mid  v \in V \text{ and } \etch \cfgarrow v\}$\label{li:tdscc-forwardarrow}\;
        $\backwardarrow \coloneqq \backwardarrow' \cup \{(x_\etch, \etch)\}$\label{li:tdscc-backwardarrow}\;%
        \returnKw{$(V, X, \forwardarrow, \backwardarrow), \etch, x_\etch$}
      }
    \end{multicols}
    \vspace{-2ex}
    \caption{\algtd{}$(G, D)$}
    \label{alg:td}
  \end{algorithm}
\SetInd{.5em}{1em}  
  \end{minipage}
\end{figure}

\algtd{} is parametrized by the depth-first forest (DFF) of the graph $G$, and it may yield
a different WPO for a different DFF.
\algtd{} begins with the identification of the maximal strongly connected
components (SCCs) in $G$ on \pref{li:td-scc}.
An SCC $G_i$ is maximal if there does not exists another SCC that contains all vertices and edges of $G_i$.
A WPO for an SCC $G_i$ is constructed by a call \CWPOSCCKw{$G_i, D$} on
\pref{li:td-rec}. This call returns a WPO
$(V_i, X_i, \forwardarrow_i, \backwardarrow_i)$, head $h_i$, and exit $x_i$.
In case of trivial SCCs, the head and the exit are assigned
the vertex in $G$. In other cases, the returned value satisfies
$(x_i, h_i) \in \backwardarrow_i$ and $\component{h_i}{x_i}_{\forwardarrow_i^*} = V_i \cup X_i$~(\pref{lem:structure}).
\pref{li:td-merge} initializes the WPO for the graph $G$ to union of the WPOs for the
SCCs, showing the inductive structure mentioned in \pref{thm:inductdag}.
On \pref{li:td-forwardarrow}, \forward{}s are added for the dependencies
that cross the maximal SCCs. $x \forwardarrow v$ is added for a dependency $u \cfgarrow v$, where $x$ is the exit of the maximal component WPO that contains $u$ but not $v$.
This ensures that \pref{it:Wbd} and \pref{it:Hexit} are satisfied.

The function \CWPOSCCKw\ takes as input an SCC and its DFF, and returns a WPO, a head, and an exit for this SCC.
It constructs the WPO by removing the head $\etch$ to break the SCC,
adding the exit $x_\etch$ as a unique sink, using \algtd{} to construct a WPO
for the modified graph, and appending necessary elements for the removed head.
Ignoring the exit $x_\etch$, it shows the inductive structure mentioned in \pref{thm:inductcomp}.
\pref{li:tdscc-etch} chooses a vertex with minimum DFS numbering (DFN) as the head.
Incoming edges to the head $\etch$ are back edges for DFF $D$ on \pref{li:tdscc-B}
because $\etch$ has the minimum DFN and can reach to all other vertices in the graph.
If there are no back edges, then the input SCC is trivial, and a trivial WPO with single
element is returned on \pref{li:tdscc-twpo}.
If there is only one vertex in the SCC (with a self-loop), corresponding self-loop WPO
is returned on \pref{li:tdscc-slwpo}.
For other non-trivial SCCs, $\etch$ is removed from the graph and newly created $x_\etch$ is added as
a unique sink on Lines~\ref{li:tdscc-newvar}--\ref{li:tdscc-modcfg}. 
A call \algtd{}$((V', \cfgarrow'), D_{\downharpoonright V'})$ on \pref{li:tdscc-rec} returns a WPO for the modified
graph. Exit $x_\etch$ is moved from $V'$ to $X'$ on \pref{li:tdscc-move},
scheduling constraints regarding the head $\etch$ is added on \pref{li:tdscc-forwardarrow},
and $x_\etch \backwardarrow \etch$ is added on
\pref{li:tdscc-backwardarrow} to satisfy \pref{it:Wbd} for the removed back
edges.

\begin{example}
  Consider the graph $G_3$ in \pref{fig:cfg3}.
  $SCC(G_3)$ on \pref{li:td-scc} of \algtd{}
  returns a trivial SCC with vertex 1 and
  three non-trivial SCCs with vertex sets $\{5,6,7,8\}$, $\{2,3\}$, and $\{4\}$.
  For the trivial SCC, \CWPOSCCKw\ on \pref{li:td-rec} returns
  (\raisebox{-3pt}{
    \begin{tikzpicture}[auto,node distance=.7cm,font=\small]
      \tikzstyle{every node} = [circle, draw, inner sep=0pt,minimum size=2.5ex]
      \node (1) {$\small 1$};
    \end{tikzpicture}
  }, $1$, $1$).
  For the non-trivial SCCs,
  it returns 
  (\raisebox{-5.5pt}{
    \begin{tikzpicture}[auto,node distance=.7cm,font=\small]
      \tikzstyle{every node} = [circle, draw, inner sep=0pt,minimum size=2.5ex]
      \node (5) {$\small 5$};
      \node [right of=5] (6) {$\small 6$};
      \node [right of=6] (7) {$\small 7$};
      \node [right of=7] (6') {$\small x_6$};
      \node [right of=6'] (8) {$\small 8$};
      \node [right of=8] (5') {$\small x_5$};
      \path (5) edge[forward] (6); 
      \path (6) edge[forward] (7); 
      \path (7) edge[forward] (6'); 
      \path (6') edge[forward] (8); 
      \path (8) edge[forward] (5'); 
      \path (6') edge[backward, bend left=28] (6); 
      \path (5') edge[backward, bend right=20] (5); 
    \end{tikzpicture}
  }, $5$, $x_5$),
  (\raisebox{-3.5pt}{
    \begin{tikzpicture}[auto,node distance=.7cm,font=\small]
      \tikzstyle{every node} = [circle, draw, inner sep=0pt,minimum size=2.5ex]
      \node (2) {$\small 2$};
      \node [right of=2] (3) {$\small 3$};
      \node [right of=3] (2') {$\small x_2$};
      \path (2) edge[forward] (3); 
      \path (3) edge[forward] (2'); 
      \path (2') edge[backward, bend right=25] (2); 
    \end{tikzpicture}
  }, $2$, $x_2$), and
  (\raisebox{-3.5pt}{
    \begin{tikzpicture}[auto,node distance=.7cm,font=\small]
      \tikzstyle{every node} = [circle, draw, inner sep=0pt,minimum size=2.5ex]
      \node (4) {$\small 4$};
      \node [right of=4] (4') {$\small x_4$};
      \path (4) edge[forward] (4'); 
      \path (4') edge[backward, bend right=30] (4); 
    \end{tikzpicture}
  }, $4$, $x_4$).
  On \pref{li:td-forwardarrow},
  \forward{}s
  $1 \forwardarrow 2$, $1 \forwardarrow 5$, $x_5 \forwardarrow 3$,
  $x_5 \forwardarrow 4$, and $x_2 \forwardarrow 4$ are added for
  the edges 
  $1 \cfgarrow 2$, $1 \cfgarrow 5$, $7 \cfgarrow 3$, $8 \cfgarrow 4$,
  and $3 \cfgarrow 4$, respectively.
  The final result is identical to $\wpo_3$ in \pref{fig:wpo3}.

  Now, consider the execution of \CWPOSCCKw\ when given the SCC with vertex set $V =
  \{5, 6, 7, 8\}$ as input. On \pref{li:tdscc-etch}, the vertex $5$ is chosen as
  the head $\etch$, because it has the minimum DFN among $V$. The set $B$ on
  \pref{li:tdscc-B} is $B = \{(8,5)\}$. The SCC is modified on
  Lines~\ref{li:tdscc-newvar}--\ref{li:tdscc-modcfg}, and \algtd{}
  on \pref{li:tdscc-rec} returns \raisebox{-5.5pt}{
    \begin{tikzpicture}[auto,node distance=.7cm,font=\small]
      \tikzstyle{every node} = [circle, draw, inner sep=0pt,minimum size=2.5ex]
      \node (6) {$\small 6$};
      \node [right of=6] (7) {$\small 7$};
      \node [right of=7] (6') {$\small x_6$};
      \node [right of=6'] (8) {$\small 8$};
      \node [right of=8] (5') {$\small x_5$};
      \path (6) edge[forward] (7); 
      \path (7) edge[forward] (6'); 
      \path (6') edge[forward] (8); 
      \path (8) edge[forward] (5'); 
      \path (6') edge[backward, bend left=29] (6); 
    \end{tikzpicture}
  } for the modified graph.
  Moving $x_5$ from $V'$ to $X'$ on \pref{li:tdscc-move},
  adding $5 \forwardarrow 6$ on \pref{li:tdscc-forwardarrow},
  and adding $x_5 \backwardarrow 5$ on \pref{li:tdscc-backwardarrow}
  yields the WPO for the SCC.
  \qef
\end{example}

\begin{figure}[t]
  \centering
  \begin{subfigure}{.33\textwidth}
    \centering
    \begin{tikzpicture}[auto,node distance=.8cm,font=\small]
      \tikzstyle{every node} = [rectangle, draw, inner sep=0pt,minimum size=2.5ex]
      \node (1) at (0,0) {$1$};
      \node [right of=1] (2) {$2$};
      \node [below of=2, node distance=0.6cm] (5) {$5$};
      \node [right of=5] (6) {$6$};
      \node [right of=6] (7) {$7$};
      \node [right of=7] (8) {$8$};
      \node [right of=2, node distance=2cm] (3) {$3$};
      \node [above right of=8] (4) {$4$};
      \path (1) edge[cfgedge] (2); 
      \path (2) edge[cfgedge] (3); 
      \path (3) edge[cfgedge, bend right=25] (2); 
      \path (3) edge[cfgedge] (4); 
      \path (1) edge[cfgedge] (5); 
      \path (5) edge[cfgedge] (6); 
      \path (6) edge[cfgedge] (7); 
      \path (7) edge[cfgedge] (3); 
      \path (7) edge[cfgedge] (8); 
      \path (8) edge[cfgedge] (4); 
      \path (7) edge[cfgedge, bend right=33] (6); 
      \path (8) edge[cfgedge, bend left=25] (5); 
      \path (4) edge[cfgedge, loop above] (4); 
    \end{tikzpicture}
    \caption{}
    \label{fig:cfg3}
  \end{subfigure}
    \begin{subfigure}{.66\textwidth}
    \centering
      \begin{tikzpicture}[auto,node distance=.8cm,font=\small]
        \tikzstyle{every node} = [circle, draw, inner sep=0pt,minimum size=2.5ex]
        \node (1) at (0,0) {$1$};
        \node [right of=1] (2) {$2$};
        \node [below of=2, node distance=0.6cm] (5) {$5$};
        \node [right of=5] (6) {$6$};
        \node [right of=6] (7) {$7$};
        \node [right of=7] (6') {$x_6$};
        \node [right of=6'] (8) {$8$};
        \node [right of=8] (5') {$x_5$};
    
        \node (3) at (5,0) {$3$};
        \node [right of=3] (2') {$x_2$};
        \node [right of=2'] (4) {$4$};
        \node [right of=4] (4') {$x_4$};
        
        \path (1) edge[forward] (2); 
        \path (2) edge[forward] (3); 
        \path (3) edge[forward] (2'); 
        \path (2') edge[forward] (4); 
        \path (1) edge[forward] (5); 
        \path (5) edge[forward] (6); 
        \path (6) edge[forward] (7); 
        \path (7) edge[forward] (6'); 
        \path (6') edge[forward] (8); 
        \path (8) edge[forward] (5'); 
        \path (5') edge[forward] (3); 
        \path (5') edge[forward] (4); 
        \path (4) edge[forward] (4'); 
        \path (2') edge[backward, bend right=20] (2); 
        \path (6') edge[backward, bend right=30] (6); 
        \path (5') edge[backward, bend right=20] (5); 
        \path (4') edge[backward, bend right=30] (4); 
      \end{tikzpicture}
      \caption{}
      \label{fig:wpo3}
    \end{subfigure}
  \vspace{-2ex}
  \caption{\subref{fig:cfg3}~Directed graph $G_3$ and \subref{fig:wpo3}~WPO 
  $\wpo_3$. Vertices $V$
  are labeled using DFN;
   exits $X = \{x_2, x_4, x_5, x_8 \}$.}
  \label{fig:cfg-wpo3}
\end{figure}

Before we prove that the output of \algtd{} in \pref{alg:td} is a WPO,
we prove that the output of function \CWPOSCCKw\ satisfies the property of a WPO
for non-trivial strongly connected graph in \pref{cor:nocycle}.

\begin{lemma}
  \label{lem:structure}
  Given a non-trivial strongly connected graph $G(V, \cfgarrow)$ and its
  depth-first forest $D$, 
  the returned value $(V, X, \forwardarrow, \backwardarrow)$, $\etch$, $x_\etch$ of
  \CWPOSCCKw{$G, D$} satisfies
  $x_\etch \backwardarrow \etch$ and $\component{\etch}{x_\etch}_{\forwardarrowstar} = V \cup X$.
\end{lemma}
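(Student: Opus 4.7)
The plan is to prove this lemma by strong induction on the number of vertices $|V|$ of the strongly connected input graph $G$. The base case $|V|=1$ is immediate: since $G$ is non-trivial, vertex $\etch$ must carry a self-loop, so $|B|\ge 1$ and the algorithm takes the self-loop branch on \pref{li:tdscc-slwpo}, returning the self-loop WPO, for which $x_\etch \backwardarrow \etch$ and $\component{\etch}{x_\etch}_{\forwardarrowstar} = \{\etch, x_\etch\} = V \cup X$ hold by inspection.

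For the inductive step $|V|>1$, I would first argue $|B|\ge 1$: strong connectivity guarantees a path from any $v\neq \etch$ back to $\etch$, whose last edge lies in $B$. So the algorithm proceeds to Lines~\ref{li:tdscc-newvar}--\ref{li:tdscc-backwardarrow} and inserts $x_\etch \backwardarrow \etch$ directly on \pref{li:tdscc-backwardarrow}, establishing the first conjunct. The second conjunct $\component{\etch}{x_\etch}_{\forwardarrowstar} = V \cup X$ requires $\etch \forwardarrowstar v$ and $v \forwardarrowstar x_\etch$ for every $v \in V \cup X$. Exploiting strong connectivity of $G$ once more, a shortest $\etch$-to-$v$ path in $G$ (for $v\neq\etch$) cannot revisit $\etch$, so after dropping its leading edge it becomes a $G'$-path from some $\cfgarrow$-successor $w$ of $\etch$ to $v$; dually, every such $v$ reaches $x_\etch$ in $G'$. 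Since \pref{li:tdscc-forwardarrow} installs $\etch \forwardarrow w$ for each successor $w$ of $\etch$, I only need to convert these $G'$-reachabilities into $\forwardarrowstar'$-reachabilities inside the WPO $(V',X',\forwardarrow',\backwardarrow')$ returned by the recursive call to \algtd{} on \pref{li:tdscc-rec}.

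The conversion proceeds by analyzing the structure of \algtd{}'s output on $G'$: for each maximal SCC $S_i$ of $G'$, the induction hypothesis applied to the corresponding \CWPOSCCKw{} call on \pref{li:td-rec} gives $h_i \forwardarrowstar' u$ and $u \forwardarrowstar' x_i$ for every $u \in S_i$; meanwhile the cross-SCC \forward{}s added on \pref{li:td-forwardarrow} have the form $x_i \forwardarrow v$ whenever an edge of $G'$ crosses from $S_i$ to $v$. Concatenating these pieces along a $G'$-path that crosses SCC boundaries yields a $\forwardarrowstar'$-path of the required shape.

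The main obstacle is the case in which the $\cfgarrow$-successor $w$ of $\etch$ and the target $v$ lie in the same maximal SCC $S_i$ of $G'$, because $\forwardarrowstar'$ does not in general relate two vertices inside a single component. I would resolve this by rerouting through the head $h_i$: because $h_i$ has minimum DFN inside $S_i$, either $h_i$ is a root of the forest $D_{V'}$, in which case $\etch \cfgarrow h_i$ in $G$ and \pref{li:tdscc-forwardarrow} supplies $\etch \forwardarrow h_i$ directly; or the parent $u$ of $h_i$ in $D_{V'}$ lies outside $S_i$ in some SCC $S_j$, and a secondary induction on the (acyclic) SCC DAG of $G'$ gives $\etch \forwardarrowstar x_j \forwardarrow h_i$. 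Combining with $h_i \forwardarrowstar' v$ from the SCC-level induction hypothesis yields $\etch \forwardarrowstar v$. The argument for $v \forwardarrowstar x_\etch$ is dual, traversing the SCC DAG of $G'$ toward its unique sink $x_\etch$ and using the component-exit reachability $u \forwardarrowstar' x_i$ inside each intermediate SCC.
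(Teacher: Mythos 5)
Your proof is correct and follows essentially the same route as the paper's: induction over the recursive structure of the algorithm, using the acyclicity of the condensation DAG of the modified graph $G'$, the fact that each nested SCC's head $h_i$ has minimum DFN and is therefore the entry receiving an incoming scheduling constraint, and the fact that cross-SCC constraints on \pref{li:td-forwardarrow} emanate from exits. The only differences are cosmetic (strong induction on $|V|$ rather than the paper's structural induction with base case ``no non-trivial nested SCCs''), and your handling of the case where a $\cfgarrow$-successor of $\etch$ lands in the interior of a nested SCC is actually spelled out more carefully than the paper's one-line appeal to ``an incoming scheduling constraint to $h_i$.''
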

\begin{proof} 
  We use structural induction on the input $G$ to prove this.
  The base case is when $G$ has no non-trivial nested SCCs,
  and the inductive step uses an induction hypothesis on the non-trivial nested SCCs.

\noindent \textsc{[Base case]:}
  If the graph only has a single vertex $\etch$ and a self-loop, then
  \SWPOKw{$\etch$} is returned on \pref{li:tdscc-slwpo},
  whose value satisfies the lemma.
  Otherwise, because there are no non-trivial nested SCCs inside the graph,
  removing the head $\etch$ on \pref{li:tdscc-removal}
  removes all the cycles in the graph.
  Also, adding a new vertex $x_\etch$ on \pref{li:tdscc-newvar} and \ref{li:tdscc-modcfg}
  does not create a cycle, so the modified graph is acyclic.
  Therefore, $(V', \forwardarrow')$ on \pref{li:tdscc-rec} equals $(V', \cfgarrow')$.
  Because the input graph is strongly connected,
  every vertex in the graph is reachable from $\etch$.
  This is true even without the back edges because we can ignore the cyclic paths.
  Also, because the exit $x_\etch$ is a unique sink in the modified graph,
  $x_\etch$ is reachable from every vertex.
  It is unique because the negation would imply that there is a vertex in the
  original graph that has no outgoing edges, contradicting that the input graph is strongly
  connected.
  Finally, with changes in \pref{li:tdscc-forwardarrow} and \ref{li:tdscc-backwardarrow},
  we see that the lemma holds for the base case.

\noindent \textsc{[Inductive step]:}
  Let $G_i$ be one of the maximal non-trivial nested SCCs that is identified
  on \pref{li:td-scc}. 
  Because $h_i$ has the minimum DFN in $G_i$,
  $h_i$ must be an entry of $G_i$,
  and there must exist an incoming scheduling constraints to $h_i$.
  By the induction hypothesis, $u \forwardarrowplus h_i$ implies $u \forwardarrowplus v$
  for all $v \in (V_i \cup X_i)$ and $u \notin (V_i \cup X_i)$.
  Also, because scheduling constraints added on \pref{li:td-forwardarrow} has exits
  as their sources,
  $v \forwardarrowplus w$ implies $x_i \forwardarrowplus w$
  for all $v \in (V_i \cup X_i)$ and $w \notin (V_i \cup X_i)$.
  The graph of super-nodes (an SCC contracted to a single vertex)
  of the modified graph $(V', \cfgarrow')$ is acyclic.
  Therefore, by applying the similar reasoning as the base case on this graph of super-nodes,
  we see that the lemma holds.
\end{proof}

Armed with the above lemma, we now prove that \algtd{} constructs a WPO.

\begin{theorem}
  Given a graph $G(V, \cfgarrow)$ and its depth-first forest $D$,
  the returned value $(V, X, \forwardarrow, \backwardarrow)$
  of $\algtd{}(G, D)$ is a WPO for $G$.
\end{theorem}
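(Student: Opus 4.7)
The plan is to proceed by strong induction on the number of vertices $|V|$ of the input graph $G$, simultaneously with the mutual recursion between \algtd{} and \CWPOSCCKw{}. The base case ($|V|=0$, or a single vertex without a self-loop) is trivial. For the inductive step, I assume the theorem holds for all graphs strictly smaller than $G$ and decompose the verification into two pieces: first, that each call \CWPOSCCKw{$G_i, D$} produces a WPO for the SCC $G_i$ (together with the structural guarantee of \pref{lem:structure}); second, that the merging step on Lines~\ref{li:td-merge}--\ref{li:td-forwardarrow} lifts these sub-WPOs into a WPO for $G$.

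\textbf{First piece.} I would split on the three cases of \CWPOSCCKw{}. The trivial WPO (\pref{li:tdscc-twpo}) and self-loop WPO (\pref{li:tdscc-slwpo}) cases are immediate. In the non-trivial case, after choosing $\etch$ with minimum DFN and constructing the modified graph $(V', \cfgarrow')$, the recursive call to \algtd{} on this strictly smaller graph yields a WPO by the induction hypothesis. I would then verify that Lines~\ref{li:tdscc-move}--\ref{li:tdscc-backwardarrow} restore all five WPO conditions for $G_i$: re-classifying $x_\etch$ as an exit preserves \pref{it:Wx}--\pref{it:Wblue}; adding $\etch \forwardarrow v$ for every successor and $x_\etch \backwardarrow \etch$ keeps the HPO structure intact because the new component $\component{\etch}{x_\etch}_{\forwardarrowstar}$ covers $V_i \cup X_i$ (\pref{lem:structure}), so it is a maximal wrap around any inner components, preserving \pref{it:Hnest}. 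Condition \pref{it:Wbd} for the removed back edges $(v,\etch) \in B$ is re-established by $v \forwardarrowstar x_\etch \backwardarrow \etch$, again invoking \pref{lem:structure}.

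\textbf{Second piece.} I would check the five WPO conditions for the final tuple. Conditions \pref{it:Wx}, \pref{it:Wred}, \pref{it:Wblue} are immediate from disjoint unioning with freshly created exits. For \pref{it:Whpo}, antisymmetry of $\forwardarrowstar$ holds because the new edges on \pref{li:td-forwardarrow} only run between distinct maximal SCCs in the direction of the (acyclic) SCC-DAG; \pref{it:Hoto} and \pref{it:Hdir} are inherited from each sub-WPO; and the forest property \pref{it:Hnest} is preserved because the components contributed by different SCCs live inside the disjoint sets $V_i \cup X_i$, hence are trivially disjoint. Condition \pref{it:Wbd} for cross-SCC edges $u \cfgarrow v$ ($u \in V_i$, $v \in V_j$, $i \neq j$) is discharged via case~(i): \pref{lem:structure} gives $u \forwardarrowstar x_i$, and the added edge $x_i \forwardarrow v$ completes the chain $u \forwardarrowplus v$.

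\textbf{Main obstacle.} The delicate step is re-establishing \pref{it:Hexit} across the merging. Suppose $u$ lies in some nested component $C_h \subseteq V_i \cup X_i$ with exit $x_h$, and $u \forwardarrowstar v$ for some $v \notin V_i \cup X_i$. I must show $x_h \prec v$ or $v \preceq x_h$. The key observation is that by construction, every scheduling constraint leaving $V_i \cup X_i$ emanates from $x_i$ (this is the only place \pref{li:td-forwardarrow} introduces an outgoing edge from an SCC); therefore any path $u \forwardarrowstar v$ that escapes $V_i \cup X_i$ must pass through $x_i$. Combined with the sub-WPO property $x_h \forwardarrowstar x_i$ (from \pref{lem:structure} applied to $G_i$), this gives $x_h \forwardarrowplus v$. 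When $v \in V_i \cup X_i$, \pref{it:Hexit} follows directly from the inductive WPO of $G_i$. Once this ``funnel-through-$x_i$'' observation is established, the remaining cases are routine bookkeeping.
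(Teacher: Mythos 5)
Your proposal is correct and follows essentially the same route as the paper: a condition-by-condition verification of \pref{it:Wx}--\pref{it:Wbd} with \pref{lem:structure} carrying the inductive content, and the observation that every scheduling constraint leaving an SCC's sub-WPO emanates from its exit doing the work for \pref{it:Hexit} and \pref{it:Wbd}. The only difference is presentational --- you make the strong induction on $|V|$ and the ``funnel-through-$x_i$'' argument for \pref{it:Hexit} explicit, where the paper compresses the latter into a single line.
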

\begin{proof}  We show that the returned value $(V, X, \forwardarrow, \backwardarrow)$
  satisfies all properties \pref{it:Wx}--\pref{it:Wbd} in \pref{def:WPO}.
  
  [\ref{it:Wx}]~$V$ equals the vertex set of the input graph, and $X$ consists
  only of the newly created exits.

  [\ref{it:Wred}]~For all exits, $x_\etch \backwardarrow \etch$ is added
  on \pref{li:swpo-ret} and \ref{li:tdscc-backwardarrow}. These are the only places
  \backward{}s are created.

  [\ref{it:Wblue}]~All \forward{}s are created on \pref{li:td-forwardarrow}, \ref{li:swpo-ret},
  and \ref{li:tdscc-forwardarrow}.

  [\ref{it:Whpo}-\ref{it:Hpartial}]~
  $(\forwardarrow_i^*, V_i \cup X_i)$ is reflexive and transitive by definition.
  Because the graph with maximal SCCs contracted to single vertices (super-nodes)
  is acyclic, \forward{}s on \pref{li:td-forwardarrow} cannot create a cycle.
  Also, \pref{li:tdscc-forwardarrow} only adds outgoing \forward{}s and does not
  create a cycle.
  Therefore, $(\forwardarrowstar, V \cup X)$ is antisymmetric.

  [\ref{it:Whpo}-\ref{it:Hoto}]~Exactly one \backward{} is created per
  exit on \pref{li:swpo-ret} and \ref{li:tdscc-backwardarrow}.
  Because $\etch$ is removed from the graph afterwards, it does not become a target of
  another \backward{}.

  [\ref{it:Whpo}-\ref{it:Hdir}]~By \pref{lem:structure}, $x_\etch \backwardarrow \etch$ implies
  $\etch \forwardarrowplus x_\etch$.

  [\ref{it:Whpo}-\ref{it:Hnest}]~
  Because the maximal SCCs on \pref{li:td-rec} are disjoint,
  by \pref{lem:structure}, all components $\component{h_i}{x_i}_{\forwardarrowstar}$ are disjoint.

  [\ref{it:Whpo}-\ref{it:Hexit}]~
  All additional \forward{}s going outside of a component
  have exits as their sources on \pref{li:td-forwardarrow}.

  [\ref{it:Wbd}]~For $u \cfgarrow v$, either (i)~\forward{} is added in
  \pref{li:td-forwardarrow} and \ref{li:tdscc-forwardarrow}, or (ii)~\backward{} is added in
  \pref{li:tdscc-slwpo} and \ref{li:tdscc-backwardarrow}.
  In the case of (i),
  one can check that the property holds for \pref{li:tdscc-forwardarrow}.
  For \pref{li:td-forwardarrow}, if $u$ is a trivially maximal SCC, $x_i = u$.
  Else, $u \forwardarrowstar x_i$ by \pref{lem:structure}, and
  with added $x_i \forwardarrow v$, $u \forwardarrowplus v$.
  In the case of (ii), $u \in \component{\etch}{x_\etch}_{\forwardarrowstar}$ by
  \pref{lem:structure} where $v = \etch$.
\end{proof}

The next theorem proves that the WPO constructed by \algtd{} does not
include superfluous \forward{}s, which could reduce parallelism during the
fixpoint computation.

\begin{theorem}
  \label{thm:partial2}
  For a graph $G(V, \cfgarrow)$ and its depth-first forest $D$, WPO $\wpo(V, X, \forwardarrow, \backwardarrow)$
  returned by $\algtd{}(G, D)$ has the smallest $\forwardarrowstar$
  among the WPOs for $G$ with the same set of $\backwardarrow$.
\end{theorem}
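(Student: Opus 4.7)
My plan is a structural induction on the recursion of \algtd{}, aiming to prove that for any WPO $\wpo'(V, X, \forwardarrow', \backwardarrow)$ for $G$ sharing the same stabilization relation $\backwardarrow$, we have $\forwardarrowstar \subseteq (\forwardarrow')^*$. The base cases are immediate: the trivial WPO has $\forwardarrowstar$ equal to the identity and is trivially contained in every $(\forwardarrow')^*$, while the self-loop WPO's only edge $h \forwardarrow x_h$ is forced in any WPO by \ref{it:Hdir} from the shared $x_h \backwardarrow h$. Since $\backwardarrow$ is fixed, the sets of heads and of exits coincide between $\wpo$ and $\wpo'$, and a preliminary structural lemma using \pref{thm:nocycle} together with \pref{cor:nocycle} shows that each maximal component of $\wpo'$ has its cyclic backbone inside a single maximal SCC of $G$, matching the decomposition built on \pref{li:td-scc} of \algtd{}.

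For the inductive step, I would apply \pref{thm:inductdag} to restrict $\wpo'$ to each such maximal component and invoke the induction hypothesis to justify the intra-SCC sub-WPOs produced by the recursive call on \pref{li:tdscc-rec}. The extra intra-SCC edges $\etch \forwardarrow v$ introduced on \pref{li:tdscc-forwardarrow} are forced directly by \ref{it:Wbd} case (i), since the corresponding edges of $G$ leaving $\etch$ are not back edges at that level (they are not in $B$). The crux is the cross-SCC edges added on \pref{li:td-forwardarrow}: for each $u \cfgarrow v$ with $u \in G_i$, $v \notin G_i$, case (ii) of \ref{it:Wbd} cannot apply because any component of $\wpo'$ containing $v$ lies inside $G_j$ by the structural lemma, hence $u \forwardarrowplus' v$ is forced. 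Applying \ref{it:Hexit} to the shared maximal component $C_{x_i}$ with $h_i \preceq' u \preceq' x_i$ and $u \preceq' v$ then leaves the two possibilities $x_i \prec' v$ (which yields the desired $x_i \forwardarrowstar' v$) and $v \preceq' x_i$.

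The hard part will be excluding the alternative $v \preceq' x_i$, because the axioms on their own permit $\wpo'$ to pull $v$ into $C_{x_i}$ via an auxiliary scheduling chain $v \forwardarrowstar' x_i$. I would handle this by an outer induction on a topological order of the condensation DAG of $G$, working from the latest SCC backwards: any such chain $v \forwardarrowstar' x_i$ would have to traverse scheduling edges among vertices of SCCs strictly later than $G_i$ and then re-enter $G_i$, which by the inductively established containment for those later SCCs would force either a new back edge into $\backwardarrow$ (contradicting fixity) or a violation of antisymmetry of $(\forwardarrow')^*$. Getting the combinatorics of this eliminator-case correct — carefully tracking how auxiliary scheduling edges are constrained to lie in $(\forwardarrow')^*$ whenever $\backwardarrow$ is held fixed — is the delicate step of the proof.
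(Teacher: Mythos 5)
Your skeleton --- structural induction following the recursion of \algtd{}, the two base cases, \pref{cor:nocycle} for the strongly connected case, and case (i) of \pref{it:Wbd} forcing the constraints that leave a head --- is the same as the paper's, and the step you single out as ``the hard part'' is exactly the step the paper dismisses in one unsupported sentence (``\pref{li:td-forwardarrow} only adds the required \forward{}s''). Your instinct about where the difficulty sits is right; the problem is that your eliminator for the branch $v \preceq' x_i$ cannot be made to work, because that branch is genuinely realizable. A chain witnessing $(v, x_i) \in (\forwardarrow')^*$ may run from $v$ straight to the auxiliary exit $x_i$ without ever touching a vertex of $G_i$, so it closes no cycle in $(\forwardarrow')^*$ and forces no new \backward{}. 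Concretely, take $V = \{1,2,3\}$ with edges $1 \cfgarrow 2$, $2 \cfgarrow 1$, $2 \cfgarrow 3$. Here \algtd{} returns $\forwardarrow = \{(1,2),(2,x_1),(x_1,3)\}$ with $\backwardarrow = \{(x_1,1)\}$. But $\forwardarrow' = \{(1,2),(2,3),(3,x_1)\}$ with the same $X$ and $\backwardarrow$ satisfies every condition of \pref{def:HPO} and \pref{def:WPO}: its reflexive transitive closure is the total order $1 \prec' 2 \prec' 3 \prec' x_1$, the unique component becomes $\{1,2,3,x_1\}$, which renders \pref{it:Hexit} vacuous, and the edge $2 \cfgarrow 3$ is covered by case (i) of \pref{it:Wbd}. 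Yet $(x_1,3)$ belongs to the algorithm's $\forwardarrowstar$ and not to $(\forwardarrow')^*$, so the inclusion $\forwardarrowstar \subseteq (\forwardarrow')^*$ that your whole plan aims at is false, and no outer induction on the condensation DAG can recover it.

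The same example breaks your preliminary structural lemma: \pref{thm:nocycle} yields only that each SCC is \emph{contained in} some maximal component of the competitor, not that the competitor's maximal components \emph{coincide with} the maximal SCCs (above, the maximal component absorbs the extra vertex $3$), so the appeal to \pref{thm:inductdag} to match the competitor's decomposition against the one computed on \pref{li:td-scc} also fails. Any complete proof of \pref{thm:partial2} has to first pin down a weaker reading of ``smallest'' --- minimality under inclusion among same-$\backwardarrow$ WPOs, or minimum cardinality of $\forwardarrowstar$ --- and then show that a competitor which avoids $x_i \forwardarrowstar v$ must pay with at least as many constraints forcing $v$ below $x_i$; neither your proposal nor the paper's own terse treatment of this case supplies that argument.
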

\begin{proof} We use structural induction on the input $G$ to prove this.

\noindent \textsc{[Base case]:}
  The base case is when $G$ is either (i)~a trivial SCC or (ii)~a single vertex with a self-loop.
  If $G$ is a trivial SCC, then the algorithm outputs single trivial WPO, whose $\forwardarrowstar$ is empty.

  If $G$ is a single vertex with a self-loop, then there should be at least one exit in the WPO.
  The algorithm outputs self-loop WPO, whose $\forwardarrowstar$ contains only $(\etch, x_\etch)$.

\noindent \textsc{[Inductive step]:}
  The two cases for the inductive step are (i)~$G$ is strongly connected and
  (ii)~$G$ is not strongly connected.
  If $G$ is strongly connected, then due to \pref{cor:nocycle},
  all WPOs of $G$ must have a single maximal component that is equal to $V \cup X$.
  We only consider WPOs with the same set of $\backwardarrow$, 
  so $\etch$, with minimum DFN, is the head of all WPOs that we consider.
  By the inductive hypothesis, the theorem holds for
  the returned value of recursive call on \pref{li:tdscc-rec},
  where the input is the graph without $\etch$.
  Because all WPOs have to satisfy $v \forwardarrowstar x_\etch$ for all $v \in V \cup X$,
  adding $x_\etch$ does not affect the size of $\forwardarrowstar$.
  Also, \pref{li:tdscc-forwardarrow} only adds the required \forward{}s to satisfy
  \pref{it:Wbd} for the dependencies whose source is $\etch$.

  If $G$ is not strongly connected, then by the induction hypothesis,
  the theorem holds for the returned values of \SWPOKw\ on \pref{li:td-rec}
  for all maximal SCCs.
  \pref{li:td-forwardarrow} only adds the required \forward{}s to satisfy
  \pref{it:Wbd} and \pref{it:Hexit}
  for dependencies between different maximal SCCs,
\end{proof}

\subsection{Bottom-up Iterative Construction}
\label{sec:AlgorithmBU}

\SetKwProg{Def}{def}{:}{end}
\SetKwFunction{BKw}{B}
\SetKwFunction{CFKw}{CF}
\SetKwFunction{RKw}{R}
\SetKwFunction{OKw}{O}
\SetKwFunction{XKw}{exit}
\SetKwFunction{TKw}{T}
\SetKwFunction{repKw}{rep}
\SetKwFunction{mergeKw}{merge}
\SetKw{inKw}{in}
\SetKw{suchKw}{such}
\SetKw{thatKw}{that}
\SetKw{thereKw}{there}
\SetKw{stKw}{s.t.}
\SetKw{existsKw}{exists}
\SetKw{andKw}{and}
\SetKw{continueKw}{continue}
\SetKw{returnKw}{return}
\SetKw{questionKw}{?}
\SetKw{colonKw}{:}
\SetKwData{liftKw}{lift}
\SetKwFunction{RemoveAllCrossForwardEdgesKw}{removeAllCrossForwardEdges}
\SetKwFunction{ConstructWPOForSCCKw}{constructWPOForSCC}
\SetKwFunction{RestoreCrossForwardEdgesKw}{restoreCrossForwardEdges}
\SetKwFunction{ConnectWPOsOfMaximalSCCsKw}{connectWPOsOfMaximalSCCs}
\SetKwFunction{FindNestedSCCsKw}{findNestedSCCs}

\pref{alg:bu} presents $\algbu{}(G, D, \liftKw)$, an efficient, almost-linear time
iterative algorithm for constructing the WPO of a graph $G(V, \cfgarrow)$; the
role of the boolean parameter $\liftKw$ is explained in \pref{sec:wto}, and is
assumed to be to be \textit{false} throughout this section. 
\algbu{} is also parametrized by the depth-first forest $D$ of the graph $G$,
and the WPO constructed by the algorithm may change with different forest.
DFF defines back edges ($\BKw$) and cross or forward edges ($\CFKw$)
on Lines~\ref{li:bu-B} and \ref{li:bu-CF}.

$\algbu$ maintains a partitioning of the vertices with each subset containing
vertices that are currently known to be strongly connected. The algorithm use a
disjoint-set data structure to maintain this partitioning. The operation
$\repKw(v)$ returns the representative of the subset that contains vertex $v$,
which is used to determine whether or not two vertices are in the same
partition. The algorithm assumes that the vertex with minimum DFN is the
representative of the subset. Initially, $\repKw(v) = v$ for all vertices $v \in
V$. When the algorithm determines that two subsets are strongly connected, they
are merged into a single subset. The operation $\mergeKw(v,h)$ merges the
subsets containing $v$ and $h$, and assigns $h$ to be the representative for the
combined subset.

\begin{figure}[t]
  \begin{minipage}[t]{0.96\textwidth}
\SetInd{.3em}{0.5em}
  \vspace{0pt}
  \begin{algorithm}[H]
      \small
    \DontPrintSemicolon
    \setlength{\columnsep}{15pt}
    \Indmm
    \KwIn{Directed graph $G(V, \cfgarrow)$, Depth-first forest $D$, Boolean \liftKw}
    \KwOut{WPO $\wpo(V, X, \forwardarrow, \backwardarrow)$}
    \Indpp
    \begin{multicols}{2}
      \small
    \tikzmk{A}  
    $X, \forwardarrow, \backwardarrow \coloneqq \emptyset, \emptyset, \emptyset$\label{li:bu-is}\;%
    \Comment*[l]{These are also feedback edges of $G$.}
    $\BKw \coloneqq \{ (u,v) \mid u \cfgarrow v \textrm{ is a back edge in $D$} \}$\label{li:bu-B}\;
    $\CFKw \coloneqq \{ (u,v) \mid u \cfgarrow v \textrm{ is a cross/fwd edge in $D$} \}$\label{li:bu-CF}\;
    \ForEach{$v \in V$}{
      $\repKw(v) \coloneqq \XKw[v] \coloneqq v$; $\RKw[v] \coloneqq \emptyset$\label{li:bu-ie}
    }
    \RemoveAllCrossForwardEdgesKw{}\label{li:bu-rm}\;
    {%
    \SetNlSty{textbf}{$\star$}{}
    \ForEach{$v \in V$}{
      $\OKw[v] \coloneqq \{(u, v) \in (\cfgarrow \setminus B)\}$\label{li:bu-io}
    }
    }%
      \ForEach{$\etch \in V$ \inKw \text{descending} $DFN(D)$}{\label{li:bu-lm}
      \RestoreCrossForwardEdgesKw{$\etch$}\label{li:bu-rs}\;
      \ConstructWPOForSCCKw{$\etch$}\label{li:bu-wpo}
    }
    {%
    \SetNlSty{textbf}{$\star$}{}
    \ConnectWPOsOfMaximalSCCsKw{}\label{li:bu-c}\;
    \returnKw{$(V, X, \forwardarrow, \backwardarrow)$}
    }%
    \BlankLine
    \tikzmk{B} \boxit{gray}{.94}
    \tikzmk{A}
    \Def{\ConstructWPOForSCCKw{$\etch$}}{
      \tikzmk{B} \boxit{cyan}{.89}
      $N_\etch, P_\etch \coloneqq$ \FindNestedSCCsKw{$\etch$}\label{li:bu-scc}\;
      \lIf{$P_\etch = \emptyset$}{
        \returnKw \label{li:bu-ret} %
      } 
      {%
      \SetNlSty{textbf}{$\star$}{}
      \ForEach{$v \in N_\etch$}{\label{li:bu-lb}
        \Comment*[l]{$(u,v')$ is the original edge. }
        \Comment*[l]{$\repKw(u)$ represents maximal SCC}
        \Comment*[l]{containing $u$ but not $v'$.}
        $\forwardarrow \coloneqq \forwardarrow \cup \{(\XKw[\repKw(u)], v') \mid (u, v') \in \OKw[v]\}$\label{li:bu-b}\;
        \SetInd{.1em}{0.22em}
        \If{\liftKw}{
          $\forwardarrow \coloneqq\! \forwardarrow \cup \{(\XKw[\repKw(u)], v)\! \mid\! (u, v) \in (\cfgarrow \setminus B)\}$
        }
        \SetInd{.3em}{0.5em}
      }
      $x_\etch \coloneqq$ new exit; $X \coloneqq X \cup \{x_\etch\}$\label{li:bu-newx}\;
      $\forwardarrow \coloneqq \forwardarrow \cup \{(\XKw[\repKw(p)], x_\etch)\mid p \in P_\etch\}$\label{li:bu-be}\;
      $\backwardarrow \coloneqq \backwardarrow \cup \{(x_\etch, \etch)\}$\label{li:bu-r}\; 
      }%
      {%
      \SetNlSty{textbf}{$\star$}{}
      $\XKw[\etch] \coloneqq x_\etch$\label{li:bu-exit}\;
      }%
      \lForEach{$v \in N_\etch$}{
        \mergeKw{$v, \etch$}\label{li:bu-merge}
      }
    }
    \tikzmk{A}
    \Def{\RemoveAllCrossForwardEdgesKw{}}{
    \tikzmk{B} \boxit{cyan}{.89}
      \ForEach{$(u, v) \in \CFKw$}{
        $\cfgarrow \coloneqq \cfgarrow \setminus \{ (u,v) \}$\label{li:bu-rmcf}\;
        \Comment*[l]{Removed edges will be restored }
        \Comment*[l]{when $\etch$ is the LCA of $u, v$ in $D$.}
        $\RKw[lca_{T}(u,v)] \coloneqq \RKw[lca_{T}(u,v)] \cup \{(u,v)\}$\label{li:bu-lca}\;
      }
    }
    \tikzmk{A}
    \Def{\RestoreCrossForwardEdgesKw{$\etch$}}{
      \tikzmk{B} \boxit{cyan}{.89}
      \ForEach{$(u, v) \in \RKw[\etch]$}{
        $\cfgarrow \coloneqq \cfgarrow \cup \{(u, \repKw(v))\}$\label{li:bu-rscf}\;
        {%
        \SetNlSty{textbf}{$\star$}{}
        \Comment*[l]{Record the original edge.}
        $\OKw[\repKw(v)] \coloneqq \OKw[\repKw(v)] \cup \{(u, v)\}$\label{li:bu-o}\;
        }%
      }
    }
    \tikzmk{A}
    \Def{\FindNestedSCCsKw{$\etch$}}{
    \tikzmk{B} \boxit{cyan}{.89}
    {%
    \SetNlSty{textbf}{$\star$}{}
    \Comment*[l]{Search backwards from the sinks.}
    $P_\etch \coloneqq \{ \repKw(p) \mid (p, \etch) \in \BKw\}$\label{li:bu-bp}\;
    }%
    $N_\etch \coloneqq \emptyset$\Comment*[r]{Reps.\ of nested SCCs except $\etch$.}
    $W_\etch \coloneqq P_\etch \setminus \{ \etch \}$\Comment*[r]{Worklist.}
    \While{$\thereKw\ \existsKw\ v \in W_\etch$}{
      $W_\etch,\ N_\etch \coloneqq W_\etch \setminus \{v\},\ N_\etch \cup \{v\}$\label{li:bu-pop}\;
      \ForEach{$p$ \stKw $(p, v) \in (\cfgarrow \setminus \BKw)$}{
        \If{$\repKw(p) \notin N_\etch \cup \{\etch \} \cup W_\etch$}{
          $W_\etch \coloneqq W_\etch \cup \{\repKw(p)\}$\label{li:bu-found}
        }
      }
    }
    \returnKw $N_\etch, P_\etch$
    }
    {%
    \SetNlSty{textbf}{$\star$}{}    
    \tikzmk{A}
    \Def{\ConnectWPOsOfMaximalSCCsKw{}}{
    \tikzmk{B} \boxit{cyan}{.89}
    \ForEach{$v \in V$ \stKw $\repKw(v) = v$}{\label{li:bu-lbb}
      $\forwardarrow \coloneqq \forwardarrow \cup \{(\XKw[\repKw(u)], v') \mid (u, v') \in \OKw[v]\}$\label{li:bu-bb}\;%
      \SetInd{.1em}{0.22em}
      \If{\liftKw}{
        $\forwardarrow \coloneqq\! \forwardarrow \cup \{(\XKw[\repKw(u)], v)\! \mid\! (u, v) \in (\cfgarrow \setminus B)\}$
      }
      \SetInd{.3em}{0.5em}
    }
    }%
    }%
    \end{multicols}
    \vspace{-2ex}
    \caption{\algbu{}$(G, D, \texttt{lift})$}
    \label{alg:bu}
  \end{algorithm}
\SetInd{.5em}{1em}  
  \end{minipage}
  \vspace{-1ex}
\end{figure}

\begin{example}
  Consider the graph $G_3$ in \pref{fig:cfg3}. Let $\underline{1} \mid
  \underline{2} \mid \underline{3} \mid \underline{4} \mid \underline{5} \mid
  \underline{6}\ 7 \mid \underline{8}$ be the current partition, where 
  the underline marks the representatives. Thus,
  $\repKw(1) = 1$ and $\repKw(6) = \repKw(7) = 6$. If it is found that vertices $5,
  6, 7, 8$ are strongly connected, then calls to $\mergeKw(6, 5)$ and $\mergeKw(8, 5)$
  update the partition to $\underline{1} \mid \underline{2} \mid
  \underline{3} \mid \underline{4} \mid \underline{5}\ 6\ 7\ 8$. Thus,
  $\repKw(6) = \repKw(7) = \repKw(8) = \repKw(5) = 5$. \qef
\end{example}

Auxiliary data structures $\repKw$, $\XKw$, and $\RKw$ are initialized on
\pref{li:bu-ie}. The map $\XKw$ maps an SCC (represented by its header $\etch$)
to its corresponding exit $x_\etch$. Initially, $\XKw[v]$ is set to $v$, and
updated on \pref{li:bu-exit} when a non-trivial SCC is discovered by the
algorithm. 

The map $\RKw$ maps a vertex to a set of edges, and is used to handle
irreducible graphs \cite{hecht1972flow,tarjan1973testing}. Initially, $\RKw$ is
set to $\emptyset$, and updated on \pref{li:bu-lca}. The function
$\FindNestedSCCsKw$ relies on the assumption that the graph is reducible. This
function follows the edges backwards to find nested SCCs using $\repKw(p)$
instead of predecessor $p$, as on Lines~\ref{li:bu-bp} and \ref{li:bu-found}, to
avoid repeatedly searching inside the nested SCCs. $\repKw(p)$ is the unique
entry to the nested SCC that contains the predecessor $p$ if the graph is
reducible. To make this algorithm work for irreducible graphs as well, cross or
forward edges are removed from the graph initially by function
$\RemoveAllCrossForwardEdgesKw$ (called on \pref{li:bu-rm}) to make the graph
reducible. Removed edges are then restored by function
$\RestoreCrossForwardEdgesKw$ (called on \pref{li:bu-rs}) right before the edges
are used. The graph is guaranteed to be reducible when restoring a removed edge
$u \cfgarrow v$ as $u \cfgarrow \repKw(v)$ when $\etch$ is the lowest common
ancestor (LCA)  of $u, v$ in the depth-first forest. Cross and forward edges are removed on
\pref{li:bu-rmcf} and are stored at their LCAs in $D$ on \pref{li:bu-lca}.
Then, as $\etch$ hits the LCAs in the loop, the removed edges are restored on
\pref{li:bu-rscf}. Because the graph edges are modified, map $\OKw$ is used to
track the original edges. $\OKw[v]$ returns set of original non-back edges that
now targets $v$ after the modification. The map $\OKw$ is initialized on
\pref{li:bu-io} and updated on \pref{li:bu-o}.

The call to function $\ConstructWPOForSCCKw(\etch)$ on \pref{li:bu-wpo}
constructs a WPO for the largest SCC such that $\etch = \argmin_{v\in V'}
DFN(D, v)$, where $V'$ is the vertex set of the SCC. For example,
$\ConstructWPOForSCCKw(5)$ constructs WPO for SCC with vertex set $\{5, 6, 7,
8\}$. Because the loop on \pref{li:bu-lm} traverses the vertices in descending
DFN, the WPO for a nested SCC is constructed before that of the enclosing SCC.
For example, $\ConstructWPOForSCCKw(6)$ and $\ConstructWPOForSCCKw(8)$ are
called before $\ConstructWPOForSCCKw(5)$, which construct WPOs for nested SCCs
with vertex sets $\{6, 7\}$ and $\{8\}$, respectively. Therefore,
$\ConstructWPOForSCCKw(\etch)$ reuses the constructed sub-WPOs.

The call to function $\FindNestedSCCsKw(\etch)$ on \pref{li:bu-scc} 
returns $N_\etch$, the representatives of the nested SCCs $N_\etch$, as well as 
$P_\etch$, the predecessors of $\etch$ along back edges.
If $P_\etch$ is
empty, then the SCC is trivial, and the function immediately returns on
\pref{li:bu-ret}. \pref{li:bu-b} adds \forward{}s for the dependencies crossing
the nested SCCs. As in \algtd{}, this must be from the exit of maximal SCC that
contains $u$ but not $v'$ for $u \cfgarrow v'$. Because $u \cfgarrow v'$ is now
$u \cfgarrow \repKw(v')$, $\OKw[v]$, where $v = \repKw(v')$, is looked up to
find $u \cfgarrow v'$. $\XKw$ is used to find the exit, where $\repKw(u)$ is the
representative of maximal SCC that contains $u$ but not $v$.
If the parameter \liftKw is \textit{true}, then \forward{} targeting $v$ is also
added, forcing all scheduling predecessors outside of a component to be visited
before the component's head.
Similarly, function
$\ConnectWPOsOfMaximalSCCsKw$ is called after the loop on \pref{li:bu-bb} to
connect WPOs of maximal SCCs. The exit, \forward{}s to the exit, and
\backward{}s are added on Lines~\ref{li:bu-newx}--\ref{li:bu-r}. After the
WPO is constructed, \pref{li:bu-exit} updates the map $\XKw$, and
\pref{li:bu-merge} updates the partition.

\begin{table}
  \caption{Steps of \algbu{} for graph $G-3$ in \pref{fig:cfg-wpo3}.}
  \label{tab:effex}
  \let\center\empty
  \let\endcenter\relax
  \centering
  \input{efficient_ex}
  \vspace{-1ex}
\end{table}

\begin{example}
  \pref{tab:effex} describes the steps of \algbu{} for the irreducible graph
  $G_3$ (\pref{fig:cfg-wpo3}). The~$\cfgarrow$~updates column shows the
  modifications to the graph edges, the Current partition column shows the
  changes in the disjoint-set data structure, and the Current WPO column shows
  the WPO constructed so far.
  Each row shows the changes made in each step of the algorithm.
  Row `Init' shows the initialization step on Lines~\ref{li:bu-ie}--\ref{li:bu-rm}.
  Row `$\etch=k$' shows the $k$-th iteration of the loop on~Lines~\ref{li:bu-lm}--\ref{li:bu-wpo}.
  The loop iterates over the vertices in descending DFN: $8, 7, 6, 5, 4, 3, 2, 1$.
  Row `Final' shows the final step after the loop on \pref{li:bu-c}.

  During initialization, the cross or forward edges $\{(7,3), (8,4)\}$ are
  removed, making $G_3$ reducible. These edges are added back as
  $\{(7,\repKw(3)), (8,\repKw(4))\}$ = $\{(7,2), (8,4)\}$ in $\etch=1$, where
  $1$ is the LCA of both $(7,3)$ and $(8,4)$. 
  $G_3$ remains reducible after restoration.
  In step $\etch=5$, WPOs for the nested SCCs are connected with
  $5 \forwardarrow 6$ and $x_6 \forwardarrow 8$.
  The new exit $x_5$ is created, connected to the WPO via $8 \forwardarrow x_5$, 
  and $x_5 \backwardarrow 5$ is added.
  Finally, $1 \forwardarrow 2$,
  $1 \forwardarrow 5$, $x_5 \forwardarrow 3$,
  $x_5 \forwardarrow 4$, and $x_2 \forwardarrow 4$
  are added, connecting the WPOs for maximal SCCs.
  If \liftKw is \textit{true}, then \forward{} $x_5 \forwardarrow 2$
  is added.
  \qef
\end{example}

\algbu{} adapts Tarjan-Havlak-Ramaligam's almost-linear time algorithm \cite{ramalingamTOPLAS1999}
for constructing Havlak's loop nesting forest (LNF)~\citep{havlak1997nesting}.
Similar to components in the WPO, an LNF represents the nesting relationship among the SCCs.
A WPO contains additional information in the form of scheduling 
constraints, which
are used to generate the concurrent iteration strategy for fixpoint
computation. Lines in \algbu{} indicated by $\star$ were added to the algorithm
in \citet{ramalingamTOPLAS1999}.
The following two theorems prove the correctness and runtime efficiency of \algbu{}.

\begin{theorem}
  \label{thm:same}
  Given a graph $G(V, \cfgarrow)$ and its depth-first forest $D$, \algtd{}$(G, D)$ and \algbu{}$(G, D, \textit{false})$ construct the same WPO for $G$.
\end{theorem}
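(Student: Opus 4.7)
The plan is to prove this by induction on the nested SCC structure of $G$, exploiting that both algorithms pick the minimum-DFN vertex as each component's head and therefore produce identical sets of exits $X$ and \backward{}s $\backwardarrow$. The crux of the argument is matching the \forward{}s $\forwardarrow$.

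First I would establish a correspondence between \algtd{}'s recursive decomposition and \algbu{}'s bottom-up iteration. When \algtd{} processes an SCC with head $\etch$, it removes $\etch$ and recurses, identifying maximal nested SCCs and building their WPOs. In \algbu{}, by the time the descending-DFN loop at \pref{li:bu-lm} reaches $\etch$, all SCCs strictly nested inside $\etch$'s SCC have already been processed and merged into single partitions. Hence $\repKw(v)$ is the head of the maximal nested SCC containing $v$, and \FindNestedSCCsKw{$\etch$} collects exactly the set of maximal nested SCC heads that \algtd{}'s recursion would identify on \pref{li:td-scc}. By induction, the sub-WPOs constructed for these nested SCCs are identical in the two algorithms.

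Next I would match the \forward{}s added outside the nested sub-WPOs. In \algtd{}, \pref{li:td-forwardarrow} adds $x_i \forwardarrow v$ for each inter-SCC edge $u \cfgarrow v$, where $x_i$ is the exit of the maximal SCC containing $u$ but not $v$, and \pref{li:tdscc-forwardarrow} adds \forward{}s out of the removed head. In \algbu{}, \pref{li:bu-b} adds $\XKw[\repKw(u)] \forwardarrow v'$ for each $(u, v') \in \OKw[v]$, where $\OKw$ records the original target of any redirected edge. Since $\XKw[\repKw(u)]$ is the exit of the maximal nested SCC containing $u$ (or $u$ itself if $u$ is not inside any nested SCC of $\etch$'s SCC), these assignments coincide. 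The call to \ConnectWPOsOfMaximalSCCsKw{} plays exactly the role of the outer loop on \pref{li:td-forwardarrow} for edges between top-level maximal SCCs, and the \backward{} and exit creation on Lines~\ref{li:bu-newx}--\ref{li:bu-r} mirror Lines~\ref{li:tdscc-newvar},~\ref{li:tdscc-backwardarrow} of \algtd{}.

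The main obstacle is handling irreducible graphs. \algtd{} operates on $G$ unmodified, whereas \algbu{} removes cross and forward edges at initialization (\pref{li:bu-rm}) and restores them with the target redirected to $\repKw(v)$ at the LCA in $D$ (\pref{li:bu-rscf}). I would argue this redirection is semantically equivalent: when $\etch = lca_{D}(u,v)$ is about to be processed, $\repKw(v)$ is the head of the maximal SCC strictly inside $\etch$'s SCC and containing $v$---precisely the entry through which \algtd{} would encounter the edge upon recursing into that nested SCC. The $\OKw$ map guarantees that \pref{li:bu-b} still adds a \forward{} targeting the \emph{original} $v$, not the redirected $\repKw(v)$, so the output matches \algtd{} exactly. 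Verifying that every removed edge is restored at the right moment and that the correspondence between $\repKw$ and the heads of \algtd{}'s recursion is preserved under each \mergeKw{} is where most of the bookkeeping resides.
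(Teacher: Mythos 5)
Your proposal follows essentially the same route as the paper's proof: both argue that the minimum-DFN choice of head makes the two algorithms discover the same nested SCC structure (hence the same $X$ and $\backwardarrow$), and then match the \forward{}s by showing that $\XKw[\repKw(u)]$ at the moment \pref{li:bu-b} or \pref{li:bu-bb} fires is the exit of the maximal SCC containing $u$ but not $v$, with $\OKw$ recovering the original targets of redirected cross/forward edges. The one step you defer to ``bookkeeping''---that $\repKw(u)$ is indeed in $N_\etch$ whenever $u$ and $v'$ lie in a common non-trivial SCC---is precisely where the paper invokes the invariant that the modified graph remains reducible throughout the run (inherited from Ramalingam's loop-nesting-forest analysis), so your plan is sound and complete modulo supplying that argument or citation.
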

\begin{proof}
  Using the constructive characterization of an LNF \cite[Definition 3]{ramalingamTOPLAS2002},
  an LNF of a graph is constructed by identifying the maximal SCCs,
  choosing headers for the maximal SCCs,
  removing incoming back edges to the headers to break the maximal SCCs,
  and repeating this on the subgraphs.
  In particular, Havlak's LNF is obtained when the vertex with minimum DFN 
  in the SCC is chosen as the header \cite[Definition 6]{ramalingamTOPLAS2002}.
  This construction is similar to how the maximal SCCs are identified on \pref{li:td-scc} of \algtd{}
  and how \algtd{} is called recursively on \pref{li:tdscc-rec}.
  Because \algbu{} is based on the LNF construction algorithm,
  both algorithms identify the same set of SCCs,
  resulting in the same $X$ and $\backwardarrow$.

  Now consider $u \cfgarrow v'$ in the original graph that is not a back edge.
  Let $v$ be $\repKw(v')$ when restoring the edge if the edge is cross/forward,
  or $v'$ otherwise.
  If both $u, v'$ are in some non-trivial SCC,
  then $v \in N_\etch$ on \pref{li:bu-lb} for some $\etch$.
  In this case, $\repKw(u)$ must be in $N_\etch$.
  If not, $u \cfgarrow v$ creates an entry to the SCC other than $\etch$.
  We know that $\etch$ must be the entry because it has minimum DFN.
  This contradicts that the modified graph is reducible over the whole run \cite[Section 5]{ramalingamTOPLAS1999}.
  Because all nested SCCs have been already identified,
  $\repKw(u)$ is the representative of the maximal SCC that contains $u$ but not $v$.
  Also, if no SCC contains both $u, v'$, then $v = \repKw(v)$ on \pref{li:bu-lbb}.
  Because all maximal SCCs are found, $\repKw(u)$ returns the representative of
  the maximal SCC that contains $u$ but not $v$.
  Therefore, \pref{li:bu-b} and \pref{li:bu-bb} construct the same $\forwardarrow$ as \algtd{}.
\end{proof}

\begin{theorem}
  \label{thm:AlmostLinearBottomUpWPO}
  The running time of \algbu{}$(G, D, *)$ is almost-linear. 
\end{theorem}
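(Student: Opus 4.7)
The plan is to reduce the analysis to the known almost-linear running time of Tarjan--Havlak--Ramalingam's LNF construction algorithm \cite{ramalingamTOPLAS1999}, and then bound the additional work introduced by the lines marked with $\star$ in \pref{alg:bu}. As noted just before the theorem, the non-$\star$ portion of \algbu{} is structurally identical to Ramalingam's LNF algorithm (union-find with path compression maintains the partition via $\repKw$ and $\mergeKw$; cross/forward edges are removed and restored at the LCA in $D$; $\FindNestedSCCsKw$ performs a backward worklist traversal that, amortized over the whole run, visits each edge only a constant number of times because the partition only coarsens). This part has been proven in \cite{ramalingamTOPLAS1999} to run in $O((|V|+|\cfgarrow|)\cdot\alpha(|V|+|\cfgarrow|))$ time, where $\alpha$ is the inverse Ackermann function. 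The plan is to show the same asymptotic bound still holds after the $\star$ additions.

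I would then walk through each $\star$-marked line and charge its cost either to a vertex, to an original edge of $G$, or to a union-find operation. \pref{li:bu-io} costs $O(|\cfgarrow|)$ in total across all $v$ because it partitions $\cfgarrow\setminus B$ by target. \pref{li:bu-o} is executed once per restored cross/forward edge, and each such edge is restored exactly once, so the total cost is $O(|\CFKw|\cdot\alpha(|V|))$ including the $\repKw$ call. For \pref{li:bu-b} inside $\ConstructWPOForSCCKw(\etch)$, summed over all $\etch$ and all $v\in N_\etch$, every original non-back edge $(u,v')$ is read from exactly one $\OKw[\repKw(v')]$ (namely when its current target's representative becomes a nested SCC of the enclosing $\etch$), so the total work is $O(|\cfgarrow|\cdot\alpha(|V|))$ for the $\repKw$ and $\XKw$ lookups and the $\forwardarrow$ insertions. \pref{li:bu-exit} and the $\liftKw$-branch (when enabled) are similarly each bounded by $O(|\cfgarrow|\cdot\alpha(|V|))$ across the whole execution. \pref{li:bu-bp} in $\FindNestedSCCsKw$ iterates over back edges targeting $\etch$ only once (since each back edge has a unique target), for a total of $O(|\BKw|\cdot\alpha(|V|))$. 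Finally, $\ConnectWPOsOfMaximalSCCsKw$ on \pref{li:bu-c} is a single pass over the representatives of maximal SCCs whose inner loop iterates over $\OKw[v]$, and since the $\OKw$ sets of the representatives partition the non-back edges not already consumed inside some nested SCC, the total cost is again $O(|\cfgarrow|\cdot\alpha(|V|))$.

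Summing the contributions, the added work is $O((|V|+|\cfgarrow|)\cdot\alpha(|V|+|\cfgarrow|))$, matching the asymptotic complexity of the underlying LNF algorithm. Hence $\algbu{}(G,D,*)$ runs in almost-linear time $O((|V|+|\cfgarrow|)\cdot\alpha(|V|+|\cfgarrow|))$.

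The main obstacle I anticipate is the amortization argument for \pref{li:bu-b}: one must argue that although $\OKw[\repKw(v')]$ can be modified by $\RestoreCrossForwardEdgesKw$ as representatives change, each original edge $(u,v')$ still contributes to the inner loop of \pref{li:bu-b} at most once over the whole run, because once $v'$ is absorbed into the SCC rooted at $\etch$ via $\mergeKw$, the edge's target representative is $\etch$ and it will be enumerated only when $\etch$ itself becomes the current $v\in N_{\etch'}$ of some enclosing header $\etch'$. Formalizing this token-passing argument in tandem with the reducibility invariant used in \pref{thm:same} (which ensures that $\repKw(u)$ is always the correct ``maximal SCC containing $u$ but not $v$'') is the delicate technical step; the remainder of the proof is straightforward bookkeeping on top of the guarantees of \cite{ramalingamTOPLAS1999}.
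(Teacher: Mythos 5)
Your proposal is correct and takes essentially the same approach as the paper: the paper's proof simply observes that the non-starred lines are Tarjan--Havlak--Ramalingam's almost-linear LNF algorithm and asserts that the starred lines "only add constant factors." Your version supplies the amortized charging argument for each starred line that the paper leaves implicit, which is a strictly more careful rendering of the same reduction.
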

\begin{proof}
  The non-starred lines in \algbu{} are the same as Tarjan-Havlak-Ramalingam's
  almost-linear time algorithm for constructing Havlak's LNF~\cite[Section 5]{ramalingamTOPLAS1999}.
  The starred lines only add constant factors to the algorithm.
  Thus, the running time of \algbu{} is also almost-linear.
\end{proof}

\section{Connection to Weak Topological Order}
\label{sec:wto}

The weak partial order defined in \pref{sec:wpo} is a strict generalization of weak
topological order~(WTO) defined by \citet{bourdoncle1993efficient}. 
Let us first recall the definitions by Bourdoncle.

\begin{definition}
  \label{def:BourdoncleHTO}
  A \emph{hierarchical total order (HTO)} of a set $S$ is a well-parenthesized
  permutation of this set without two consecutive ``(''. \qef
\end{definition}

An HTO is a string over the alphabet $S$
augmented with left and right parenthesis. An HTO of $S$
induces a total order $\preceq$ over the elements of $S$.
The elements between two matching parentheses are called a \emph{component}, and the first
element of a component is called the \emph{head}.
The set of heads of the components containing the element $l$ is
denoted by $\omega(l)$. 

\begin{definition}
  \label{def:BourdoncleWTO}
  A \emph{weak topological order (WTO)} of a directed graph is a
  HTO of its vertices such that for every edge $u \rightarrow
  v$, either $u \prec v$ or $v \preceq u$ and $v \in \omega(u)$. \qef
\end{definition}

A WTO factors out a feedback edge set from the graph using the matching
parentheses and topologically sorts the rest of the graph to obtain a total
order of vertices. A feedback edge set defined by a WTO is $\{(u, v) \in
\cfgarrow \mid v \preceq u \text{ and } v \in \omega(u)\}$. 

\begin{example}
  \label{ex:BourdoncleWTO}
  The WTO of graph $G_1$ in \pref{fig:cfg1} is 
$1\ (2\ (3\ 4)\ (6\ 7\ 9\ 8)\ 5)\ 10$.
The feedback edge set defined by this WTO is $\{(4,3), (8,6), (5,2)\}$, 
which is the same as that defined by WPO $\wpo_1$.
\qef
\end{example}

\newcommand{\algwto}{\texttt{ConstructWTO\textsuperscript{TD}}}

\newcommand{\ptt}{\texttt{ConstructWTO}\textsuperscript{BU}}
\newcommand{\ext}{\forwardarrow_{\texttt{ext}}}
\pref{alg:wto} presents a top-down recursive algorithm for
constructing a WTO for a graph $G$ and its depth-first forest $D$.
Notice the use of increasing post DFN order
when merging the results on \pref{li:wto-for}. In general, a
reverse post DFN order of a graph is its topological order. Therefore,
\algwto{}, in effect, topologically sorts the DAG of SCCs recursively. Because
it is recursive, it preserves the components and their nesting relationship.
Furthermore, by observing the correspondence between \algwto{} and \algtd{}, we
see that $\algwto{}(G, D)$ and $\algtd{}(G, D, *)$ construct the same components with
same heads and nesting relationship. 

The definition of HPO (\pref{def:HPO}) generalizes the definition of
HTO (\pref{def:BourdoncleHTO}) to partial orders, while the
definition of WPO (\pref{def:WPO}) generalizes the definition of WTO
(\pref{def:BourdoncleWTO}) to partial orders. In other words, the two
definitions define the same structure if we strengthen \pref{it:Hpartial} to a
total order. If we view the exits in $X$ as closing parenthesis ``)'' and $x
\backwardarrow h$ as matching parentheses $(h \dots)$, then the correspondence
between the two definitions becomes clear. The conditions that a HTO 
must be well-parenthesized and that it disallows two consecutive
``('' correspond to conditions \pref{it:Hnest}, \pref{it:Hoto}, and
\pref{it:Wx}. While \pref{it:Hexit} is not specified in Bourdoncle's definition,
it directly follows from the fact that $\preceq$ is a total order. Finally, the
condition in the definition of WTO matches \pref{it:Wbd}. 
Thus, using the notion of WPO, we can define a WTO as:

\SetKwProg{Def}{def}{:}{end}
\SetKwFunction{CWTOSCCKw}{sccWTO}

\newcommand{\boxitt}[2]{\tikz[remember picture,overlay]{\node[yshift=3pt,xshift=3pt,fill=#1,opacity=.15,fit={(A)($(B)+(#2\linewidth,-0.5\baselineskip)$)}] {};}\ignorespaces}

\begin{figure}[t]
  \begin{minipage}[t]{0.96\textwidth}
\SetInd{.3em}{0.5em}
  \begin{algorithm}[H]
    \small
    \SetKw{andKw}{and}
    \SetKw{returnKw}{return}
    \SetKw{stKw}{s.t.}
    \DontPrintSemicolon
    \setlength{\columnsep}{15pt}
    \Indmm
    \KwIn{Directed graph $G(V, \cfgarrow)$, Depth-first forest $D$}
    \KwOut{WTO}
    \Indpp
    \begin{multicols}{2}
      \small
      \tikzmk{A}
      $G_1, G_2, \ldots, G_k \coloneqq SCC(G)$\label{li:wto-scc}\Comment*[r]{Maximal SCCs.}
      \ForEach{$i \in [1, k]$} {
        \Comment*[l]{WTOs for SCCs.}
        $WTO_i, \etch\#_i \coloneqq$ \CWTOSCCKw{$G_i, D$}\label{li:wto-rec}
      }
      $WTO \coloneqq \text{nil}$\;
      \ForEach{$i$ in increasing order of $\etch\#_i$}{ \label{li:wto-for}
        \Comment*[l]{'$::$' means to append.}
        $WTO \coloneqq WTO_i::WTO$
      }
      \tikzmk{B} \boxitt{gray}{.97}
      \returnKw{$WTO$}\;
      \tikzmk{A}
      \Def{\CWTOSCCKw{$G, D$}}{
      \tikzmk{B} \boxit{cyan}{.90}
        \Comment*[l]{$G$ is strongly connected.}
        $\etch \coloneqq \argmin_{v\in V}DFN(D, v)$\label{li:wto-etch}\Comment*[r]{Minimum DFN.}
        $B \coloneqq \{ (v, \etch) \mid  v \in V \text{ and } v \cfgarrow \etch\}$\label{li:wto-B}\;
        \lIf{$|B| = 0$}{
          \returnKw{$\etch, \text{post-DFN}(D, \etch)$}\label{li:wto-twpo}
        }
        \lIf{$|V| = 1$}{
          \returnKw{$\big(\etch\big), \text{post-DFN}(D, \etch)$}\label{li:wto-slwpo}
        }
        $V' \coloneqq V \setminus \{\etch\}$\label{li:wto-removal}\;
        $\cfgarrow' \coloneqq \cfgarrow_{\downharpoonright V'}$\label{li:wto-modcfg}\;
        $WTO \coloneqq$ \algwto{$((V', \cfgarrow'), D_{\downharpoonright V'})$}\label{li:wto-rec2}\;
        \returnKw{$\big(\etch :: WTO\big), \text{post-DFN}(D, \etch)$}
      }
    \end{multicols}
    \vspace{-2ex}
    \caption{\algwto{}$(G, D)$}
    \label{alg:wto}
  \end{algorithm}
\SetInd{.5em}{1em}  
  \end{minipage}
  \vspace{-1ex}
\end{figure}

\begin{definition}
  \label{def:wto}
  A \emph{weak topological order} (WTO) for a graph $G(V, \cfgarrow)$ is a WPO $\wpo(V,
  X, \forwardarrow, \backwardarrow)$ for $G$ where $(V \cup X, \forwardarrowstar)$ is
  a total order. \qef
\end{definition}

\pref{def:wto} hints at how a WTO for a graph can be constructed from a WPO. The
key is to construct a linear extension of the partial order $(V \cup X,
\forwardarrowstar)$ of the WPO, while ensuring that properties
\pref{it:Hpartial}, \pref{it:Hnest}, and \pref{it:Hexit} continue to hold.
$\ptt{}$ (\pref{alg:ptt}) uses the above insight to construct a WTO of $G$ in almost-linear time, 
as proved by the following two theorems.

\begin{theorem}
  Given a directed graph $G$ and its depth-first forest $D$,
  the returned value $(V, X, \forwardarrow, \backwardarrow)$ of $\ptt{}(G, D)$
  is a WTO for $G$.
\end{theorem}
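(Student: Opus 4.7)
The plan is to establish that the output of $\ptt{}(G, D)$ satisfies \pref{def:wto}: it must be a WPO for $G$ in which $(V \cup X, \forwardarrowstar)$ is a total order. I anticipate that $\ptt{}$ is built on $\algbu{}(G, D, \textit{true})$ followed by a linearization step analogous to the post-DFN merging in $\algwto{}$ (\pref{li:wto-for}). The proof then splits into two claims: (i)~the constructed quadruple is a WPO, and (ii)~its scheduling-constraint closure is total.

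For claim (i), I would extend the argument of \pref{thm:same} to the case $\liftKw = \textit{true}$. The only structural difference relative to $\algbu{}(G, D, \textit{false})$ is the addition of scheduling constraints from exits $\XKw[\repKw(u)]$ to component heads $v$, for each non-back edge $(u, v)$ with $v$ a head. These added edges introduce no new vertices, exits, or backward edges, so \pref{it:Wx}, \pref{it:Wred}, and \pref{it:Wblue} are preserved. They cannot create cycles in $\forwardarrowstar$ since each one goes from an exit to a distinct component head, consistent with the acyclicity of the super-node DAG; and they refine---rather than violate---conditions \pref{it:Hexit} and \pref{it:Wbd}, as their effect is precisely to force scheduling predecessors outside a component to precede its head. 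The nesting forest \pref{it:Hnest} and one-to-oneness of $\backwardarrow$ \pref{it:Hoto} are unaffected.

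For claim (ii), I would use structural induction on the component forest. The base cases (trivial and self-loop WPOs) are vacuously totally ordered. In the inductive step, the inductive hypothesis yields a total order within each maximal component's sub-WPO. The lifted constraints linearize sibling maximal SCCs whose vertices are connected by cross-component dependencies, while the remaining pairs of mutually independent sibling SCCs are totally ordered by the explicit merge pass of $\ptt{}$ (mirroring \pref{li:wto-for} of $\algwto{}$, which merges in increasing post-DFN order). Composing these local linearizations with the inductively total sub-WPOs produces a total order on all of $V \cup X$, head-first and exit-last within each component.

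The main obstacle is verifying that the explicit linearization of independent sibling SCCs in $\ptt{}$ preserves every WPO condition. Inserting scheduling constraints between otherwise unrelated components must not create cycles in $\forwardarrowstar$ (which follows from acyclicity of the super-node DAG combined with the post-DFN ordering), must not alter $\backwardarrow$ or the component set (so \pref{it:Hnest}, \pref{it:Hoto}, \pref{it:Hdir} remain intact), and must not violate \pref{it:Hexit} (which follows because the inserted edges always flow from an earlier sibling's exit to a later sibling's head, which is extremal in its subtree of the nesting forest). Once these invariants are in place, the constructed structure meets every condition of \pref{def:WPO} and has a total $\forwardarrowstar$, so it is a WTO by \pref{def:wto}.
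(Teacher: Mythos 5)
Your claim~(i) --- that $\algbu{}(G, D, \textit{true})$ still yields a WPO --- is fine and is in fact asserted with less detail in the paper than you provide. The gap is in claim~(ii), and it stems from a mischaracterization of how $\ptt{}$ linearizes the order. You ``anticipate'' a merge pass analogous to the post-DFN merging on \pref{li:wto-for} of $\algwto{}$, in which only \emph{sibling} components are chained head-to-exit. The actual algorithm (\pref{alg:ptt}) does something different: it runs a Kahn-style topological traversal of the WPO using a \emph{stack} as the worklist and, on \pref{li:ptt-ext}, adds a \forward{} from \emph{every} popped element to the \emph{next} popped element, including pairs of elements that lie inside the same component (e.g., $7 \forwardarrow 9$ inside $C_6$ in the worked example). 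Your invariant that ``the inserted edges always flow from an earlier sibling's exit to a later sibling's head'' is therefore false for the algorithm the theorem is about, and your argument that \pref{it:Hnest} and \pref{it:Hexit} are preserved rests on it.

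The missing idea is precisely the one the paper's proof leans on: because the worklist is LIFO and because $\liftKw$ forces every scheduling predecessor outside a component to be scheduled before that component's head, once a head is popped the traversal stays inside that component until all of its elements (ending with its exit) have been popped. Contiguity of each component in the pop sequence is what guarantees that the chain edges added on \pref{li:ptt-ext} never run from a non-exit element of a component to an element outside it, which is what preserves \pref{it:Hnest} and \pref{it:Hexit}; acyclicity (\pref{it:Hpartial}) and totality then follow because the chain is consistent with a topological order of $\forwardarrow$. Without the stack-discipline argument (or some replacement for it), your proof does not establish that the structure returned by $\ptt{}$ is still a WPO, only that some other linearization procedure would produce one.
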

\begin{proof}
  The call $\algbu{}(G, D, \textit{true})$ on \pref{li:ptt-bu} constructs a WPO for $G$.
  With \liftKw set to \textit{true} in \algbu{}, all scheduling
  predecessors outside of a component are visited before the head of the
  component. The algorithm then visits the vertices in topological order
  according to~$\forwardarrow$. Thus, the
  additions to~$\forwardarrow$ on \pref{li:ptt-ext} do not violate
  \pref{it:Hpartial} and lead to a total order $(V \cup X, \forwardarrowstar)$.
  Furthermore, because a stack is used as the worklist and because of
  \pref{it:Hexit}, once a head of a component is visited, no element outside the
  component is visited until all elements in the component are visited.
  Therefore, the additions to~$\forwardarrow$ on \pref{li:ptt-ext} preserve the
  components and their nesting relationship, satisfying \pref{it:Hnest}. Because
  the exit is the last element visited in the component, no \forward{} is added
  from inside of the component to outside, satisfying \pref{it:Hexit}.
  Thus, $\ptt{}(G, D)$ constructs a WTO for $G$. 
  \end{proof}

\begin{example}
  For graph $G_1$ in \pref{fig:cfg1}, $\ptt{}(G_1, D)$ returns the following WTO:
  \vspace{-1ex}
  \begin{center}
  \begin{tikzpicture}[auto,node distance=.8cm,]
  \tikzstyle{every node} = [circle, draw, inner sep=0pt,minimum size=2.5ex]
  \node (1) {1};
  \node [right of=1] (2) {$2$};
  \node [right of=2] (3) {$3$};
  \node [right of=3] (4) {$4$};
  \node [right of=4] (3') {$x_{3}$};
  \node [right of=3'] (6) {$6$};
  \node [right of=6] (7) {$7$};
  \node [right of=7] (9) {$9$};
  \node [right of=9] (8) {$8$};
  \node [right of=8] (6') {$x_{6}$};
  \node [right of=6'] (5) {$5$};
  \node [right of=5] (2') {$x_{2}$};
  \node [right of=2'] (10) {$10$};
  \path (1) edge[forward] (2); 
  \path (2) edge[forward] (3); 
  \path (3) edge[forward] (4); 
  \path (4) edge[forward] (3'); 
  \path (3') edge[forward, bend left=20] (5); 

  \path (2) edge[forward, bend right=20] (6); 
  \path (6) edge[forward] (7);
  \path (6) edge[forward, bend right=23] (9);
  \path (7) edge[forward, bend left=23] (8);
  \path (9) edge[forward] (8);
  \path (8) edge[forward] (6');
  \path (6') edge[forward] (5);

  \path (5) edge[forward] (2');
  \path (2') edge[forward] (10);
  \path (6') edge[backward, bend left=30] (6);
  \path (3') edge[backward, bend right=25] (3);
  \path (2') edge[backward, bend right=18] (2);

  \path (3') edge[forward] (6);
  \path (7) edge[forward] (9);
  \end{tikzpicture}
  \end{center}
  \vspace{-1.5ex}
  \ptt{} first constructs the WPO $\wpo_1$ in \pref{fig:wpo1} using \algbu{}.
  The partial order $\forwardarrowstar$ of $\wpo_1$ is extended 
  to a total order by adding $x_3 \forwardarrow 6$ and $7 \forwardarrow 9$.
  The components and their nesting relationship in $\wpo_1$ are preserved in the 
  constructed WTO.
  This WTO is equivalent to $1~(2~(3~4)~(6~7~9~8)~5)~10$ in Bourdoncle's representation
  (see \pref{def:BourdoncleWTO} and \pref{ex:BourdoncleWTO}).
  \qef
\end{example}

\begin{theorem}
  \label{thm:AlmostLinearWTO}
  Running time of \ptt{}$(G, D)$ is almost-linear.
\end{theorem}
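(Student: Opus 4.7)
The plan is to reduce the time bound for $\ptt{}$ to the already-established almost-linear bound for $\algbu{}$ (\pref{thm:AlmostLinearBottomUpWPO}), by showing that the remaining work done in $\ptt{}$ after the WPO is constructed is only linear in the size of the WPO, which in turn is linear in the size of the input graph $G$.

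First, I would observe that $\ptt{}(G, D)$ consists of two phases: (i) the call to $\algbu{}(G, D, \textit{true})$ on \pref{li:ptt-bu}, which by \pref{thm:AlmostLinearBottomUpWPO} takes almost-linear time in $|V| + |{\cfgarrow}|$, and (ii) the subsequent stack-based traversal that adds edges to $\forwardarrow$ on \pref{li:ptt-ext} to extend $\forwardarrowstar$ to a total order. For (i), we additionally need that the WPO it produces has size $O(|V| + |{\cfgarrow}|)$: the number of exits $|X|$ is bounded by the number of components, which is bounded by $|V|$, and the number of $\forwardarrow$ and $\backwardarrow$ edges produced by \algbu{} is linear in $|{\cfgarrow}|$ (each original graph edge contributes $O(1)$ scheduling constraints, and each exit contributes one backward edge and a bounded number of forward edges summed across its component boundary).

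Next I would bound phase (ii). The stack-based traversal visits each element of $V \cup X$ exactly once (using the standard pushed-when-ready invariant enforced by the $\liftKw$-induced constraints and condition \ref{it:Hexit}), and for each visited element it examines its outgoing scheduling constraints to decide which successors become ``ready'' next, doing $O(1)$ work per inspection. Therefore this phase runs in time $O(|V \cup X| + |{\forwardarrow}|)$, which by the size bound above is $O(|V| + |{\cfgarrow}|)$. The edges added on \pref{li:ptt-ext} are chords introduced between consecutively visited elements (or between a newly popped element and the current stack top), so the total number of such additions is at most the number of visited elements, which is linear.

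Combining, the total running time of $\ptt{}(G, D)$ is the almost-linear time of $\algbu{}$ plus a linear term, which is almost-linear overall. The main obstacle is the size argument for the intermediate WPO: one must verify that turning on $\liftKw = \textit{true}$ in \algbu{} does not cause a super-linear blow-up in $|{\forwardarrow}|$, but inspection of the $\liftKw$ branch shows that for each original non-back edge $(u, v) \in (\cfgarrow \setminus B)$ it adds at most one extra forward edge $(\XKw[\repKw(u)], v)$, preserving linearity and thus preserving the almost-linear time bound.
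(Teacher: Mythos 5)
Your proposal is correct and follows essentially the same route as the paper's proof: charge the almost-linear cost to the call to \algbu{} via \pref{thm:AlmostLinearBottomUpWPO}, and observe that the subsequent stack-based traversal visits each element and each \forward{} only once, hence is linear. The extra care you take in bounding the size of the intermediate WPO (including the effect of \liftKw) is a detail the paper leaves implicit, but it does not change the argument.
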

\begin{proof}
  The call to \algbu{} in \ptt{} takes almost-linear time, by
  \pref{thm:AlmostLinearBottomUpWPO}. After this call the elements and \forward{}s are
  visited only once. Thus, the running time of \ptt{}  is almost-linear. 
\end{proof}

\citet{bourdoncle1993efficient} presents a more efficient version of the
\algwto{} for WTO construction; however, it has worst-case cubic time
complexity. Thus, \pref{thm:AlmostLinearWTO} improves upon the previously known
algorithm for WTO construction.

\begin{figure}[t]
  \begin{minipage}[t]{0.96\textwidth}
\SetInd{.3em}{0.5em}
\begin{algorithm}[H]
  \small
  \SetKw{returnKw}{return}
  \SetKw{stKw}{s.t.}
  \SetKw{thereKw}{there}
  \SetKw{existsKw}{exists}
  \SetKw{andKw}{and}
  \DontPrintSemicolon
  \setlength{\columnsep}{15pt}
  \Indmm
  \KwIn{Directed graph $G(V, \cfgarrow)$, Depth-first forest $D$}
  \KwOut{WTO $\wpo(V, X, \forwardarrow, \backwardarrow)$}
  \Indpp
  \begin{multicols}{2}
  $V, X, \forwardarrow, \backwardarrow \coloneqq \algbu(G, D, \textit{true})$\label{li:ptt-bu}\;
  $stack \coloneqq \emptyset$\;
  \ForEach{$v \in V \cup X$}{
    $count[v] \coloneqq 0$\label{li:ptt-init}\;
    \If{$G.predecessors(v) = \emptyset$}{
      $stack.push(v)$
    }
  }
  $prev \coloneqq \text{nil}$\;
  \While{$stack \neq \emptyset$}{
    $v \coloneqq stack.pop()$\;
    \lIf{$prev \neq \text{nil}$}{
      $\forwardarrow \coloneqq \forwardarrow \cup \{(prev, v)\}$\label{li:ptt-ext}
    }
    $prev \coloneqq v$\;
    \ForEach{$v \forwardarrow w$}{
      $count[w] \coloneqq count[w] + 1$\;
      \lIf{$count[w] = |\wpo.predecessors(w)|$}{\label{li:ptt-top}
        $stack.push(w)$
      }
    }
  }
  \returnKw{$(V, X, \forwardarrow, \backwardarrow)$}
  \end{multicols}
  \caption{\ptt{}$(G, D)$}
  \label{alg:ptt}
  \end{algorithm}
\SetInd{.5em}{1em}  
  \end{minipage}
\vspace{-1ex}
\end{figure}

The next theorem shows that $\ptt{}$ outputs the same WTO as
\algwto{}.
\begin{theorem}
  Given a directed graph $G$ and its depth-first forest $D$, $\ptt{}(G, D)$ and $\algwto{}(G, D)$ construct the same WTO for $G$.
\end{theorem}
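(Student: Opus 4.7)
The plan is to prove this by structural induction on the SCC hierarchy of $G$, separating the argument into (i) agreement on components, heads and nesting, and (ii) agreement on the linearisation within that skeleton.

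First I would observe that the two algorithms agree on the \emph{structural skeleton} of the WTO. The call $\algbu(G, D, \textit{true})$ differs from $\algbu(G, D, \textit{false})$ only by the extra \liftKw-guarded \forward{}s added on the starred lines of \algbu{}: the sets $X$ and $\backwardarrow$ it produces are independent of \liftKw. By \pref{thm:same}, $\algbu(G, D, \textit{false})$ and $\algtd(G, D)$ produce the same WPO, so \ptt{} and \algtd{} agree on $X$, $\backwardarrow$, and hence (via \pref{def:WPO}) on the set of components and on the nesting forest $(\components_\wpo, \subseteq)$. Since the paper has already noted that $\algwto(G, D)$ and $\algtd(G, D)$ construct the same components with the same heads and nesting, \ptt{} and \algwto{} agree on this skeleton.

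Second, I would show that the two linearisations coincide by induction on the SCC hierarchy of $G$. The base cases are immediate: a trivial SCC yields the one-element WTO from \pref{li:wto-twpo}, and a singleton SCC with a self-loop yields the pair $\etch, x_\etch$ from \pref{li:wto-slwpo}; in both situations \ptt{} pops exactly those elements in exactly that order. For the inductive step with a single maximal non-trivial SCC, \algwto{} outputs $\etch$ followed recursively by the WTO of the graph with $\etch$ removed (\pref{li:wto-rec2}); with \liftKw = \textit{true} the head $\etch$ is the unique source of the WPO restricted to that SCC, so \ptt{} pops $\etch$ first and then, by \pref{it:Hexit} and the stack discipline, processes the rest of the component in an order that matches running $\ptt{}$ on the smaller graph, which equals $\algwto{}$'s recursive result by the induction hypothesis. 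For the inductive step with multiple maximal SCCs, \algwto{} merges the sub-WTOs in increasing post-DFN of the heads (so the SCC whose head has largest post-DFN is placed first); the lifted \forward{}s added by \algbu{} ensure every outside predecessor of a component's head is visited before the head, so once \ptt{} begins a maximal SCC it drains that SCC entirely before touching a sibling, matching the atomic concatenation performed by \algwto{}. The order among siblings then follows by showing the stack, seeded with sources in the DFN-derived push order, pops them in the same order as \algwto{}'s post-DFN traversal.

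The main obstacle I expect is the last point: rigorously justifying that the LIFO traversal of \ptt{} schedules sibling maximal SCCs in the exact order in which \algwto{} merges them by post-DFN. This requires an invariant linking the contents of $stack$ to a DFS walk of the DFF $D$, together with the observation that both algorithms are parameterised by the same $D$, so that any tie-breaking is fixed. Once the invariant is in hand, the two orderings coincide deterministically and the equality of the two WTOs follows from the skeleton agreement plus the linearisation agreement.
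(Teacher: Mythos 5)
Your part (i) — agreement on components, heads, and nesting via \pref{thm:same} and the observation that the \liftKw{} flag does not change $X$ or $\backwardarrow$ — is exactly the content of the paper's own proof, which is only two sentences long and stops there. Your part (ii), the induction showing that the two \emph{linearisations} coincide, is work the paper does not do at all, and it is the part that actually matters for the theorem as stated, since a WTO is a specific total order and not just a component skeleton. So your decomposition is sound and strictly more honest than the paper's argument.

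The problem is that the step you yourself flag as "the main obstacle" is a genuine, unfilled gap, and it cannot be closed from the algorithms as written. \algwto{} concatenates sibling maximal SCCs in decreasing post-DFN of their heads (increasing post-DFN with prepending), which is a \emph{fixed} order determined by $D$. \ptt{} (\pref{alg:ptt}), by contrast, pops ready elements from a LIFO stack, and the resulting order among incomparable siblings depends on (a) the iteration order over $V \cup X$ when seeding the stack with sources on \pref{li:ptt-init}, and (b) the iteration order over successors $v \forwardarrow w$ when pushing newly ready elements. Neither order is specified in \pref{alg:ptt}, so the invariant you would need ("the stack realises a reverse post-DFN walk of $D$") is not derivable; one can easily pick iteration orders for which two sibling SCCs with no scheduling constraint between them are emitted in either order, both yielding valid WTOs in the sense of \pref{def:wto} but different strings. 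To make your induction go through you must either (1) fix those iteration orders (e.g., push successors in decreasing DFN so that the smallest-DFN ready element is popped first, and argue this reproduces reverse post-DFN among siblings), or (2) weaken the claim to "constructs \emph{a} WTO with the same components, heads, and nesting as \algwto{}(G,D)" — which is what the paper's proof actually establishes. As written, your proposal correctly identifies where the proof must be completed but does not complete it.
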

\begin{proof}
  We shown above, $\algwto{}(G, D)$ and $\algtd{}(G, D)$ construct the same
components with same heads and nesting relationship. Therefore, using
\pref{thm:same}, we can conclude that $\ptt{}(G, D)$ and $\algwto{}(G, D)$
construct the same WTO.
\end{proof}

The next theorem shows that our concurrent fixpoint algorithm in
\pref{fig:RulesWPO} computes the same fixpoint as Bourdoncle's sequential
fixpoint algorithm.

\begin{theorem}
  \label{thm:precision}
  The fixpoint computed by the concurrent fixpoint algorithm in \pref{fig:RulesWPO}
  using the WPO constructed by \pref{alg:bu} is the same as the one computed by
  the sequential Bourdoncle's algorithm that uses the recursive iteration strategy.
\end{theorem}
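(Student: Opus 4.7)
The plan is to combine the determinism of the concurrent fixpoint algorithm (\pref{thm:deter}) with the observation that Bourdoncle's recursive iteration strategy corresponds to a specific valid sequential schedule of the concurrent algorithm when the WPO is linearly extended into a WTO.

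First I would invoke \pref{def:wto} and the discussion in \pref{sec:wto}: a WTO is a WPO whose \forwardarrow{} relation has a total reflexive-transitive closure, and the WTO constructed by \ptt{}$(G, D)$ is a linear extension of the WPO constructed by $\algbu{}(G, D, \textit{false})$ that preserves the components, heads, exits, and \backwardarrow{} relation. In particular, by \pref{thm:widening}, the set of widening points, namely the image of $\backwardarrow$ and equivalently the set of component heads, is the same for both structures, and the nesting relation on components coincides.

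Next I would show that running the rules of \pref{fig:RulesWPO} sequentially, always firing the enabled element of smallest position in this WTO, constitutes a valid schedule of the concurrent algorithm. Under such a schedule, non-head elements are visited once in topological order; when a component head $h$ is reached, the elements of its component are iterated repeatedly until the stabilization test at its exit succeeds, which by \pref{thm:convergence} is equivalent to detecting stabilization at $h$ itself. This is precisely Bourdoncle's ``iterate until stabilization'' operator $[\cdots]^*$ applied with widening at the head. Nested components are handled by the same mechanism since \pref{it:Hnest} guarantees well-nesting and $\resetcd{}$ in rule \ENCRule{} restarts iteration of an enclosing component in the same manner as the recursive call in Bourdoncle's strategy. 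Hence the schedule produces exactly the same sequence of functional updates, widenings, and stabilization checks as Bourdoncle's recursive iteration strategy on the corresponding WTO.

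Finally, applying \pref{thm:deter} twice, the fixpoint that the concurrent algorithm computes on the WPO equals the value produced by the WTO-induced sequential schedule, which by the preceding paragraph is the fixpoint computed by Bourdoncle's sequential algorithm. The main obstacle is the second step: making the correspondence between the rule-based execution under a WTO schedule and Bourdoncle's recursive control structure precise, in particular showing that stabilization is checked at the same moments and that after a failed stabilization at a nested exit the enclosing component resumes iteration consistently with Bourdoncle's recursion. This is essentially a bookkeeping argument, but it must be carried out carefully to match the two iteration traces step by step.
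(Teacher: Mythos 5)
Your proposal is correct and follows essentially the same route as the paper's proof: identify the WTO produced by \ptt{} as a linear extension of the WPO from \algbu{} with the same components, heads, and widening points; observe that the concurrent rules executed in WTO order reproduce Bourdoncle's recursive ``iterate until stabilization'' strategy; and invoke \pref{thm:deter} to transfer the result to every schedule of the WPO-based algorithm. If anything, your phrasing --- treating the WTO order as one particular valid schedule of the WPO algorithm rather than as a separate structure --- applies \pref{thm:deter} slightly more directly than the paper does, and like the paper you leave the trace-by-trace correspondence with Bourdoncle's recursion as an informal bookkeeping step.
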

\begin{proof}
  With both \backward{}, $\backwardarrow$, and matching parentheses, $(\dots)$,
interpreted as the ``iteration until stabilization'' operator, our concurrent
iteration strategy for $\ptt{}(G, D)$ computes the same fixpoint as Bourdoncle's
recursive iteration strategy for $\algwto{}(G, D)$. The only change we make to a
WPO in \ptt{} is adding more \forward{}s. 
Further, \pref{thm:deter} proved that our concurrent iteration strategy is deterministic.
Thus, our concurrent iteration
strategy computes the same fixpoint
when using the WPO constructed by either $\ptt{}(G, D)$ or $\algbu{}(G, D,
\textit{false})$. Therefore, our concurrent fixpoint algorithm in
\pref{fig:RulesWPO} computes the same fixpoint as Bourdoncle's sequential
fixpoint algorithm.
\end{proof}

\section{Implementation}
\label{sec:implementation}

Our deterministic parallel abstract interpreter, which we called $\pikosname$,
was built using $\ikos$ \cite{ikos2014}, an abstract-interpretation framework
for C/C++ based on LLVM.

\subsubsubsection{Sequential baseline $\ikos$. }
$\ikos$ performs interprocedural analysis to compute invariants for all programs
points, and can detect and prove the absence of runtime errors in programs. To
compute the fixpoint for a function, $\ikos$ constructs the WTO of the CFG of
the function and uses Bourdoncle's recursive iteration
strategy~\cite{bourdoncle1993efficient}. Context sensitivity during
interprocedural analysis is achieved by performing dynamic inlining during
fixpoint: formal and actual parameters are matched, the callee is analyzed, and
the return value at the call site is updated after the callee returns. This
inlining also supports function pointers by resolving the set of possible
callees and joining the results.

\subsubsubsection{$\pikosname$.}
We modified $\ikos$ to implement our deterministic parallel abstract interpreter
using Intel's Threading Building Blocks (TBB) library \cite{TBB}. We
implemented the almost-linear time algorithm for WPO construction
(\pref{sec:Algorithm}).
We implemented the
deterministic parallel fixpoint iterator (\pref{sec:fixpoint}) using TBB's
\texttt{parallel\_do}. Multiple callees at an indirect call site are analyzed in
parallel using TBB's \texttt{parallel\_reduce}. We refer to this extension of
$\ikos$ as $\pikosname$; we use \pikos{k} to refer to the instantiation of
$\pikosname$ that uses up to $k$ threads.

\subsubsubsection{Path-based task spawning in $\pikosname$.} 
$\pikosname$ relies on TBB's \texttt{tasks} to implement the parallel fixpoint
iterator. Our initial implementation would spawn a task for each WPO element when it is
ready to be scheduled. 
Such a naive approach resulted $\pikosname$ being slower than $\ikos$; 
there were 10 benchmarks where speedup of \pikos{2} was below 0.90x compared to $\ikos$, 
with a minimum speedup of 0.74x.
To counter such behavior, we implemented a simple path-based heuristic for
spawning tasks during fixpoint computation. We assign ids to each element in the
WPO $W$ as follows: assign id 1 to the elements along the longest path in $W$,
remove these elements from $W$ and assign id 2 to the elements along the longest path
in the resulting graph, and so on. The length of the path is based on the number
of instructions as well as the size of the functions called along the
path. During the fixpoint computation, a new task is spawned only if the id of
the current element differs from that of the successor that is ready to be
scheduled. Consequently, elements along critical paths are executed in the same
task.

\subsubsubsection{Memory allocator for concurrency.}
We experimented with three memory allocators optimized for parallelism:
Tcmalloc,\footnote{\url{https://gperftools.github.io/gperftools/tcmalloc.html}}
Jemalloc,\footnote{\url{http://jemalloc.net/}}
and Tbbmalloc \cite{TBB}. Tcmalloc
was chosen because it performed the best in our settings for both $\pikosname$
and \ikos.

\subsubsubsection{Abstract domain.}
Our fixpoint computation algorithm is orthogonal to the abstract domain in use.
\pikos{k} works for all abstract domains provided by $\ikos$ as long as the
domain was thread-safe. These abstract domains include interval~\cite{kn:CC77},
congruence~\cite{Granger:IJCM1989}, gauge \cite{venet2012gauge}, and
DBM~\cite{Mine:PADO2001}. Variable-packing domains~\cite{GangeNSSS16:VMCAI2016}
could not be used because their implementations were not thread-safe. We intend
to explore thread-safe implementations for these domains in the future. 

\section{Experimental Evaluation}
\label{sec:evaluation}

In this section, we study the runtime performance of $\pikosname$
(\pref{sec:implementation}) on a large set of C programs using  \ikos{} as the
baseline. 
The experiments were designed to answer the following questions:
\begin{itemize}%
  \item[\RQ{0}] \textbf{[Determinism]} Is $\pikosname$ deterministic?
    Is the fixpoint computed by $\pikosname$ the same as that computed by $\ikos$?
  \item[\RQ{1}] \textbf{[Performance]} How does the performance of \pikos{4} compare to that of \ikos?
  \item[\RQ{2}] \textbf{[Scalability]} How does the performance of \pikos{k} scale
  as we increase the number of threads~$k$?
\end{itemize}

\subsubsubsection{Platform.}
All experiments were run on Amazon EC2 C5, which
use 3.00 GHz Intel Xeon Platinum 8124M CPUs.
\ikos{} and \pikos{k} with $1 \leq k \leq 4$
were run on c5.2xlarge (8 vCPUs, 4 physical cores, 16GB memory),
\pikos{k} with $5 \leq k \leq 8$ on c5.4xlarge (16 vCPUs, 8 physical cores, 32GB memory), and
\pikos{k} with $9 \leq k$ on c5.9xlarge (36 vCPUs, 18 physical cores, 72GB memory).
Dedicated EC2 instances and BenchExec \cite{Beyer2019} were used to improve
reliability of timing results. The Linux kernel version was 4.4, and gcc 8.1.0 was
used to compile \pikos{k} and \ikos.

\subsubsubsection{Abstract Domain.}
We experimented with both interval and gauge domain, and the analysis precision
was set to track immediate values, pointers, and memory.
The results were similar for both interval and gauge domain.
We show the results using the interval domain.
Because we are only concerned with the time taken to perform fixpoint
computation, we disabled program checks, such as buffer-overflow detection, in
both $\ikos$ and $\pikosname$.

\subsubsubsection{Benchmarks.}
We chose \totalboth{} benchmarks from the following two sources:
\begin{itemize}
  \item [\textbf{SVC}]
    We selected all \totalsvc{} benchmarks from the Linux, control-flows, and
    loops categories of SV-COMP 2019 \cite{svcomp}. These categories are well
    suited for numerical analysis, and have been used in recent
    work~\cite{SinghPV:CAV2018,SinghPV:POPL2018}. Programs from these categories
    have indirect function calls with multiple callees at a single call site,
    large switch statements, nested loops, and irreducible CFGs. 
    
  \item [\textbf{OSS}] 
    We selected all \totaloss{} programs from the Arch Linux core packages that
    are primarily written in C and whose LLVM bitcode are obtainable by gllvm.\footnote{\url{https://github.com/SRI-CSL/gllvm}}
    These include, but are not limited to, \texttt{apache}, \texttt{coreutils},
    \texttt{cscope}, \texttt{curl}, \texttt{dhcp}, \texttt{fvwm}, \texttt{gawk},
    \texttt{genius}, \texttt{ghostscript}, \texttt{gnupg}, %
    \texttt{iproute}, \texttt{ncurses}, \texttt{nmap}, \texttt{openssh},
    \texttt{postfix}, \texttt{r}, \texttt{socat}, %
    \texttt{vim}, \texttt{wget}, etc. 
\end{itemize}

We checked that the time taken by $\ikos$ and \pikos{1} was the same; thus, any
speedup achieved by \pikos{k} is due to parallelism in the fixpoint computation.
Note that the time taken for WTO and WPO construction is very small compared to
actual fixpoint computation, which is why \pikos{1} does not outperform \ikos.
The almost-linear algorithm for WPO construction (\pref{sec:AlgorithmBU}) is an
interesting theoretical result, which shows a new connection between
the algorithms of Bourdoncle and Ramalingam. However, the practical impact of
the new algorithm is in preventing stack overflow in the analyzer that occurs
when using a recursive implementation of Bourdoncle's WTO construction
algorithm; see the GitHub issues linked in the footnotes in \pref{sec:introduction}.

There were \totalverylong{} benchmarks for which $\ikos$ took longer than \timelimit{}.
To include these benchmarks, we made the following modification to the dynamic function inliner, which implements
the context sensitivity in interprocedural analysis in both \ikos{} and \pikosname:
if the call depth during the dynamic inlining exceeds the given limit, the
analysis returns $\top$ for that callee.
For each of the \totalverylong{} benchmarks, we determined the largest limit 
for which $\ikos$ terminated within \timelimit.
Because our experiments are designed to understand the performance improvement 
in fixpoint computation, we felt this was a reasonable thing to do. 

\begin{figure}[t]
  \centering
  \includegraphics[height=5cm]{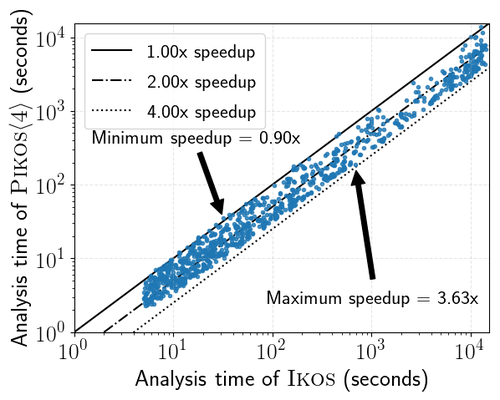}
  \vspace{-1.3ex}
  \caption{Log-log scatter plot of analysis time taken by \ikos{} and \pikos{4}
  on \totalbench{} benchmarks. Speedup is defined as the analysis time of
  \ikos{} divided by analysis time of \pikos{4}. 1.00x, 2.00x, and 4.00x speedup
  lines are shown. Benchmarks that took longer to analyze in $\ikos$ tended to
  have higher speedup.}
  \label{fig:scatplot}
  \vspace{-1ex}
\end{figure}

\begin{table}[t]
  \caption{A sample of the \totalbench{} results in \pref{fig:scatplot}. 
  The first 5 rows list benchmarks with the highest speedup, 
  and the remaining 5 rows list benchmarks with the longest analysis time in $\ikos$.}
  \vspace{-1ex}
  \label{tab:rq1table}
\let\center\empty
\let\endcenter\relax
\centering
\resizebox{0.9\width}{!}{\begin{tabular}{@{}lcp{2cm}p{2cm}p{2cm}@{}}
\toprule
  Benchmark  & Src. &  \ikos{} (s) & \pikos{4} (s) &  Speedup \\
\midrule
        audit-2.8.4/aureport     &       OSS &        684.29 &        188.25 &         3.63x \\
               feh-3.1.3/feh     &       OSS &       9004.83 &       2534.91 &         3.55x \\
ldv-linux-4.2-rc1/43\_2a-crypto  &       SVC &      10051.39 &       2970.02 &         3.38x \\
   ratpoison-1.4.9/ratpoison     &       OSS &       1303.73 &        387.70 &         3.36x \\
ldv-linux-4.2-rc1/08\_1a-gpu-amd &       SVC &       2002.80 &        602.06 &         3.33x \\
\midrule
                     fvwm-2.6.8/FvwmForm &       OSS &      14368.70 &       6913.47 &         2.08x \\
            ldv-linux-4.2-rc1/32\_7a-ata &       SVC &      14138.04 &       7874.58 &         1.80x \\
            ldv-linux-4.2-rc1/43\_2a-ata &       SVC &      14048.39 &       7925.82 &         1.77x \\
   ldv-linux-4.2-rc1/43\_2a-scsi-mpt3sas &       SVC &      14035.69 &       4322.59 &         3.25x \\
ldv-linux-4.2-rc1/08\_1a-staging-rts5208 &       SVC &      13540.72 &       7147.69 &         1.89x \\
\bottomrule
\end{tabular}
}
\vspace{-1ex}
\end{table}
  
\subsection{RQ0: Determinism of $\pikosname$}

As a sanity check for our theoretical results, we experimentally validated \pref{thm:deter} 
by running $\pikosname$ multiple times with varying number of threads,
and checked that the final fixpoint was always the same.
Furthermore, we experimentally validated \pref{thm:precision} by comparing the
fixpoints computed by $\pikosname$ and $\ikos$.

\begin{figure}
  \centering
  \begin{subfigure}[b]{0.45\textwidth}
    \centering
    \includegraphics[width=\textwidth]{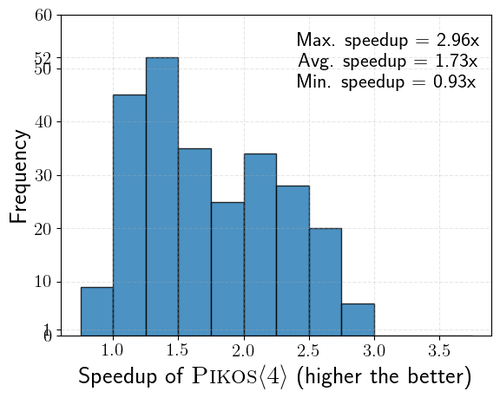}
    \caption[]%
    {{\small 0\% \textasciitilde{} 25\% (5.02 seconds \textasciitilde{} 16.01 seconds)}}
    \label{fig:bottom}
  \end{subfigure}
  \begin{subfigure}[b]{0.45\textwidth}  
    \centering 
    \includegraphics[width=\textwidth]{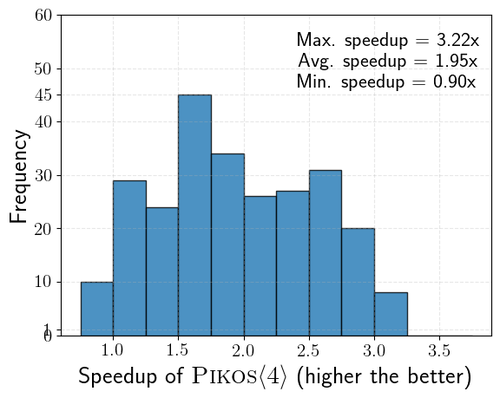}
    \caption[]%
    {{\small 25\% \textasciitilde{} 50\% (16.04 seconds \textasciitilde{} 60.45 seconds)}}
  \end{subfigure}
  \begin{subfigure}[b]{0.45\textwidth}
    \centering
    \includegraphics[width=\textwidth]{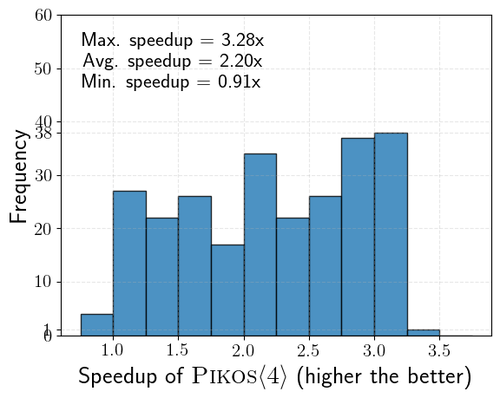}
    \caption[]%
    {{\small 50\% \textasciitilde{} 75\% (60.85 seconds  \textasciitilde{} 508.14 seconds)}}
  \end{subfigure}
  \begin{subfigure}[b]{0.45\textwidth}   
    \centering 
    \includegraphics[width=\textwidth]{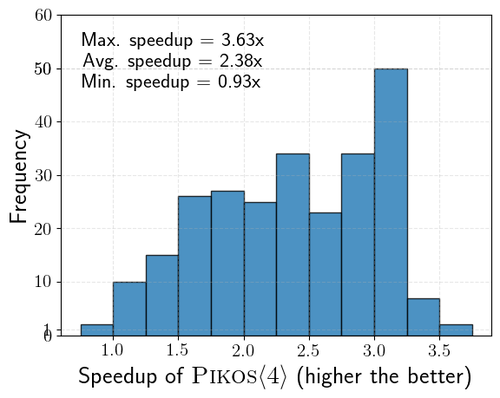}
    \caption[]%
    {{\small 75\% \textasciitilde{} 100\% (508.50 seconds \textasciitilde{} 14368.70 seconds)}}
  \end{subfigure}
  \caption{Histograms of speedup of \pikos{4} for different ranges.
  \pref{fig:bottom} shows the distribution of benchmarks that took from 5.02 seconds
  to 16.01 seconds in \ikos{}. They are the bottom 25\% in terms of the analysis time in \ikos{}.
  The distribution tended toward a higher speedup in the upper range.}
  \label{fig:histplot}
\end{figure}

\subsection{RQ1: Performance of \pikos{4} compared to \ikos}

We exclude results for benchmarks for which $\ikos$ took less than 5 seconds.
For these benchmarks, the average analysis time of \ikos{} was 0.76~seconds. The
analysis time of $\ikos$ minus the analysis time of \pikos{4} ranged from +2.81
seconds (speedup in \pikos{4}) to -0.61 seconds (slowdown in \pikos{4}), with an
average of +0.16 seconds. Excluding these benchmarks left us with
\textbf{\totalbench{}} benchmarks, consisting of \textbf{\totalbenchsvc{} SVC}
and \textbf{\totalbenchoss{} OSS} benchmarks. 

\pref{fig:scatplot} shows a log-log scatter plot, comparing the analysis time of
\pikos{4} with that of $\ikos$ for each of the \totalbench{} benchmarks.
Speedup is defined as the analysis time of \ikos{} divided by
the analysis time of \pikos{4}. 
The maximum speedup was \maxspeed{}, which is close to the maximum speedup of 4.00x.
Arithmetic, geometric, and harmonic mean of the speedup were \ameanspeed{}, \gmeanspeed{}, \hmeanspeed{}, respectively.
Total speedup of all the benchmarks was \totalspeed{}.
As we see in \pref{fig:scatplot}, benchmarks 
for which $\ikos$ took longer to analyze
tended to have greater speedup in \pikos{4}.
Top 25\% benchmarks in terms of the analysis time in \ikos{} had higher averages
than the total benchmarks, with arithmetic, geometric, and harmonic mean of \ameanspeedl{},
\gmeanspeedl{}, and \hmeanspeedl{}, respectively.
\pref{tab:rq1table} shows the results for benchmarks with the highest speedup
and the longest analysis time in \ikos{}.

\pref{fig:histplot} provides details about the distribution of the speedup achieved by \pikos{4}.
Frequency on y-axis represents the number of benchmarks that
have speedups in the bucket on x-axis.
A bucket size of 0.25 is used, ranging from 0.75 to 3.75.
Benchmarks are divided into 4 ranges using the analysis time in \ikos{},
where 0\% represents the benchmark with the shortest analysis time in \ikos{}
and 100\% represents the longest.
The longer the analysis time was in \ikos{} (higher percentile), the more the distribution tended toward a higher speedup for \pikos{4}.
The most frequent bucket was \maxfreqai{} with frequency of
\maxfreqa{} for the range 0\% \textasciitilde{} 25\%.
For the range 25\% \textasciitilde{} 50\%, it was \maxfreqbi{} with frequency of
\maxfreqb{};
for the range 50\% \textasciitilde{} 75\%, \maxfreqci{} with frequency of \maxfreqc{};
and for the range 75\% \textasciitilde{} 100\%, \maxfreqdi{} with frequency of \maxfreqd{}.
Overall, \pikos{4} had speedup over 2.00x for \abovetwonum{} benchmarks out of \totalbench{} (\abovetwo{}), 
and had speedup over 3.00x for \abovethreenum{} out of \totalbench{} (\abovethree{}).

Benchmarks with high speedup contained code with large switch statements nested
inside loops. For example, \texttt{ratpoison-1.4.9/ratpoison}, a tiling window
manager, had an event handling loop that dispatches the events using the switch
statement with 15 cases. Each switch case called an event handler that contained
further branches, leading to more parallelism. Most of the analysis time for
this benchmark was spent in this loop.
On the other hand, benchmarks with low speedup usually had a dominant single
execution path. An example of such a benchmark is \texttt{xlockmore-5.56/xlock},
a program that locks the local X display until a password is entered.

\vspace{0.5em}
\begin{mdframed}[backgroundcolor=light-gray]
  Arithmetic, geometric, and harmonic mean of the speedup exhibited by \pikos{4} were 
  \ameanspeed{}, \gmeanspeed{}, \hmeanspeed{}, respectively.
  Maximum speedup exhibited was \maxspeed{},
  where 4.00x is the maximum possible speedup.
  The performance was generally better for the benchmarks for which $\ikos$ took longer to analyze.
\end{mdframed}

\subsection{RQ2: Scalability of $\pikosname$}

To understand how the performance of \pikos{k} scales with the number of threads $k$,
we carried out the same measurements as in \RQ{1} using \pikos{k} with $k \in \{2,4,6,8\}$.

\begin{figure}[t]
  \centering
    \begin{subfigure}[t]{0.48\textwidth}
      \centering
      \includegraphics[height=4.5cm]{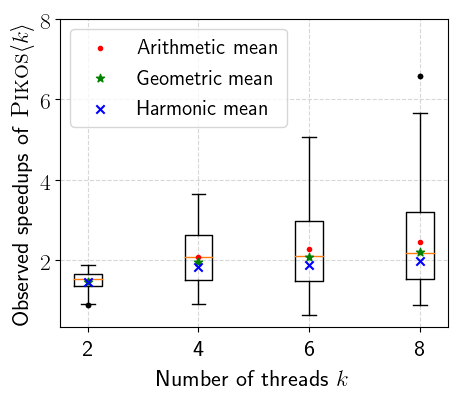}
      \caption{Box plot}
      \label{fig:rq2box}
    \end{subfigure}
    \hfill
    \begin{subfigure}[t]{0.48\textwidth}
      \centering
      \includegraphics[height=4.5cm]{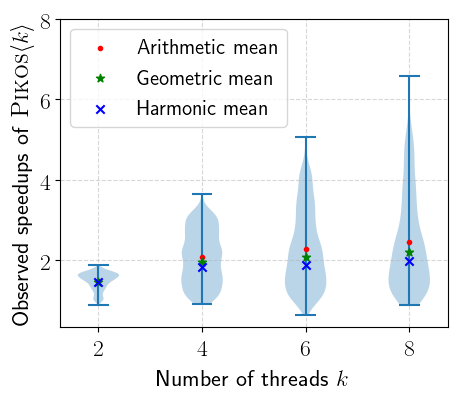}
      \caption{Violin plot}
      \label{fig:rq2violin}
    \end{subfigure}
    \caption{Box and violin plot for speedup of \pikos{k} with $k \in \{2,4,6,8\}$.}
    \label{fig:rq2plot}
\end{figure}

\pref{fig:rq2plot} shows the box and violin plots for speedup obtained by
\pikos{k}, $k \in \{2,4,6,8\}$. 
Box plots show the quartiles and the outliers,
and violin plots show the estimated distribution of the observed speedups.
The box plot \cite{Tukey1977ExploratoryDA} on the
left summarizes the distribution of the results for each $k$ using lower inner
fence ($Q1 - 1.5*(Q3-Q1)$), first quartile~($Q1$), median, third quartile~($Q3$), 
and upper inner fence ($Q3 + 1.5*(Q3-Q1)$). Data beyond the inner fences
(outliers) are plotted as individual points. Box plot revealed that while the
benchmarks above the median~(middle line in the box) scaled, speedups for the
ones below median saturated. Violin plot \cite{violin} on the right supplements
the box plot by plotting the probability density of the results between minimum
and maximum.
In the best case, speedup scaled from \bestscale{}.
For $k \in \{2,4,6,8\}$, the arithmetic means were \ascale{}, respectively;
the geometric means were \gscale{}, respectively; and 
the harmonic means were \hscale{}, respectively.

\begin{figure}
  \centering
    \begin{subfigure}[t]{0.48\textwidth}
      \centering
      \includegraphics[height=4.5cm]{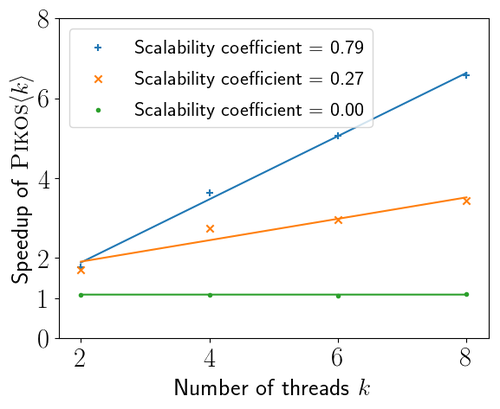}
      \caption{Speedup of \pikos{k} for 3 benchmarks with different scalability coefficients.
      The lines show the linear regressions of these benchmarks.}
      \label{fig:rq2coeff-sample}
    \end{subfigure}
    \hfill
    \begin{subfigure}[t]{0.48\textwidth}
      \centering
      \includegraphics[height=4.5cm]{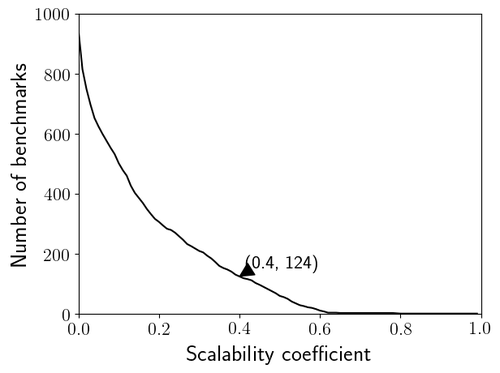}
      \caption{Distribution of scalability coefficients for \totalbench{} benchmarks.
        $(x, y)$ in the plot means that
        $y$ number of benchmarks have scalability coefficient at least $x$.
      }
      \label{fig:rq2coeff-distribution}
    \end{subfigure}
    \caption{Scalability coefficient for \pikos{k}.}
  \label{fig:rq2coeff}
\end{figure}

\begin{table}
  \caption{Five benchmarks with the highest scalability out of \totalbench{} benchmarks.}
  \label{tab:rq2table}
\let\center\empty
\let\endcenter\relax
\centering
\resizebox{0.9\width}{!}{\begin{tabular}{@{}lcrrrrr@{}}
\toprule
                                           &  &   &  \multicolumn{4}{c}{Speedup of \pikos{k}} \\
                                           \cmidrule{4-7}
                                          Benchmark (5 out of \totalbench. Criteria: Scalability) & Src. &  \ikos\ (s) &  $k=4$ &  $k=8$ &  $k=12$ &  $k=16$ \\
\midrule
                           audit-2.8.4/aureport &       OSS &        684.29 &         3.63x &         6.57x &         9.02x &        10.97x \\
                               feh-3.1.3/feh.bc &       OSS &       9004.83 &         3.55x &         6.57x &         8.33x &         9.39x \\
                      ratpoison-1.4.9/ratpoison &       OSS &       1303.73 &         3.36x &         5.65x &         5.69x &         5.85x \\
ldv-linux-4.2-rc1/32\_7a-net-ethernet-intel-igb &       SVC &       1206.27 &         3.10x &         5.44x &         5.71x &         6.46x \\
  ldv-linux-4.2-rc1/08\_1a-net-wireless-mwifiex &       SVC &      10224.21 &         3.12x &         5.35x &         6.20x &         6.64x \\
\bottomrule
\end{tabular}
}
\end{table}

To better measure the scalability of \pikos{k} for individual benchmarks, we
define a \emph{scalability coefficient} as the slope of the linear regression of
the number of threads and the speedups. The maximum scalability coefficient is
1, meaning that the speedup increases linearly with the number of threads. 
If the scalability coefficient is 0, the speedup is the same regardless of the number of
threads used. If it is negative, the speedup goes down with increase in number of
threads. The measured scalability coefficients are shown in \pref{fig:rq2coeff}.
\pref{fig:rq2coeff-sample} illustrates benchmarks exhibiting different
scalability coefficients. For the benchmark with coefficient \bcoeff{}, the speedup of $\pikosname$ roughly
increases by 4, from 2x to 6x, with 6 more threads. For benchmark with coefficient
0, the speedup does not increase with more threads.
\pref{fig:rq2coeff-distribution} shows the distribution of scalability
coefficients for all benchmarks. From this plot we can infer, for instance, that
\coefff{} benchmarks have at least 0.4 scalability coefficient. For these
benchmarks, speedups increased by at least 2 when 5 more threads are given.

\pref{tab:rq2table} shows the speedup of \pikos{k} for $k \geq 4$ for a
selection of five benchmarks that had the highest scalability coefficient in the
prior experiment. In particular, we wanted to explore the limits of scalability
of \pikos{k} for this smaller selection of benchmarks. With scalability
coefficient \bcoeff{}, the speedup of \texttt{audit-2.8.4/aureport} reached
\bspeed{} using 16 threads. This program is a tool that produces summary reports
of the audit system logs. Like \texttt{ratpoison}, it has an event-handler loop
consisting of a large switch statement as shown in \pref{fig:aureport}.

\begin{figure}[t]
  \centering
  \includegraphics[width=13cm]{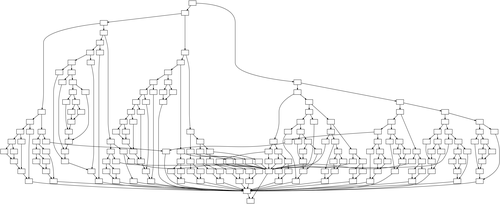}
  \caption{CFG of function \texttt{per\_event\_detailed} in \texttt{aureport} for which
  $\pikosname$ had the maximum scalability. 
  This function calls the appropriate handler based on the event type. This function is called
  inside a loop.}
  \label{fig:aureport}
\end{figure}

\vspace{0.5em}
\begin{mdframed}[backgroundcolor=light-gray]
  The scalability of $\pikosname$ depends on the structure of the analyzed programs.
  $\pikosname$ exhibited high scalability on programs with multiple disjoint paths of similar length.
  For the program \texttt{aureport}, which had the highest scalability, the speedup of \pikos{k} scaled from \bestscaleall{}
  with $k = $ 2, 4, 6, 8, 12, and 16.
\end{mdframed}

\section{Related work}
\label{sec:related}

Since its publication in 1993, Bourdoncle's
algorithm \cite{bourdoncle1993efficient} has become the de facto
approach to solving equations in abstract interpretation. Many
advances have been developed since, but they rely on Bourdoncle's
algorithm; in particular, different ways of intertwining widening and
narrowing during fixpoint computation with an aim to improve
precision \cite{halbwachsSAS2012,amatoSAS2013,amatoSCP2016}.

C Global Surveyor (CGS) \cite{PLDI:VB2004} that performed array bounds checking
was the first attempt at distributed abstract interpretation. It performed
distributed batch processing, and a relational database was used for both
storage and communication between processes. Thus, the communication costs were
too high, and the analysis did not scale beyond four CPUs.

\citet{monniaux2005parallel} describes a parallel
implementation of the ASTR{\'E}E analyzer \cite{cousot2005astree}.
It relies on \emph{dispatch points} that divide the control flow between
long executions, which are found in embedded applications; the
tool analyzes these two control paths in parallel.
Unlike our approach, this parallelization technique is not applicable to programs
with irreducible~CFGs.
The particular parallelization strategy can also lead to a loss in precision. The
experimental evaluation found that the analysis does not scale beyond 4 processors.

\citet{DBLP:conf/cgo/DeweyKH15} present a parallel static analysis for
JavaScript by dividing the analysis into an embarrassingly parallel reachability
computation on a state transition system, and a strategy for selectively merging
states during that reachability computation.

Prior work has explored the use of parallelism for specific program analysis.
BOLT \cite{albarghouthi2012parallelizing} uses a map-reduce framework to
parallelize a top-down analysis for use in verification and software model checking.
Graspan \cite{wangASPLOS2017} implements a single-machine disk-based
graph system to solve graph reachability problems for interprocedural
static analysis. Graspan is not a generic abstract interpreter, and
solves data-flow analyses in the IFDS framework \cite{RHSPOPL1995}.
\citet{su2014parallel} describe a parallel points-to analysis via CFL-reachability.
\citet{FSE:GZL2017} use an actor-model to implement
distributed call-graph analysis. 

\citet{mcpeak2013scalable} parallelize the Coverity
Static Analyzer \cite{CACM:BBC10} to run on an 8-core machine by
mapping each function to its own \emph{work unit}.  
Tricorder \cite{sadowski2015tricorder} is a cloud-based
static-analysis platform used at Google. It supports
only simple, intraprocedural analyses (such as code linters),
and is not designed for distributed whole-program analysis.

Sparse analysis \citep{DBLP:conf/pldi/OhHLLY12,DBLP:journals/toplas/OhHLLPKY14}
and database-backed analysis \citep{WeissRL:ICSE2015} are orthogonal approaches
that improve the memory cost of static analysis.
Newtonian program analysis
\cite{DBLP:journals/toplas/RepsTP17,DBLP:conf/netys/Reps18} provides an
alternative to Kleene iteration used in this paper.

\section{Conclusion}
\label{sec:conclusion}

We presented a generic, parallel, and deterministic algorithm for computing
a fixpoint of an equation system for abstract interpretation. The iteration
strategy used for fixpoint computation is constructed from a weak partial order
(WPO) of the dependency graph of the equation system. We described an axiomatic
and constructive characterization of WPOs, as well as an efficient almost-linear
time algorithm for constructing a WPO. This new notion of WPO generalizes
Bourdoncle's weak topological order (WTO). We presented a linear-time algorithm
to construct a WTO from a WPO, which results in an almost-linear algorithm for
WTO construction given a directed graph. The previously known algorithm for WTO
construction had a worst-case cubic time-complexity. We also showed that the
fixpoint computed using the WPO-based parallel fixpoint algorithm is the same as
that computed using the WTO-based sequential fixpoint algorithm. 

We presented $\pikosname$, our implementation of a WPO-based parallel abstract
interpreter. Using a suite of $\totalbench$ open-source programs and
SV-COMP~2019 benchmarks, we compared the performance of $\pikosname$ against
the $\ikos$ abstract interpreter.
\pikos{4} achieves an average speedup of \ameanspeed{}
over $\ikos$, with a maximum speedup of \maxspeed. 
\pikos{4} showed greater than 2.00x speedup for \abovetwonum{} benchmarks (\abovetwo{})
and greater than 3.00x speedup for \abovethreenum{} benchmarks (\abovethree{}).
\pikos{4} exhibits a larger
speedup when analyzing programs that took longer to analyze using $\ikos$.
\pikos{4} achieved an average speedup of 1.73x on programs for which $\ikos$
took less than 16~seconds, while \pikos{4} achieved an average speedup of
2.38x on programs for which $\ikos$ took greater than 508~seconds. The
scalability of $\pikosname$ depends on the structure of the program being
analyzed with \pikos{16} exhibiting a maximum speedup of \bspeed.

\begin{acks}                            %
The authors would like to thank Maxime Arthaud for help with $\ikos$.
This material is based upon work supported by 
a Facebook Testing and Verification research award, and AWS Cloud Credits for Research.

\end{acks}

\bibliography{main}

\end{document}